\crefname{ineq}{inequality}{inequalities}
\crefname{fact}{fact}{facts}
\crefname{equation}{equation}{equations}
\crefname{algorithm}{algorithm}{algorithm}
\crefname{remark}{remark}{remarks}
\crefname{conjecture}{conjecture}{conjectures}
\renewcommand{\subparagraph}{\paragraph}
\declaretheorem[style=plain,numberwithin=section]{theorem,lemma,proposition,corollary}
\declaretheorem[style=remark,numberwithin=section]{remark}
\declaretheorem[style=plain,numberwithin=section]{definition,conjecture}
\crefname{ineq}{inequality}{inequalities}
\crefname{fact}{fact}{facts}
\crefname{equation}{equation}{equations}
\crefname{algorithm}{algorithm}{algorithm}
\crefname{remark}{remark}{remarks}
\crefname{conjecture}{conjecture}{conjectures}
\newcommand{\PTIME}{\mathsf{P}}
\newcommand{\SZK}{\mathsf{SZK}}
\newcommand{\BPP}{\mathsf{BPP}}
\newcommand{\NP}{\mathsf{NP}}
\newcommand{\coNP}{\mathsf{co\text{-}NP}}
\newcommand{\QMA}{\mathsf{QMA}}
\newcommand{\PP}{\mathsf{PP}}
\newcommand{\SBP}{\mathsf{SBP}}
\newcommand{\MA}{\mathsf{MA}}
\newcommand{\AM}{\mathsf{AM}}
\newcommand{\StoqMA}{\mathsf{StoqMA}}
\newcommand{\coStoqMA}{\mathsf{co\text{-}StoqMA}}
\newcommand{\eStoqMA}{\mathsf{eStoqMA}}
\newcommand{\cStoqMA}{\mathsf{cStoqMA}}
\newcommand{\PSPACE}{\mathsf{PSPACE}}
\renewcommand{\braket}[3]{\langle #1 | #2 | #3 \rangle}
\newcommand{\ketbra}[2]{\left| #1 \right\rangle \left\langle #2 \right|}
\newcommand{\innerprod}[2]{\langle #1 | #2 \rangle}
\newcommand{\norm}[1]{\left\| #1 \right\|}
\newcommand{\Tr}{\mathrm{Tr}}
\newcommand{\poly}{\mathrm{poly}}
\newcommand{\negl}{\mathrm{negl}}
\newcommand{\setunsat}{\mathrm{set\text{-}unsat}}
\newcommand{\SetCSP}{\mathrm{SetCSP}}
\renewcommand{\Pr}[1]{\mathrm{Pr}\left[#1\right]}
\newcommand{\E}{\mathbb{E}}
\newcommand{\binset}{\{0,1\}}
\newcommand{\supp}[1]{\mathrm{supp}\left( #1 \right)}
\newcommand{\calT}{\mathcal{T}}
\newcommand{\calL}{\mathcal{L}}
\newcommand{\argmin}{\mathop{\mathrm{argmin}}}
\newcommand{\Idx}[1]{\mathrm{Idx}\left(#1\right)}
\newcommand{\Pacc}{p_{\rm acc}}
\newcommand{\Sample}[1]{\mathsf{S}_{#1}}
\newcommand{\Query}[1]{\mathsf{Q}_{#1}}
\begin{document}

\title{$\StoqMA$ meets distribution testing}
\author{Yupan Liu\footnote{Email: yupan.liu@gmail.com}}
\date{}
\maketitle

\begin{abstract}
$\StoqMA$ captures the computational hardness of approximating the ground energy of local Hamiltonians that do not suffer the so-called sign problem. 
We provide a novel connection between $\StoqMA$ and distribution testing via reversible circuits. 
First, we prove that easy-witness $\StoqMA$ (viz. $\eStoqMA$, a sub-class of $\StoqMA$) is contained in $\MA$. 
Easy witness is a generalization of a subset state such that the associated set's membership can be efficiently verifiable, and all non-zero coordinates are not necessarily uniform. 
This sub-class $\eStoqMA$ contains $\StoqMA$ with perfect completeness ($\StoqMA_1$), which further signifies a simplified proof for $\StoqMA_1 \subseteq \MA$~\cite{BBT06, BT10}. 
Second, by showing distinguishing reversible circuits with ancillary random bits is $\StoqMA$-complete (as a comparison, distinguishing quantum circuits is $\QMA$-complete~\cite{JWB05}), we construct soundness error reduction of $\StoqMA$. 
Additionally, we show that both variants of $\StoqMA$ that without any ancillary random bit and with perfect soundness are contained in $\NP$. 
Our results make a step towards collapsing the hierarchy $\MA \subseteq \StoqMA \subseteq \SBP$~\cite{BBT06}, in which all classes are contained in $\AM$ and collapse to $\NP$ under derandomization assumptions. 
\end{abstract}

\section{Introduction}

This tale originates from Arthur-Merlin protocols, such as complexity classes $\MA$ and $\AM$, introduced by Babai~\cite{Bab85}. 
$\MA$ is a randomized generalization of the complexity class $\sf NP$, namely the verifier could take advantage of the randomness. 
$\AM$ is additionally allowing two-message interaction. 
Surprisingly, two-message Arthur-Merlin protocols are as powerful as such protocols with a constant-message interaction, whereas it is a long-standing open problem whether $\MA=\AM$. 
It is evident that ${\sf NP} \subseteq {\MA} \subseteq {\AM}$. 
Moreover, under well-believed derandomization assumptions~\cite{KvM02,MV05}, these classes collapse all the way to $\sf NP$. 
Despite limited progresses on proving $\MA=\AM$, is there any intermediate class between $\MA$ and $\AM$? 

$\StoqMA$ is a natural class between $\MA$ and $\AM$, initially introduced by Bravyi, Bessen, Terhal~\cite{BBT06}. 
$\StoqMA$ captures the computational hardness of the stoquastic local Hamiltonian problems. 
The local Hamiltonian problem, defined by Kitaev~\cite{Kit99}, is substantially approximating the minimum eigenvalue (a.k.a. ground energy) of a sparse exponential-size matrix (a.k.a. local Hamiltonian) within inverse-polynomial accuracy. 
Stoquastic Hamiltonians~\cite{BDOT06} are a family of Hamiltonians that do not suffer the sign problem, namely all off-diagonal entries in the Hamiltonian are non-positive. 
$\StoqMA$ also plays a crucial role in the Hamiltonian complexity -- $\StoqMA$-complete is a level in the complexity classification of $2$-local Hamiltonian problems on qubits~\cite{CM16,BH17}, along with $\PTIME$, $\NP$-complete, and $\QMA$-complete. 

Inspiring by the Monte-Carlo simulation in physics, Bravyi and Terhal~\cite{BBT06,BT10} propose a $\MA$ protocol for the stoquastic frustration-free local Hamiltonian problem, which further signifies $\StoqMA$ with perfect completeness ($\StoqMA_1$) is contained in $\MA$. 
A uniformly restricted variant\footnote{It is the projection uniform stoquastic local Hamiltonian problem, namely each local term in Hamiltonian is exactly a projection. See Definition 2.10 in \cite{AGL20}. } of this problem, which is also referred to as $\SetCSP$~\cite{AG20}\footnote{Namely, a modified constraint satisfaction problem such that both constraints and satisfying assignments are a subset. }, essentially captures the $\MA$-hardness. 

To characterize $\StoqMA$ through the distribution testing lens, we begin with an informal definition of $\StoqMA$ and leave the details in \Cref{sec:prelim-classes}. 
For a language $\calL$ in $\StoqMA$, there exists a verifier $V_x$ that takes $x \in \calL$ as an input, where the verifier's computation is given by a classical reversible circuit, viewed as a quantum circuit. 
Besides a non-negative state\footnote{A witness here could be any quantum state, but the optimal witness is a non-negative state, see \Cref{remark:StoqMA-def}. } in the verifier's input as a witness, to utilize the randomness, ancillary qubits in the verifier's input consist of not only state $\ket{0}$ but also $\ket{+} := (\ket{0}+\ket{1})/\sqrt{2}$. 
After applying the circuit, the designated output qubit is measured in the Hadamard basis\footnote{It is worthwhile to mention that we can define $\MA$~\cite{BDOT06} (see \Cref{def:MA}) in the same fashion, namely replacing the measurement on the output qubit by the computational basis. }. 
A problem is in $\StoqMA(a,b)$ for some $a>b \geq 1/2$, if for \textit{yes} instances, there is a witness making the verifier accept with probability at least $a$; whereas for \textit{no} instances, all witness make the verifier accepts with probability at most $b$. 
The gap between $a$ and $b$ is at least an inverse polynomial since error reduction for $\StoqMA$ is unknown. 

The optimality of \textit{non-negative} witnesses suggests a novel connection to \textit{distribution testing}. 
Let $\ket{0}\ket{D_0}+\ket{1}\ket{D_1}$ be the state before the final measurement, where $\ket{D_k}=\sum_{i\in\binset^{n-1}} \sqrt{D_k(i)} \ket{i}$ for $k=0,1$ and $n$ is the number of qubits utilized by the verifier. 
A straightforward calculation indicates that the acceptance probability of a $\StoqMA$ verifier is linearly dependent on the \textit{squared Hellinger distance} $d^2_H(D_0,D_1)$ between $D_0$ and $D_1$, which indeed connects to distribution testing! 
Consequently, to prove $\StoqMA \subseteq \MA$, it suffices to approximate $d^2_H(D_0,D_1)$ within an inverse-polynomial accuracy using merely polynomially many samples\footnote{Each sample is actually the measurement outcome after running an independent copy of the verifier, see \Cref{remark:implement-dual-access}. }.

\subsection{Main results}

\subparagraph{$\StoqMA$ with easy witness ($\eStoqMA$). } 

With this connection to distribution testing, it is essential to take advantage of the \textit{efficient query access} of a non-negative witness where a witness satisfied with this condition is the so-called \textit{easy witness}. 
For this sub-class of $\StoqMA$ (viz. $\eStoqMA$) such that there exists an easy witness for any \textit{yes} instances, we are then able to show an $\MA$ containment by utilizing both query and sample accesses to the witness. 
Informally, easy witness is a generalization of a subset state such that the associated state's membership is efficiently verifiable, and all non-zero coordinates are unnecessarily uniform. 
It is evident that a classical witness is also an easy witness, but the opposite is not necessarily true (See \Cref{remark:eStoqMA-vs-cStoqMA}).
Now let us state our first main theorem: 
\begin{theorem}[Informal of \Cref{thm:eStoqMA-in-MA}]
\label{thm-inf:eStoqMA-eq-MA}
$\eStoqMA = \MA$. 
\end{theorem}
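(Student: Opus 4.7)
I plan to establish both inclusions. The direction $\MA \subseteq \eStoqMA$ is essentially immediate: a classical witness $y \in \binset^{\poly(n)}$ is trivially an easy witness, with associated set $\{y\}$ and membership decided by equality, and the standard embedding of an $\MA$ verifier inside the $\StoqMA$ template preserves completeness and soundness. So the substantive direction is $\eStoqMA \subseteq \MA$, and the remainder of this plan focuses on that.

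For $\eStoqMA \subseteq \MA$, the strategy is to exploit the identity flagged in the introduction: the $\StoqMA$ acceptance probability is an affine function of the squared Hellinger distance $d_H^2(D_0,D_1)$, where $\ket{0}\ket{D_0}+\ket{1}\ket{D_1}$ is the verifier's pre-measurement state. It therefore suffices for Arthur to estimate $d_H^2(D_0,D_1)$ to inverse-polynomial additive accuracy in $\poly(n)$ time, given only a polynomial-size classical description from Merlin. Merlin sends a classical description of the easy witness, namely a $\poly(n)$-size circuit $C$ deciding support membership together with an efficient description of the nonnegative amplitudes. Using $C$ and the reversibility of the verifier circuit, Arthur simulates \emph{dual access}---sampling plus pointwise evaluation---to $D_0$ and $D_1$: a sample from the joint distribution is produced by running the verifier forward on a witness sample, while a query $D_k(i)$ is produced by inverting the reversible verifier on $\ket{k}\ket{i}$ and evaluating the induced easy-witness amplitude via $C$. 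Arthur then runs a distribution-testing subroutine to estimate $d_H^2(D_0,D_1)$ and accepts iff the estimate exceeds the midpoint of the completeness and soundness thresholds.

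The main obstacle I anticipate is twofold. First, one must design or invoke a dual-access estimator for $d_H^2$ with $\poly(n)$ sample-and-query complexity and inverse-polynomial additive error; this is a nontrivial distribution-testing task that will likely dominate the technical work. Second, the notion of easy witness has to be pinned down so that Arthur's simulated sample and query accesses remain efficient and robust against a cheating Merlin---the cleanest route is to ensure that any classical description either corresponds to a legitimate easy-witness state on which the $\eStoqMA$ soundness bound $b$ still applies, or is efficiently detectable as malformed and rejected outright. Once these two ingredients are in hand, completeness and soundness of the resulting $\MA$ protocol follow from the Hellinger identity and a standard threshold argument.
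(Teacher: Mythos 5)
Your overall strategy for $\eStoqMA \subseteq \MA$ --- reduce to estimating the squared Hellinger distance $d_H^2(D_0,D_1)$ to inverse-polynomial accuracy via the dual (sample $+$ query) access model --- is exactly the route the paper takes (\Cref{lemma:simulating-X-basis-measurement} is the Canonne--Rubinfeld-style estimator, using $\Theta(1/(a-b)^2)$ oracle calls), and the easy direction $\MA \subseteq \eStoqMA$ is handled essentially as you say. But there is a genuine gap in how you propose Arthur obtains \emph{samples}. You have Merlin send only a classical description of the easy witness (a membership circuit plus an amplitude evaluator), and then assert that Arthur can ``produce a sample from the joint distribution by running the verifier forward on a witness sample.'' The definition of an easy witness only guarantees efficient \emph{query} access to $D_w$ --- an algorithm $\Query{w}$ computing $D_w(i)$ from $i$ --- and query access does not yield efficient sample access: even exhibiting a single element of $\supp{w}$ can be $\NP$-hard (e.g.\ when $\ket{w}$ is a subset state over the satisfying assignments of a formula), and sampling proportionally to $D_w$ is at least as hard as approximate counting. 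So the sampling half of your dual-access simulation cannot be carried out by Arthur alone from the classical description, and that is precisely the half the estimator hinges on.

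The paper's resolution is to make the samples part of Merlin's message: the $\MA$ witness consists of $O(1/(a-b)^2)$ copies of the witness state $\ket{w}$ (equivalently, under \Cref{def:MA} and \Cref{remark:MA-equivalence}, i.i.d.\ classical draws from $D_w$); Arthur realizes $\Sample{D}$ by running $V_x$ forward on each copy and measuring in the computational basis, and realizes $\Query{D}$ himself by inverting the reversible circuit $V_x$ on the queried index and invoking $\Query{w}$. Soundness then does not rest on detecting ``malformed'' descriptions (checking that a claimed amplitude function is normalized is itself $\sharpP$-hard in general), but on the fact that whatever state Merlin supplies, its acceptance probability under $V_x$ is at most $b$, and measuring the copies separately neutralizes entanglement. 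If you amend your protocol so that Merlin supplies the samples and Arthur implements only the query oracle, your argument aligns with the paper's proof of \Cref{thm:eStoqMA-in-MA}.
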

It is worthwhile to mention that easy witness also relates to $\SBP$ (Small Bounded-error Probability)~\cite{BGM06}. 
In particular, Goldwasser and Sipser~\cite{GS86} propose the celebrated Set Lower Bound protocol -- it is an $\AM$ protocol for the problem of approximately counting the cardinality of such an efficient verifiable set. 
Recently, Watson~\cite{Wat16} and Volkovich~\cite{Vol20} separately point out that such a problem is essentially $\SBP$-complete.

Although $\eStoqMA$ seems only a sub-class of $\StoqMA$, we could provide an arguably simplified proof for $\StoqMA_1 \subseteq \MA$ \cite{BBT06}. 
Namely, employed the local verifiability of $\SetCSP$~\cite{AG20}, it is evident to show $\eStoqMA$ contains $\StoqMA$ with perfect completeness, which infers $\StoqMA_1 \subseteq \MA$.
However, it remains open whether all $\StoqMA$ verifier has easy witness, whereas an analogous statement is false for classical witnesses (see \Cref{proposition:StoqMA-classical-witness-not-optimal}). 

\subparagraph{Reversible Circuit Distinguishability is $\StoqMA$-complete. }
It is well-known that distinguishing quantum circuits (a.k.a. the Non-Identity Check problem), namely given two efficient quantum circuits and decide whether there exists a pure state that distinguishes one from the other, is $\QMA$-complete~\cite{JWB05}. 
Moreover, if we restrict these circuits to be reversible (with the same number of ancillary bits), this variant is $\NP$-complete~\cite{Jor14}. 
What happens if we also allow \textit{ancillary random bits}, viewed as quantum circuits with ancillary qubits which is initially state $\ket{+}$? It seems reasonable to believe this variant is $\MA$-complete; however, it is actually $\StoqMA$-complete, as stated in \Cref{thm-inf:distinguishing-circuits-StoqMA-complete}: 
\begin{theorem}[Informal of \Cref{thm:RCD-is-StoqMA-complete}]
	\label{thm-inf:distinguishing-circuits-StoqMA-complete}
	Distinguishing reversible circuits with ancillary random bits within an inverse-polynomial accuracy is $\StoqMA$-complete. 
\end{theorem}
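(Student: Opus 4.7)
The plan is to prove both directions: that Reversible Circuit Distinguishability (RCD) lies in $\StoqMA$ and that it is $\StoqMA$-hard.

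For the containment in $\StoqMA$, I would construct the following verifier. Given an instance $(C_0, C_1)$, Merlin sends an alleged distinguishing quantum state $\ket{\psi}$. Arthur prepares a designated output qubit in $\ket{+}$, together with the $\ket{0}$- and $\ket{+}$-ancillas needed to execute $C_0$ and $C_1$. He then applies a reversible controlled circuit that runs $C_0$ when the output qubit is $\ket{0}$ and $C_1$ when it is $\ket{1}$; since $C_0$ and $C_1$ are Toffoli-based, attaching one extra control preserves reversibility at the cost of additional scratch ancillas. After this step, the joint state on the output qubit plus workspace has the form $\frac{1}{\sqrt{2}}(\ket{0}\ket{D_0}+\ket{1}\ket{D_1})$, where $\ket{D_k}$ encodes the classical output distribution of $C_k$ on Merlin's witness. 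Measuring the output qubit in the Hadamard basis, as required by $\StoqMA$, yields acceptance probability $\tfrac{1}{2}+\tfrac{1}{2}\langle D_0^{1/2}|D_1^{1/2}\rangle$, i.e., a linear function of the Bhattacharyya coefficient and therefore of the squared Hellinger distance $d_H^2(D_0,D_1)$. The RCD promise then translates directly into a valid $\StoqMA$ completeness--soundness gap.

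For $\StoqMA$-hardness, I would reduce from an arbitrary $\StoqMA$ verifier $V_x$ by setting $C_0 := V_x$ and $C_1 := X_{\mathrm{out}}\cdot V_x$, where $X_{\mathrm{out}}$ flips the designated output qubit. Both are reversible. Writing $V_x\ket{\psi}\ket{0^a}\ket{+^r}=\alpha\ket{0}\ket{A}+\beta\ket{1}\ket{B}$ with the amplitudes and $\ket{A},\ket{B}$ non-negative (optimal for $\StoqMA$), the output distributions $D_0^\psi,D_1^\psi$ of $C_0,C_1$ on Merlin's witness satisfy $D_1^\psi(b,i)=D_0^\psi(1\oplus b,i)$, so their Bhattacharyya coefficient simplifies to $2\alpha\beta\langle A|B\rangle = 2(\Pacc(\psi)-\tfrac12)$. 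Thus $V_x$'s inverse-polynomial $(a,b)$-acceptance gap converts into an inverse-polynomial gap in squared Hellinger distance between the two output distributions, closing the reduction.

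The main obstacle will be making the gap translation quantitatively sharp, since in $\StoqMA$ one only knows $\Pacc \geq \tfrac{1}{2}$ in general; the two inverse-polynomial promise thresholds on each side have to be lined up with some care, and it may be necessary to precompose with a mild amplification step to bring the Bhattacharyya values into a range comfortable for the RCD formulation. A secondary subtlety is that the controlled implementation in the containment direction must leave the shared $\ket{+}$ ancillas coupled consistently across both branches, so that the measured Bhattacharyya inner product reflects the intended marginal distinguishability of $D_0,D_1$ rather than a spurious coherence artifact induced by the particular coupling.
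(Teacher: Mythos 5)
Your proposal is correct and follows essentially the same route as the paper: the containment uses the identical controlled-$C_0$/controlled-$C_1$ circuit on a $\ket{+}$ control qubit with a final Hadamard-basis measurement giving acceptance probability $\tfrac{1}{2}+\tfrac{1}{2}\innerprod{R_0}{R_1}=1-\tfrac{1}{2}d_H^2(D_0,D_1)$, and your hardness reduction $(C_0,C_1)=(V_x,\,X_1 V_x)$ is the paper's choice $(I,\,V_x^{\dagger}X_1V_x)$ after left-multiplying both circuits by $V_x^{\dagger}$, which leaves $\innerprod{R_0}{R_1}$ unchanged. The amplification step you worry about is unnecessary, since the correspondence $\Pacc=1-\tfrac{1}{2}d_H^2(D_0,D_1)$ is exact and linear, so the paper simply takes $(\alpha,\beta)$-Reversible Circuit Distinguishability to be $\StoqMA(1-\alpha/2,1-\beta/2)$-complete with no parameter loss.
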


In fact, \Cref{thm-inf:distinguishing-circuits-StoqMA-complete} is a consequence of the distribution testing explanation of a $\StoqMA$ verifier's maximum acceptance probability. 
We can view \Cref{thm-inf:distinguishing-circuits-StoqMA-complete} as new strong evidence of $\StoqMA=\MA$. 
It further straightforwardly inspires a simplified proof of~\cite{Jor14}: 
\begin{proposition}[Informal of \Cref{prop:StoqMA-wo-plus-in-NP}]
	Distinguishing reversible circuits without ancillary random bits is $\NP$-complete.
\end{proposition}

Apart from the role of randomness, \Cref{corr-inf:StoqMA-perfect-soundness} is analogous for $\StoqMA$ regarding the well-known derandomization property~\cite{FGMSZ89} of Arthur-Merlin systems with \textit{perfect soundness}: 
\begin{proposition}[Informal of \Cref{prop:exact-rev-circuit-NP}]
	\label{corr-inf:StoqMA-perfect-soundness}
	$\StoqMA$ with perfect soundness is in $\NP$. 
\end{proposition}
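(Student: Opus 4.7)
The plan is to unpack the distribution-testing picture of the introduction to reduce perfect-soundness $\StoqMA$ to a combinatorial collision question on the verifier's reversible circuit, and then exhibit the $\NP$ certificate directly. Writing the state just before the final Hadamard measurement as $\ket{0}\ket{D_0}+\ket{1}\ket{D_1}$ (with $\|\ket{D_0}\|^2+\|\ket{D_1}\|^2=1$), the $\ket{+}$-outcome probability evaluates to $p_{\mathrm{acc}}=\tfrac{1}{2}+\mathrm{Re}\,\langle D_0|D_1\rangle$. Because the optimal $\StoqMA$ witness can be taken non-negative, $\langle D_0|D_1\rangle\geq 0$ on the relevant witnesses and the maximum acceptance is always at least $1/2$; perfect soundness on a no-instance therefore collapses to $\langle D_0|D_1\rangle=0$, equivalently $\mathrm{supp}(D_0)\cap\mathrm{supp}(D_1)=\emptyset$, for \emph{every} non-negative witness.

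Next I would translate this into a condition on the reversible verifier circuit $C$. Write the action of $C$ on $\ket{w}\ket{0}\ket{r}$ as $C(w,0,r)=(f(w,r),g(w,r))$ with $f$ the output bit and $r$ ranging over the $\ket{+}$-ancilla bitstrings; then expanding $\langle D_0|D_1\rangle$ on a non-negative witness $\sum_w\sqrt{p_w}\ket{w}$ yields a positive multiple of
\[
\sum_{g^\ast}A_0(g^\ast)\,A_1(g^\ast),\qquad A_b(g^\ast) = \sum_{(w,r):\,C(w,0,r)=(b,g^\ast)}\sqrt{p_w}.
\]
Instantiating $p$ first on a single $w$ (to rule out within-witness collisions) and then on pairs $\{w_1,w_2\}$ (to rule out cross-witness collisions) gives the clean equivalence: \emph{perfect soundness holds iff the map $(w,r)\mapsto g(w,r)$ is injective}. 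This is exactly the characterisation that underlies \Cref{thm-inf:distinguishing-circuits-StoqMA-complete}, specialised to zero distinguishing advantage.

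This equivalence hands over the $\NP$ algorithm: on a yes-instance, $\langle D_0|D_1\rangle>0$ for some non-negative witness forces some $g^\ast\in\mathrm{supp}(D_0)\cap\mathrm{supp}(D_1)$, hence a pair $(w_1,r_1)\neq(w_2,r_2)$ with $g(w_1,r_1)=g(w_2,r_2)$ and $f(w_1,r_1)\neq f(w_2,r_2)$. Such a collision is a polynomial-size $\NP$ certificate, verified by running $C$ twice and comparing the non-output bits. On a no-instance (perfect soundness) no such collision exists, so no certificate convinces the verifier. The main obstacle is the cross-witness direction of the combinatorial equivalence: same-$w$ collisions are ruled out immediately by reversibility of $C$, but cross-$w$ collisions genuinely use non-negativity of the witness — the sums $A_0(g^\ast)$ and $A_1(g^\ast)$ are sums of non-negative terms and cannot cancel — and this is exactly where the perfect-soundness promise does its work. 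Once that equivalence is locked in, the $\NP$ containment, and hence the $\StoqMA$ analog of the \cite{FGMSZ89} derandomization of Arthur--Merlin with perfect soundness, follows.
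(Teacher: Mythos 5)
Your proposal is correct and follows essentially the same route as the paper: both expand the acceptance probability as $\tfrac12+\innerprod{D_0}{D_1}$, use non-negativity of the witness and of the reversible circuit's matrix elements to argue that a strictly positive inner product forces a single colliding pair of computational-basis terms, and take that pair $(w_1,r_1),(w_2,r_2)$ as the $\NP$ certificate checked by running the circuit and comparing outputs. The only cosmetic difference is that the paper phrases this via the exact Reversible Circuit Distinguishability problem (two circuits $C_0,C_1$) rather than directly on the verifier's output/non-output bits.
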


Notably, the $\NP$-containment in \Cref{corr-inf:StoqMA-perfect-soundness} holds even for $\StoqMA(a,b)$ verifiers with arbitrarily small gap $a-b$. 
It is arguably surprising since $\StoqMA(a,b)$ with an exponentially small gap (i.e., the precise variant) at least contains $\NP^{\PP}$ \cite{MN17}, but such a phenomenon does not appear in this scenario. 

\subparagraph{Soundness error reduction of $\StoqMA$. }
Error reduction is a rudimentary property of many complexity classes, such as $\PTIME$, $\BPP$, $\MA$, $\QMA$, etc. . 
It is peculiar that such property of $\StoqMA$ is open, even though this class has been proposed since 2006~\cite{BBT06}. 
An obstacle follows from the limitation of performing a single-qubit Hadamard basis final measurement, so we cannot directly take \textit{the majority vote} of outcomes from the verifier's parallel repetition. 
Utilized the gadget in the proof of \Cref{thm-inf:distinguishing-circuits-StoqMA-complete}, we have derived soundness error reduction of $\StoqMA$, which means we could take \textit{the conjunction} of verifier's parallel repetition's outcomes: 
\begin{theorem}[Soundness error reduction of $\StoqMA$]
\label{thm-inf:soundness-error-reduction-StoqMA}
For any polynomial $r=\poly(n)$, 
\[\StoqMA\left(\frac{1}{2}+\frac{a}{2},\frac{1}{2}+\frac{b}{2}\right) \subseteq \StoqMA\left(\frac{1}{2}+\frac{a^r}{2},\frac{1}{2}+\frac{b^r}{2}\right).\]
\end{theorem}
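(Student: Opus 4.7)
The plan is to perform an $r$-fold parallel repetition of the verifier and combine the $r$ output qubits via the classical reversible CNOT gadget appearing in the proof of \Cref{thm-inf:distinguishing-circuits-StoqMA-complete}; this gadget implements, in the Hadamard basis, the \emph{product} of the individual $X$-observables on the output qubits. Because a $\StoqMA$ verifier measures only one output qubit in the Hadamard basis and uses only classical reversible gates, an ordinary majority vote is unavailable, and computing this product of $X$-observables is the natural analogue of ``AND'' that remains within the $\StoqMA$ framework.

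Concretely, the new verifier $V'_x$ receives a witness $\ket{\Psi}$ on $r$ copies $W_1,\ldots,W_r$ of the original witness register, prepares $r$ fresh ancilla registers (each carrying the same pattern of $\ket{0}$ and $\ket{+}$ qubits as $V_x$), and runs $V_x$ independently on each pair $(W_j, A_j)$, producing $r$ output qubits $q_1,\ldots,q_r$. It then applies $\CNOT{q_1}{q_j}$ for each $j \in \{2,\ldots,r\}$ and measures $q_1$ in the Hadamard basis. Using the Clifford conjugation rule $X_c \mapsto X_c X_t$ under $\CNOT{c}{t}$, the accept projector $\tfrac12(I + X_{q_1})$ pulls back through these CNOTs to $\tfrac12 \bigl(I + \prod_{j=1}^{r} X_{q_j}\bigr)$, so the acceptance probability of $V'_x$ on witness $\Psi$ equals
\[
\Pacc'(\Psi) \;=\; \tfrac12 \;+\; \tfrac12\, \bra{\Psi}\bigotimes_{j=1}^{r} B_j \ket{\Psi},
\]
where $B_j$ is the Hermitian operator on $W_j$ obtained by tracing out the $j$-th ancilla register after conjugating $X_{q_j}$ by $V_x$; in particular $\bra{\psi}B_j\ket{\psi} = 2\Pacc(\psi)-1$ for every single-copy witness $\psi$.

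Completeness follows immediately from the product witness $\Psi = (\psi^\star)^{\otimes r}$, where $\psi^\star$ saturates the original completeness bound: $\Pacc'(\Psi) = \tfrac12 + \tfrac12 (2\Pacc(\psi^\star)-1)^r \geq \tfrac12 + a^r/2$. The main obstacle is soundness, since an entangled $\Psi$ could in principle exploit negative eigenvalues of the individual $B_j$'s to push $\bra{\Psi}\bigotimes_j B_j \ket{\Psi}$ above $b^r$. This is resolved by two structural observations: (i) by the non-negative witness optimality of $\StoqMA$, the optimal witness $\Psi$ for $V'_x$ may be taken entry-wise non-negative; and (ii) because $V_x$ is a permutation matrix and the initial ancilla state is entry-wise non-negative, each $B_j$, and hence $\bigotimes_j B_j$, is entry-wise non-negative. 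Perron--Frobenius then forces the maximum of $\bra{\Psi}\bigotimes_j B_j \ket{\Psi}$ over non-negative unit $\Psi$ to equal the product of the Perron eigenvalues of the $B_j$'s, each of which is the maximum of $2\Pacc(\psi)-1$ over non-negative single-copy witnesses of the original verifier, and hence is at most $b$. Therefore $\Pacc'(\Psi) \leq \tfrac12 + b^r/2$ for no-instances, completing the amplification.
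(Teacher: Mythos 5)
Your proof is correct, and its mathematical core coincides with the paper's: both reduce the analysis to the Hermitian operator $M_x = \bra{0}^{\otimes n_0}\bra{+}^{\otimes n_+} V_x^{\dagger} X_1 V_x \ket{0}^{\otimes n_0}\ket{+}^{\otimes n_+}$ (your $B_j$), show that the repeated verifier accepts a witness $\Psi$ with probability $\tfrac12 + \tfrac12 \bra{\Psi} M_x^{\otimes r}\ket{\Psi}$, and dispose of entangled witnesses by arguing $\lambda_{\max}(M_x^{\otimes r}) = \lambda_{\max}(M_x)^r$. Where you genuinely differ is the gadget realizing the product observable $\bigotimes_j X_{q_j}$ as a single Hadamard-basis measurement: the paper (\Cref{fig:AND-repetition-StoqMA}) introduces one fresh $\ket{+}$ control qubit and applies controlled-$(V_x^{\dagger} X_1 V_x)$ to each copy, whereas you run the $r$ copies of $V_x$ unmodified and fan out $\CNOT{q_1}{q_j}$ among the output qubits before measuring $q_1$. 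The two circuits compute the same quantity, but yours stays strictly inside the Toffoli/CNOT/X gate set, while the paper's controlled-$V_x^{\dagger} X_1 V_x$ implicitly calls for controlled-Toffolis and hence a further (standard but unstated) decomposition with ancillas. A second point in your favor: the paper justifies $\lambda_{\max}(M_x^{\otimes r}) = \lambda_{\max}(M_x)^r$ only by appeal to ``the property of the tensor product of matrices,'' whereas you correctly isolate the fact that this step requires $\lambda_{\max}(M_x) = \rho(M_x)$, which follows from the entrywise non-negativity of $M_x$ (a permutation matrix sandwiched by non-negative ancilla states) together with Perron--Frobenius; without that observation, a negative eigenvalue of large modulus could dominate the tensor power for even $r$, which is exactly the failure mode you flag and rule out.
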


\subsection{Discussion and open problems}

\subparagraph{Towards $\SBP = \MA$. }
As stated before, it is known $\MA \subseteq \StoqMA \subseteq \SBP \subseteq \AM$~\cite{BGM06,BBT06}. 
Note a subset state associated with an efficient membership-verifiable set is an easy witness. 
Could we utilize this connection and deduce proof of $\SBP \subseteq \eStoqMA$? 

Owing to the wide uses of the Set Lower Bound protocol~\cite{GS86}, such a solution would be a remarkable result with many complexity-theoretic applications. 
Unfortunately, even a $\QMA$ containment for this kind of approximate counting problem is unknown. 
Despite such smart usage of the Grover algorithm implies an $O(\sqrt{2^n/|S|})$-query algorithm~\cite{AR20,BHMT02,VO20}, we are not aware of utilizing a quantum witness. 
Furthermore, an oracle separation between $\SBP$ and $\QMA$~\cite{AKKT19} suggests that such a proof of $\SBP \subseteq \QMA$ is supposed to be in a non-black-box approach, which signifies a better understanding beyond a query oracle is required. 

Besides $\SBP$ vs. $\MA$, it remains open whether $\StoqMA=\MA$. 
It is natural to ask whether each $\StoqMA$ verifier has easy witness. 
However, we even do not know how to prove $\StoqMA(1-a,1-1/\poly(n))$ has easy witness, where $a$ is negligible (i.e., an inverse super-polynomial). 
In~\cite{AGL20}, they prove $\StoqMA(1-a,1-1/\poly(n)) \subseteq \MA$ by applying the probabilistic method on a random walk, whereas the existence of easy witness seems to require a stronger structure\footnote{The candidate here is the set $S$ of all good strings (see \Cref{sec:SetCSP}) of the given $\SetCSP$ instance, which is unnecessary an optimal witness. It is thus unclear whether the frustration of $S$ remains negligible. }. 

\subparagraph{Towards error reduction of $\StoqMA$. }
Error reduction of $\StoqMA$ is an open problem since Bravyi, Bessen, and Terhal define this class in 2006~\cite{BBT06}. 
We first state this conjecture: 
\begin{conjecture}[Error reduction of $\StoqMA$]
	\label{conj:StoqMA-error-reduction}
	For any $a,b$ such that $1/2 \leq b < a \leq 1$ and $a-b \geq 1/\poly(n)$, the following holds for any polynomial $l(n)$:
	$\StoqMA(a,b) \subseteq \StoqMA\left(1-2^{-l(n)},1/2+2^{-l(n)}\right).$
\end{conjecture}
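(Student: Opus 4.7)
}
The most natural starting point is to view the conjecture as a two-sided extension of \Cref{thm-inf:soundness-error-reduction-StoqMA}. A useful first observation is that soundness amplification alone already settles the conjecture in the perfect-completeness regime: the AND-type repetition sends $(1,\, 1/2 + \beta)$ to $(1,\, 1/2 + (2\beta)^r/2)$, so choosing $r$ a suitable polynomial drives the soundness error arbitrarily close to $0$ while preserving completeness exactly. The conjecture would therefore follow from any procedure that first boosts completeness from $1/2 + 1/\poly(n)$ to very close to $1$ without sacrificing the soundness side by more than a constant factor, since the subsequent AND-type repetition could then reduce soundness exponentially.

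The plan is to produce such a completeness amplifier by leveraging the reversible-circuit-distinguishability characterization (\Cref{thm-inf:distinguishing-circuits-StoqMA-complete}), which expresses the acceptance probability as $\Pacc = 1/2 + \tfrac{1}{2}\langle D_0 \mid D_1\rangle$ for non-negative sub-distributions $D_0, D_1$. Concretely, I would try to design reversible circuits $\tilde{C}_0, \tilde{C}_1$ on $r$ copies of the witness--ancilla register so that the induced RCD inner product behaves like $1 - (1 - \langle D_0 \mid D_1\rangle)^r$ on \emph{yes} instances while remaining close to $\langle D_0 \mid D_1\rangle^r$ on \emph{no} instances, in the spirit of Sipser--G\'acs-style alternating amplification. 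Interleaving this OR-like step with the AND-like gadget from the soundness amplification should then widen the gap multiplicatively on one side without collapsing it on the other, eventually reaching the window required by the conjecture.

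The main obstacle is the OR step. Parallel repetition of the RCD gadget with a shared $|+\rangle$ control qubit naturally implements an AND on non-negative inner products, which is exactly the source of \Cref{thm-inf:soundness-error-reduction-StoqMA}. Implementing an OR appears to require either combining the $r$ output qubits via a multi-target Toffoli---which generically introduces destructive interference and pushes the pre-measurement state outside the non-negative cone, thereby violating the stoquastic structure---or supplying an auxiliary ``which copy accepts'' register, which effectively treats the witness as \emph{easy} in the sense of $\eStoqMA$. The latter suggests that the cleanest route to the conjecture likely passes through the (still open) equality $\StoqMA = \eStoqMA$, since \Cref{thm-inf:eStoqMA-eq-MA} combined with $\MA$'s classical Chernoff amplification would then immediately yield two-sided error reduction. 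A direct proof avoiding this detour would likely require a new non-negativity-preserving ``soft OR'' reversible gadget, which is precisely the ingredient that the $\StoqMA$ literature since~\cite{BBT06} has not yet produced.
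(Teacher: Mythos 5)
There is a genuine gap here, and it is the whole theorem: the statement you are asked about is \Cref{conj:StoqMA-error-reduction}, which the paper explicitly leaves \emph{open}. The paper proves only the soundness half (\Cref{thm-inf:soundness-error-reduction-StoqMA}, via the tensor-product/top-eigenvalue argument of \Cref{thm:StoqMA-AND-repetition}) and, in its discussion section, states that the missing ingredient is precisely a completeness amplifier. Your proposal correctly reproduces this reduction --- two-sided error reduction would follow from a completeness booster composed with the existing AND-type repetition, and your bookkeeping $(1,\,1/2+\beta)\mapsto(1,\,1/2+(2\beta)^r/2)$ matches \Cref{corr:StoqMA1-error-reduction} --- but it never supplies that booster. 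The proposed ``OR-like'' gadget achieving $1-(1-\langle D_0|D_1\rangle)^r$ is described only as a desideratum, and your own third paragraph concedes that every concrete instantiation you can see either destroys non-negativity or presupposes $\StoqMA=\eStoqMA$, which is itself open. A plan whose key step is acknowledged to be unavailable is not a proof.

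To make the obstruction concrete in the paper's own framework: the acceptance probability of a verifier is $\tfrac12+\tfrac12\lambda_{\max}(M_x)$ for the operator $M_x$ of \Cref{eq:StoqMA-verifier-maxeigenval}, and the only composition rules the reversible-circuit model supports are tensor products ($M_x^{\otimes r}$, giving the AND) and \emph{convex} combinations of verifiers via controlled gates on extra $\ket{+}$ ancillas (as used at the end of the proof of \Cref{lemma:SetCSP-in-StoqMA}). The operator you would need for an OR, $I^{\otimes r}-(I-M_x)^{\otimes r}=\sum_{k\ge 1}(-1)^{k+1}\binom{r}{k}M_x^{\otimes k}\otimes I^{\otimes(r-k)}$, has negative coefficients in this expansion, so it is not a convex combination of realizable verifiers; this is the algebraic form of the ``destructive interference / leaves the non-negative cone'' problem you gesture at, and it is the same obstacle the paper flags when noting that the XOR-lemma step of the $\SZK$ polarization argument fails here. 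So your write-up is a sound diagnosis of why the conjecture is hard, and it is consistent with the paper's discussion, but it does not establish the containment $\StoqMA(a,b)\subseteq\StoqMA\bigl(1-2^{-l(n)},\tfrac12+2^{-l(n)}\bigr)$.
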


As~\cite{AGL20} shows that $\StoqMA$ with a negligible completeness error is contained in $\MA$, (completeness) error reduction of $\StoqMA$ plays a crucial role in proving $\StoqMA=\MA$. 
Instead of performing the majority vote among parallelly running verifiers, another commonplace approach is first reducing errors of completeness and soundness separately, then utilizing these two procedures alternatively with well-chosen parameters. 
For instance, the renowned polarization lemma of $\SZK$~\cite{SV03,BDRV19}, and the space-efficient error reduction of $\QMA$~\cite{FKLMN16}. 
Since \Cref{thm-inf:soundness-error-reduction-StoqMA} already states soundness error reduction of $\StoqMA$, is it possible to also construct a completeness error reduction? 
Namely, a mechanism that builds a new $\StoqMA(1/2+a'/2,1/2+b'/2)$ verifier from the given $\StoqMA(1/2+a/2,1/2+b/2)$ verifier such that $a'$ is super-polynomially close to $1$. 
It seems to require new ideas since a direct analog of the XOR lemma in the polarization lemma of $\SZK$, such as Lemma 4.11 in~\cite{BDRV19}, does not work here. 

\subparagraph{$\StoqMA$ with exponentially small gap. }
Fefferman and Lin prove~\cite{FL18} that $\sf PreciseQMA$ is as powerful as $\PSPACE$, where $\sf PreciseQMA$ is a variant of $\QMA(a,b)$ with exponentially small gap $a-b$.
Moreover, we know that both $\sf PreciseQCMA$ and $\sf PreciseMA$ are equal to $\NP^{\PP}$~\cite{MN17}, where $\sf PreciseQCMA$ is a precise variant of $\QMA$ with a classical witness of the verifier. 
It is evident that $\sf PreciseStoqMA$ is between $\NP^{\PP}$ and $\PSPACE$, also the classical-witness variant of this class is precisely $\NP^{\PP}$ (see \Cref{subsec:limitations-cStoqMA}). 
Does $\sf PreciseStoqMA$ an intermediate class between $\NP^{\PP}$ and $\PSPACE$, or even strong enough to capture the full $\PSPACE$ power? 

\subsection{Related work}
Guided Stoquastic Local Hamiltonian Problem~\cite{Bravyi15}, which is contained in $\MA$, can be considered a (generalized) Hamiltonian version of $\eStoqMA$.
A guiding state $\ket{\phi}$ of a ground state $\ket{\psi}$ such that $\innerprod{x}{\phi}$\footnote{In fact, Bravyi's $\MA$ containment only requires to efficiently compute $\innerprod{x}{\phi}/\innerprod{y}{\phi}$ for any $x,y \in \binset^n$, which coincides with \Cref{def:eStoqMA}. However, the analysis of this protocol needs to evaluate the amplitude $\innerprod{x}{\psi}$. } is efficiently computable for any $x\in\binset^n$ by a classical circuit of size $p(n)$ and  $\innerprod{x}{\phi} \geq \innerprod{x}{\psi}/p(n)$ where $p(n)$ is a polynomial of $n$. 
This problem connects to $\eStoqMA$ because if a ground state is already a guiding state, then such ground state is evidently easy witness.

\subsection{Paper organization} 
\Cref{sec:prelim} introduces useful terminologies and notations. 
\Cref{sec:eStoqMA} proves that $\eStoqMA$ is contained in $\MA$, which indicates an arguably simplified proof of $\StoqMA_1 \subseteq \MA$, together with remarks on $\cStoqMA$. 
\Cref{sec:reversible-circuit} presents a new $\StoqMA$-complete problem named reversible circuit distinguishability, and the complexity of this problem's exact variant, which infers $\StoqMA$ with perfect soundness is in $\NP$. 
\Cref{sec:AND-repetition-of-StoqMA} provides soundness error reduction for $\StoqMA$.

\section{Preliminaries}
\label{sec:prelim}

\subsection{Non-negative states}
We assume familiarity with quantum computing on the levels of \cite{NC02}. 
Beyond this, we then introduce some notations which are more particular for this paper: 
the \textit{support} of $\ket{\psi}$,  $supp(\ket{\psi}) := \{i \in \{0,1\}^n : \innerprod{\psi}{i} \ne 0\}$, is the set strings with non-zero amplitude.  
A quantum state $\ket{\psi}$ is \textit{non-negative} of $\innerprod{i}{\psi} \geq 0$ for all $i \in \{0,1\}^n$. 
For any $S \subseteq \{0,1\}^n$, we refer to the state $\ket{S} := \frac{1}{\sqrt{|S|}} \sum_{i \in S} \ket{i}$ as the subset state corresponding to the set $S$~\cite{Wat00}. 

\subsection{Complexity class: $\MA$ and $\StoqMA$}
\label{sec:prelim-classes}
A (promise) problem $\calL = (\calL_{\rm yes}, \calL_{\rm no})$ consists of two non-overlapping subsets $\calL_{\rm yes}, \calL_{\rm no} \subseteq \{0,1\}^{*}$. 
These classes $\MA$ and $\StoqMA$ considered in this paper using the language of reversible circuits, as \Cref{def:MA} and \Cref{def:StoqMA}. 
\begin{definition}[$\MA$, adapted from~\cite{BBT06}]
\label{def:MA}
A promise problem $\calL = (\calL_{\rm yes}, \calL_{\rm no}) \in$ $\MA$ if there exists an $\MA$ verifier such that for any input $x \in \calL$, an associated uniformly generated verification circuit $V_x$ using only classical reversible gates (i.e. Toffoli, CNOT, X) on $n:=n_w+n_0+n_+$ qubits and a computational-basis measurement on the output qubit, where $n_w$ is the number of qubits for a witness, and $n_0~(\text{or }n_+)$ is the number of $\ket{0}~(\text{or }\ket{+})$ ancillary qubits, such that 
\begin{description}
    \item[Completeness. ] If $x\in \calL_{\rm yes}$, then there exists an $n$-qubit non-negative witness $\ket{w}$ such that $\Pr{V_x \text{ accepts } \ket{w}} \geq 2/3$. 
    \item[Soundness. ] If $x \in \calL_{\rm no}$, we have $\Pr{V_x \text{ accepts } \ket{w}} \leq 1/3$ for any $n$-qubit witness $\ket{w}$. 
\end{description}
\end{definition}

For simplicity, we denote $\ket{\bar{0}}:=\ket{0}^{\otimes n_0}$ and $\ket{\bar{+}}:=\ket{+}^{\otimes n_+}$ for the rest of this paper. 
We refer the equivalence between \Cref{def:MA} and the standard definition of $\MA$ to as \Cref{remark:MA-equivalence}, which is first observed by~\cite{BDOT06}. 
\begin{remark}[Equivalent definitions of $\MA$]
\label{remark:MA-equivalence}
The standard definition of $\MA$ only allows classical witnesses, viz. binary strings. 
To show it is equivalent to \Cref{def:MA}, it suffices to prove the optimal witness for \textit{yes} instances is classical. 
Notice that $\Pr{V_x \text{ accepts } \ket{w}} = \bra{\psi_{\rm in}} V_x^{\dagger} \Pi_{\rm out} V_x \ket{\psi_{\rm in}}$ where $\ket{\psi_{\rm in}}:=\ket{w}\otimes\ket{\bar{0}}\otimes\ket{\bar{+}}$ and $\Pi_{\rm out} = \ketbra{0}{0}_1\otimes I_{\rm else}$. 
Since $V_x^{\dagger} \Pi_{\rm out} V_x$ is a diagonal matrix, the optimal witness of $V_x$ is classical. 
\end{remark}

Analogously, we could define $\NP$ using classical reversible gates by setting $n_+=0$ in \Cref{def:MA}. 
Now we proceed with the definition of $\StoqMA$. 

\begin{definition}[$\StoqMA$, adapted from~\cite{BBT06}]
\label{def:StoqMA}
A promise problem $\calL = (\calL_{\rm yes}, \calL_{\rm no}) \in$ $\StoqMA$ if there is a $\StoqMA$ verifier such that for any input $x \in \calL$, a uniformly generated verification circuit $V_x$ using Toffoli, CNOT, X gates on $n:=n_w+n_0+n_+$ qubits and a Hadamard-basis measurement on the output qubit, where $n_w$ is the number of qubits for a witness, and $n_0$ $(\text{or}~n_+)$ is the number of $\ket{0}$ $(\text{or}~\ket{+})$ ancillary qubits, such that for efficiently computable functions $a(n)$ and $b(n)$: 
\begin{description}
    \item[Completeness. ] If $x\in \calL_{\rm yes}$, then there exists an $n$-qubit non-negative witness $\ket{w}$ such that $\Pr{V_x \text{ accepts } \ket{w}} \geq a(n)$. 
    \item[Soundness. ] If $x \in \calL_{\rm no}$, we have $\Pr{V_x \text{ accepts } \ket{w}} \leq b(n)$ for any $n$-qubit witness $\ket{w}$. 
\end{description}
Moreover, $a(n)$ and $b(n)$ satisfy $1/2 \leq b(n) < a(n) \leq 1$ and $a(n)-b(n) \geq 1/\poly(n)$. 
\end{definition}
Error reduction of $\StoqMA$ remains open since this class was defined in 2006 \cite{BBT06} because this class does not permit amplification of gap between thresholds $a, b$ based on majority voting. 
Hence, this gap is at least an inverse polynomial. 
We leave the remarks regarding the non-negativity of witnesses and parameters to \Cref{remark:StoqMA-def}. 
\begin{remark}[Optimal witnesses of a $\StoqMA$ verifier is non-negative]
\label{remark:StoqMA-def}
Analogous to $\QMA$, the maximum acceptance probability of a $\StoqMA$ verifier $V_x$ is precisely the maximum eigenvalue of $M_x:=\bra{\bar{0}}\bra{\bar{+}}V_x^{\dagger} \ketbra{+}{+}_1 V_x \ket{\bar{0}}\ket{\bar{+}}$ due to $\Pr{V_x \text{ accepts } \ket{\psi}} = \bra{\psi} M_x \ket{\psi}$. 
Notice the matrix $M_x$ is entry-wise non-negative. 
Owing to the Perron-Frobenius theorem (see Theorem 8.4.4 in \cite{HJ12}), a straightforward corollary is that the eigenvector $\psi$ (i.e., the optimal witness) maximizing the acceptance probability has non-negative amplitudes in the computational basis, namely it suffices to consider only non-negative witness for \textit{yes} instances. 
Additionally, it is clear-cut that the acceptance probability for any non-negative witness $\ket{\psi}$, regardless of the optimality, is at least $1/2$ by a direct calculation. 
\end{remark}

\subsection{Distribution testing}
\label{sec:distribution-testing}
Distribution testing is generally about telling whether one probability distribution is close to the other. 
We further recommend a comprehensive survey \cite{Can20} for a detailed introduction. 
We begin with the squared Hellinger distance $d^2_H(D_0,D_1)$ between two (sub-)distributions $D_0,D_1$, where $d^2_H(D_0,D_1):=\frac{1}{2} \| \ket{D_0}-\ket{D_1}\|_2^2$ and $\ket{D_k} = \sum_{i} \sqrt{D_k(i)}\ket{i}$ for any $k=0,1$. 
This distance is comparable with the total variation distance (see Proposition 1 in~\cite{DKW18}). 
We then introduce a specific model used for this paper, namely the \textit{dual access model}: 

\begin{definition}[Dual access model, adapted from~\cite{CR14}]
	\label{def:dual-access-model}
	Let $D$ be a fixed distribution over $[2^n]$. A dual oracle for $D$ is a pair of oracles $(\Sample{D},\Query{D})$: 
	\begin{itemize}
		\item{Sample access:} $\Sample{D}$ returns an element $i\in\binset^{n}$ with probability $D(i)$. And it is independent of all previous calls to any oracle. 
		\item{Query access:} $\Query{D}$ takes an input a query element $j\in \binset^{n-1}$, and returns the quotient $D(0||j)/D(1||j)$ where $D(a||j)$ is the probability weight that $D$ puts on $a||j$ for $a \in \binset$. 		
	\end{itemize}
\end{definition}

We then explain how to implement these oracles here in \Cref{remark:implement-dual-access}: 
\begin{remark}[Implementation of dual access model]
\label{remark:implement-dual-access}
The sample access oracle in \Cref{def:dual-access-model} could be implemented by running an independent copy of the circuit that generates the state $\ket{0}\ket{D_0}+\ket{1}\ket{D_1}$, and measuring all qubits on the computational basis. Meanwhile, the query access oracle is substantially an efficient evaluation algorithm corresponding to the quotient $D_0(i)/D_1(i)$ for given index $i$. 
\end{remark}

In \cite{CR14}, Canonne and Rubinfeld show that approximating the total variation distance between two distributions within an additive error $\epsilon$ requires only $\Theta(1/\epsilon^2)$ oracle accesses (see Theorems 6 and 7 in \cite{CR14}). 
However, suppose we allow to utilize only sample accesses. In that case, such a task requires $\Omega(N/\log N)$ samples even within constant accuracy (see Theorem 9 in~\cite{DKW18}), where $N$ is the dimension of distributions. 

\section{$\StoqMA$ with easy witnesses}
\label{sec:eStoqMA}

This section will prove that $\StoqMA$ with easy witnesses, viz. $\eStoqMA$, is contained in $\MA$. 
\textit{Easy witness} is named in the flavor of the seminal \textit{easy witness lemma}~\cite{IKW02}, which means that an $n$-qubit non-negative state witness of a $\StoqMA$ verifier has a \textit{succinct} representation. 
In particular, there exists an efficient algorithm to output the quotient $D_0(i)/D_1(i)$ for given index $i$. 
It is a straightforward generalization of subset states where the membership of the corresponding subset is efficiently verifiable. 
We here define $\eStoqMA$ formally: 

\begin{definition}[$\eStoqMA$]
	\label{def:eStoqMA}
	A promise problem $\calL=(\calL_{\rm yes},\calL_{\rm no}) \in \eStoqMA$ if there is a $\StoqMA$ verifier such that for any input $x \in \calL$, a uniformly generated verification circuit $V_x$ using only Toffoli, CNOT, X gates on $n:=n_w+n_0+n_+$ qubits and a Hadamard-basis measurement on the output qubit, where $n_w$ is the number of qubits for a witness, and $n_0~(\text{or } n_+)$ is the number of $\ket{0}~(\text{or } \ket{+})$ ancillary qubits, such that for efficiently computable functions $a(n)$ and $b(n)$: 
	\begin{description}
		\item[Completeness.] There exists an $n$-qubit non-negative witness $\ket{w}:=\sum_{i\in\binset^n} \sqrt{D_w(i)} \ket{i}$ such that $\Pr{V_x \text{ accepts } \ket{w}} \geq a(n)$, and there is an efficient algorithm $\Query{w}$ that outputs $D_w(0||i)/D_w(1||i)$ (or $D_w(1||i)/D_w(0||i)$) of index $1||i$ (or $0||i$) sampled from the distribution $D_w$ where $i\in\binset^{n-1}$. 
		\item[Soundness.] For any $n$-qubit witness $\ket{w}$, $\Pr{V_x \text{ accepts }\ket{w}} \leq b(n)$. 
	\end{description}
	Moreover, $a(n)$ and $b(n)$ satisfy $1/2 \leq b(n) < \alpha(n) \leq 1$ and $a(n) - b(n) \geq 1/\poly(n)$. 
\end{definition}

\begin{remark}[Subset-state witnesses require only membership]
	\label{remark:subset-state-only-membership}
	To show a subset-state witness $\ket{w}$ is an easy witness, it suffices to decide the membership of $\supp{\ket{w}}$ for the associated algorithm $\Query{w}$. 
	Notice any coordinate $D_w(j)$ in $D_w$ is $1/|\supp{\ket{w}}|$ if $j \in \supp{\ket{w}}$; otherwise $D_w(j)=0$. 
	Moreover, if $D_w(1||i)=0$ for some $i$, the corresponding point will never be sampled. 
	Hence, the quotient $D_w(0||i)/D_w(1||i)$ is $1$ if both $0||i$ and $1||i$ belong to $\supp{\ket{w}}$ (i.e., $D_w(0||i) = D_w(1||i) \neq 0$); otherwise the quotient is $0$. 	
\end{remark}

Distribution testing techniques inspire an $\MA$ containment of $\eStoqMA$, as \Cref{thm:eStoqMA-in-MA}. 
Precisely, employed with the dual access model (see \Cref{def:dual-access-model}) adapted from Canonne and Rubinfeld~\cite{CR14}, we obtain an empirical estimation within inverse-polynomial accuracy of an $\eStoqMA$ verifier's acceptance probability, where both sample complexity and time complexity are efficient. 

\begin{theorem}[$\eStoqMA\subseteq\MA$]
	\label{thm:eStoqMA-in-MA}
	For any $1/2 \leq b < a \leq 1$ and $a-b\geq 1/\poly(n)$, 
	\[\eStoqMA(a,b) \subseteq \MA\left(\tfrac{9}{16},\tfrac{7}{16}\right).\]
\end{theorem}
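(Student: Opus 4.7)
My plan is to exploit the distribution-testing interpretation of the acceptance probability highlighted in the proof overview. Since $V_x$ is a permutation of computational basis states, a direct calculation gives $V_x\ket{w}\ket{0}^{\otimes n_0}\ket{+}^{\otimes n_+}=\ket{0}\ket{\phi_0}+\ket{1}\ket{\phi_1}$ with non-negative $\ket{\phi_k}=\sum_{j}\sqrt{D_k(j)}\ket{j}$, so the Hadamard-basis measurement on the output qubit yields
\[
\Pacc(\ket{w})\;=\;\tfrac{1}{2}\;+\;\sum_{j}\sqrt{D_0(j)D_1(j)}\;=\;\tfrac{1}{2}\;+\;\E_{j\sim D}\!\left[\frac{\sqrt{D_0(j)D_1(j)}}{D(j)}\right],
\]
where $D:=D_0+D_1$ is a bona fide distribution on $\binset^{n-1}$. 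By AM--GM the integrand lies in $[0,1/2]$, so Hoeffding's inequality delivers an $\epsilon$-accurate Monte Carlo estimate of $\Pacc$ from $T=O(1/\epsilon^{2})$ dual-oracle calls---precisely the sample-plus-query payoff of \cite{CR14} recalled in \Cref{sec:distribution-testing}.

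Next, I need to supply Arthur with both Canonne--Rubinfeld oracles on $D_0,D_1,D$ out of Merlin's easy-witness advice. The query oracle is direct: since $V_x$ is reversible, Arthur can invert any target $(k,j)$ in polynomial time to obtain the unique preimage $(i,z,y)=V_x^{-1}(k,j)$, and one amplitude count shows $D_k(j)=D_w(i)/2^{n_+}$ when $z=0^{n_0}$ and $D_k(j)=0$ otherwise; so a single call to $\Query{w}$ returns $D_0(j)$, $D_1(j)$, and $D(j)$. For the sample oracle, following \Cref{remark:implement-dual-access}, Arthur classically simulates the computational-basis measurement of $V_x\ket{w}\ket{0}^{\otimes n_0}\ket{+}^{\otimes n_+}$ by combining uniform random bits for the $n_+$ ancillas with a draw $i\sim D_w$ extracted from the easy-witness description, and then outputs $V_x(i,0^{n_0},y)$.

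The $\MA$ protocol then reads: Merlin sends the classical description of the easy witness (in particular $\Query{w}$ and the data needed to sample $D_w$); Arthur uses his random bits to collect $T=\Theta((a-b)^{-2})$ i.i.d.\ samples $j_t\sim D$, evaluates $X_t:=\sqrt{D_0(j_t)D_1(j_t)}/D(j_t)\in[0,1/2]$ via the query oracle, and accepts iff $\widehat{p}:=\tfrac12+\tfrac1T\sum_{t}X_t\geq(a+b)/2$. Choosing $T$ large enough that Hoeffding gives $\Pr[|\widehat{p}-\Pacc|\geq(a-b)/4]\leq 1/16$ places completeness above $9/16$ on \textit{yes} instances (where some easy witness attains $\Pacc\geq a$) and soundness below $7/16$ on \textit{no} instances, since \Cref{remark:StoqMA-def} caps $\Pacc\leq b$ on every non-negative witness consistent with what Merlin declares.

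The main obstacle is the faithful implementation of the sample oracle: a bare query algorithm $\Query{w}$ does not automatically yield a classical sampler for an arbitrary $D_w$. I plan to resolve this by leveraging the \emph{succinct representation} built into the notion of easy witness---in the spirit of the motivating easy witness lemma---so that Merlin's advice also provides a polynomial-time sampler, while the soundness side is closed by the Perron--Frobenius bound of \Cref{remark:StoqMA-def}: any mixed state implicitly declared by (possibly distorted) sampling data is still a valid $\StoqMA$ witness on \textit{no} instances and therefore accepts with probability at most $b$, absorbable into the Hoeffding slack.
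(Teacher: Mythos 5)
Your estimator is fine and in fact a little cleaner than the paper's: you write $\Pacc-\tfrac12=\sum_j\sqrt{D_0(j)D_1(j)}=\E_{j\sim D}\bigl[\sqrt{D_0(j)D_1(j)}/D(j)\bigr]$ with $D=D_0+D_1$, a single $[0,1/2]$-bounded random variable handled by one Hoeffding bound, whereas the paper estimates $\E_{i\sim D_1/\|D_1\|_1}\bigl[\tfrac12(1+\sqrt{D_0(i)/D_1(i)})^2\bigr]$ and $\|D_1\|_1$ separately and multiplies the two empirical means, paying a post-selection step and a product-of-errors analysis. Your query-oracle implementation (invert the reversible $V_x$, call $\Query{w}$, divide by $2^{n_+}$) is exactly the paper's.

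The gap is precisely the point you flag and then wave away: the sample oracle. \Cref{def:eStoqMA} gives you only $\Query{w}$, an evaluation algorithm for the coordinates $D_w(i)$; it does not give a classical sampler for $D_w$, and no such sampler can in general be extracted from query access --- for a subset state this is exactly uniform sampling from an efficiently decidable set, the approximate-counting regime that the paper's own discussion places at the level of $\SBP$/$\AM$ rather than $\MA$. Strengthening the definition of easy witness to include a sampler would prove a weaker theorem and break the intended application: the easy witness used for \Cref{prop:StoqMA1-in-eStoqMA1} is the subset state of all good strings of a $\SetCSP$ instance, whose membership is efficiently decidable but which is not known to be efficiently samplable. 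The paper's resolution, which your protocol is missing, is not to simulate the sampling classically at all: Arthur's witness register carries $O(1/(a-b)^2)$ copies of the non-negative state $\ket{w}$, he runs $V_x$ on each copy and measures all qubits in the computational basis to realize $\Sample{D}$ physically (\Cref{remark:implement-dual-access}), and the resulting verifier is still a legitimate $\MA$ verifier under \Cref{def:MA} because its acceptance is decided by computational-basis measurements, so the optimal witness is classical by \Cref{remark:MA-equivalence}. With that substitution (plus the observation that measuring each copy separately renders entangled multi-copy witnesses useless for soundness), your argument matches the paper's.
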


In~\cite{BBT06,BT10}, Bravyi, Bessen, and Terhal proved $\StoqMA_1 \subseteq \MA$, utilizing a relatively complicated random walk based argument. 
By taking advantage of $\eStoqMA$, we here provide an arguably simplified proof by plugging \Cref{prop:StoqMA1-in-eStoqMA1} into \Cref{thm:eStoqMA-in-MA}: 
\begin{proposition}
    \label{prop:StoqMA1-in-eStoqMA1}
	$\StoqMA_1 \subseteq \eStoqMA$. 
\end{proposition}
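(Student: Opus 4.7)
The plan is to invoke the $\SetCSP$ framework of~\cite{AG20, AGL20}, which captures the uniform-projection, frustration-free, stoquastic Hamiltonian problem underlying $\StoqMA_1$ and, crucially, comes with a \emph{local verifiability} property on its ``good set'' of accepting configurations. Given any $L \in \StoqMA_1$ with verifier $V_x$, I would first apply the circuit-to-Hamiltonian reduction in its uniform-projection stoquastic form to convert $V_x$ into a $\SetCSP$ instance $\calQ_x$ whose good set $S_x$ is non-empty exactly when $x$ is a yes instance, and whose local terms are uniform projectors whose kernels encode the classical local consistency rules.

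I would then build a new verifier $V'_x$ that reads its witness register as a state on the $\SetCSP$ variables, samples a random local constraint of $\calQ_x$, and checks it via the standard Hadamard-basis measurement gadget. The uniform subset state $|S_x\rangle = |S_x|^{-1/2} \sum_{i \in S_x} |i\rangle$ lies in the kernel of every local projector of $\calQ_x$, so $V'_x$ accepts it with probability $1$, yielding perfect completeness. For no instances, the $\SetCSP$ promise gap translates into the required $1/\poly(n)$ soundness threshold through the $\Pacc$--$d_H^2$ identity used throughout \Cref{sec:eStoqMA}.

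It then remains to verify the easy-witness condition of \Cref{def:eStoqMA}. The amplitudes of $|S_x\rangle$ are uniform on $S_x$, and local verifiability of $\SetCSP$ supplies a small reversible circuit that decides the predicate $i \in S_x$ by evaluating each local constraint of $\calQ_x$. To hand $\Query{w}$ an exact coordinate $D_w(i) = [i \in S_x]/|S_x|$, I would absorb the normalization by padding -- extending each element of $S_x$ with a block of free ancillary bits so that the total weight is a fixed, known power of two -- after which $\Query{w}$ just returns the membership indicator divided by this fixed normalization.

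The main obstacle is the quantitative bookkeeping: the uniform-projection reduction must preserve a $1/\poly(n)$ promise gap when cast back as an $\eStoqMA$ protocol with Hadamard-basis output, and the padding used to fix $|S_x|$ must not break local verifiability. Neither step requires a new idea beyond the $\SetCSP$ machinery cited, but together they are what turns the paper's short ``evident'' argument into an honest proof.
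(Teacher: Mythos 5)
Your route is the same as the paper's: reduce to the $\StoqMA_1$-complete problem $\SetCSP_{0,1/\poly}$ via the circuit-to-Hamiltonian construction (\Cref{SetCSP-wo-frustration-is-StoqMA1-complete}), observe that the verifier of \Cref{lemma:SetCSP-in-StoqMA} accepts the subset state $\ket{S}$ of all good strings with probability $1$, and argue that $\ket{S}$ is an easy witness because membership in $S$ reduces to checking each local set-constraint. Up to that point your proposal and the paper's proof coincide.

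The one place you go beyond the paper is also the one step that does not work as written. You correctly notice that \Cref{def:eStoqMA} asks $\Query{w}$ to output the \emph{coordinate} $D_w(i)=[i\in S_x]/|S_x|$, not just the membership bit, and that $|S_x|$ is not efficiently computable. But your padding fix --- appending $k$ free ancillary bits to each element of $S_x$ so that ``the total weight is a fixed, known power of two'' --- is impossible: padding replaces $|S_x|$ by $2^k|S_x|$, which is a known power of two only if $|S_x|$ already was one. No amount of padding turns an unknown cardinality into a known one. The correct resolution (which the paper leaves implicit) is that the normalization is never actually needed: the tester in \Cref{lemma:simulating-X-basis-measurement} only ever evaluates the ratios $\sqrt{D_0(s_i)}/\sqrt{D_1(s_i)}$, and for a subset-state witness pushed through a reversible circuit every coordinate is either $0$ or $1/|S_x|$ (times the known $2^{-n_+}$ factor from the $\ket{+}$ ancillas), so the unknown $1/|S_x|$ cancels in every ratio and the membership predicate alone suffices to run \Cref{algo:PaccX-tester}. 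So either drop the padding and argue via ratios, or read $\Query{w}$ as returning coordinates up to a global normalization; as stated, the padding step is the gap.
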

The proof of \Cref{prop:StoqMA1-in-eStoqMA1} straightforwardly follows from the definition of $\SetCSP$ (see \Cref{def:SetCSP}), namely any $\SetCSP_{0,1/\poly}$ instance certainly has easy witness, and it is indeed optimal. 
We further leave the technical details regarding $\SetCSP$ in \Cref{sec:SetCSP}. 

\vspace{1em}
How strong is the $\eStoqMA$? 
\Cref{remark:eStoqMA-vs-cStoqMA} suggests $\eStoqMA$ seems more powerful than classical-witness $\StoqMA$ (i.e., $\cStoqMA$): 
\begin{remark}[$\eStoqMA$ is not trivially contained in $\cStoqMA$]
\label{remark:eStoqMA-vs-cStoqMA}
Classical witness is clearly also easy witness, but the opposite is unnecessarily true. 
Even though Merlin could send the algorithm $\Query{D_w}$ as classical witness to Arthur, Arthur only can prepare $\ket{w}$ by a post-selection, which means $\cStoqMA$ does not trivially contain $\eStoqMA$. 
\end{remark}

Furthermore, the proof of $\StoqMA(a,b)$ with classical witnesses is in $\MA$~\cite{Grilo20} could preserve completeness and soundness parameters. By inspection, it is clear-cut that this proof even holds when the gap $a-b$ is \textit{arbitrarily small}, whereas the proof of \Cref{thm:eStoqMA-in-MA} works only for inverse-polynomial accuracy. 
Further remarks of classical witness' limitations can be found in \Cref{subsec:limitations-cStoqMA}. 

\subsection{$\eStoqMA\subseteq \MA$: the power of distribution testing}
To derive an $\MA$ containment of $\eStoqMA$, it suffices to distinguish two non-negative states (viz., approximating the maximum acceptance probability) within an inverse-polynomial accuracy regarding the inner product (i.e., squared Hellinger distance). 
It seems plausible to prove $\StoqMA\subseteq\MA$ by taking \textit{samples} and post-processing. 
However, the known sample complexity lower bound (See \Cref{sec:distribution-testing}) indicates that (almost) exponentially many samples are unavoidable. 
Fortunately, we could circumvent this barrier for showing $\eStoqMA \subseteq \MA$, since easy witness guarantees \textit{efficient query access} to $D_0(i)/D_1(i)$ for given index $i$. 
In particular, employing both sample and query oracle accesses to $D_0,D_1$, such approximation within an additive error $\epsilon$ requires merely $\Theta(1/\epsilon^2)$ samples and queries! 
This advantage first noticed by Rubinfeld and Servedio~\cite{RS09}, and then almost fully characterized by Canonne and Rubinfeld~\cite{CR14}.  
Recently, this technique also has algorithmic applications used in quantum-inspired classical algorithms for machine learning~\cite{CGLLTW19, Tang19}. 

\begin{lemma}[Approximating a single-qubit Hadamard-basis measurement]
\label{lemma:simulating-X-basis-measurement}
In the dual access model, there is a randomized algorithm $\calT$ which takes an input $x$, $1/2 \leq b(|x|) < a(|x|) \leq 1$, as well as access to $(\Sample{D},\Query{D})$, where the non-negative state before the measurement is $\ket{\psi} = \sum_{i\in[2^n]} \sqrt{D(i)} \ket{i}$. After making $O\left(1/(a-b)^2\right)$ calls to the oracles, $\calT$ outputs either \textsc{accept} or \textsc{reject} such that: 
\begin{itemize}
	\item If $\frac{1}{2}\|\ket{D_0}+\ket{D_1}\|_2^2 \geq a$, $\calT$ outputs \textsc{accept} with probability at least $9/16$; 
	\item If $\frac{1}{2}\|\ket{D_0}+\ket{D_1}\|_2^2 \leq b$, $\calT$ outputs \textsc{accept} with probability at most $7/16$, 
\end{itemize}
where $D_k~(k\in\binset)$ is a sub-distribution such that $\forall i \in \binset^{n-1}, D_k(i):=D(k||i)$. 
\end{lemma}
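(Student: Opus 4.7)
The plan is to express the acceptance probability $p := \frac{1}{2}\|\ket{D_0}+\ket{D_1}\|_2^2$ as a bounded expectation that can be empirically estimated using the dual oracle, then invoke Hoeffding's inequality. First I would expand
\[
p = \frac{1}{2}\innerprod{D_0}{D_0} + \frac{1}{2}\innerprod{D_1}{D_1} + \innerprod{D_0}{D_1} = \frac{1}{2} + \sum_{j\in\binset^{n-1}} \sqrt{D_0(j) D_1(j)},
\]
using that $\sum_j D_0(j) + \sum_j D_1(j) = \sum_{i \in \binset^n} D(i) = 1$ because $D$ is itself a probability distribution. Hence it suffices to estimate the Hellinger affinity $F := \sum_j \sqrt{D_0(j) D_1(j)}$ to additive error $\epsilon := (a-b)/4$ and then compare the estimate of $\tfrac{1}{2}+F$ against the midpoint $(a+b)/2$.

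Next, I would observe that $R(j) := D_0(j) + D_1(j) = D(0\|j) + D(1\|j)$ is precisely the marginal of $D$ on the last $n-1$ bits, hence a bona fide probability distribution on $\binset^{n-1}$, and rewrite
\[
F = \sum_{j:R(j)>0} R(j) \cdot \frac{\sqrt{D_0(j) D_1(j)}}{D_0(j) + D_1(j)} = \E_{j\sim R}[g(j)], \qquad g(j) := \frac{\sqrt{D_0(j) D_1(j)}}{D_0(j) + D_1(j)}.
\]
By AM--GM, the integrand $g(j)$ lies in $[0, 1/2]$, which is the key boundedness I will exploit. A sample from $R$ is produced by one call to $\Sample{D}$ followed by discarding the leading bit, and $g(j)$ is then evaluated using two calls to $\Query{D}$, namely $D(0\|j)$ and $D(1\|j)$; so each ``round'' of the estimator costs a constant number of oracle accesses.

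The algorithm $\calT$ therefore draws $N = \Theta(1/(a-b)^2)$ independent samples $j_1,\ldots,j_N \sim R$, computes $\hat{F} := \frac{1}{N}\sum_k g(j_k)$, and outputs \textsc{accept} iff $\tfrac{1}{2} + \hat{F} \geq (a+b)/2$. Since each $g(j_k) \in [0,1/2]$, Hoeffding's inequality yields $\Pr{|\hat{F} - F| > \epsilon} \leq 2\exp(-8N\epsilon^2)$, so enlarging the hidden constant in $N$ (while keeping $N = \Theta(1/(a-b)^2)$) forces the failure probability below $1/16$, delivering the promised $9/16$ versus $7/16$ thresholds. The only mild subtlety is that $g(j)$ is ill-defined when $R(j) = 0$, but such $j$ are never produced by the sampler, so every term in the empirical average is well-defined; beyond this I expect no serious obstacle, since the argument is essentially a direct adaptation of Theorem~6 in~\cite{CR14} to the Hellinger affinity.
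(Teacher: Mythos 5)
Your proof is correct, but it takes a genuinely different route from the paper. The paper follows Theorem 6 of \cite{CR14} closely: it forms the two-stage estimator $\hat{X}\hat{Z}$, where $\hat{Z}$ estimates the normalization $\norm{D_1}_1$ from samples of the output bit, and $\hat{X}$ estimates the conditional expectation $\E_{i\sim D_1/\norm{D_1}_1}\bigl[\tfrac{1}{2}(1+\sqrt{D_0(i)/D_1(i)})^2\bigr]$ by post-selecting samples on the leading bit; the two errors are then combined multiplicatively, which is where the $9/16$ vs.\ $7/16$ constants come from. You instead collapse everything into a single expectation: using $\norm{D_0}_1+\norm{D_1}_1=1$ you reduce the task to estimating the Hellinger affinity $F=\sum_j\sqrt{D_0(j)D_1(j)}=\E_{j\sim R}[g(j)]$ over the marginal $R=D_0+D_1$, which is sampled directly from $\Sample{D}$ by discarding the leading bit, with no post-selection and no separate estimate of $\norm{D_1}_1$. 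Beyond being simpler, your route buys something substantive: your integrand $g(j)=\sqrt{D_0(j)D_1(j)}/(D_0(j)+D_1(j))$ is genuinely bounded in $[0,1/2]$ by AM--GM, so Hoeffding applies without further ado, whereas the paper's random variable $X_i=\tfrac{1}{2}(1+\sqrt{D_0(s_i)/D_1(s_i)})^2$ is claimed to lie in $[1/2,1]$ but in fact can exceed $1$ (already $X_i=2$ when $D_0(s_i)=D_1(s_i)$) and is unbounded when $D_0(s_i)\gg D_1(s_i)$, so the paper's Chernoff step needs a truncation or ratio-bounding argument that your decomposition renders unnecessary. The only thing the paper's two-stage version buys in exchange is fidelity to the general dual-access framework of \cite{CR14}, where the two sub-distributions need not sum to a normalized distribution; in the present setting, where $D$ is a single normalized distribution split on its leading bit, your argument is the cleaner one and fully suffices for the stated constants.
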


\subparagraph{Proof Intuition.} 
	To construct this algorithm $\calT$, the main idea is writing the acceptance probability $\Pacc$ of a $\StoqMA$ verifier's easy witness as an expectation over $D_1$ (or $D_0$) of some random variable regarding coordinates quotients $D_0(i)/D_1(i)$. 
	Note that the quotient $\sqrt{D_0(i)}/\sqrt{D_1(i)}$ could be computed by running the evaluation algorithm $Q_w$ (i.e., query oracle access). 
	Hence, $\calT$ only require to calculate an empirical estimation of $\E[X]$ (see the RHS of \Cref{eq:statistics-approx-PaccX}) within $1/\poly(|x|)$ accuracy. 
	Such an approximation could be achieved by averaging $\poly(|x|)$ sample with a standard concentration bound, which is analogous to Theorem 6 in~\cite{CR14}. 
	
Now we proceed with the explicit construction (i.e., \Cref{algo:PaccX-tester}) and analysis. 
	
\begin{proof}[Proof of \Cref{lemma:simulating-X-basis-measurement}. ]
	We begin with estimating the quantity $\norm{\ket{D_0}+\ket{D_1}}_2^2/2\norm{D_1}_1$ up to some additive error $\epsilon := (a-b)/8$. 
	We first observe that
	\begin{equation}
	\label{eq:statistics-approx-PaccX}
	\tfrac{\norm{\ket{D_0}+\ket{D_1}}_2^2}{2\norm{D_1}_1} 
	= \frac{1}{2} \sum_{i \in \binset^{n-1}} \left(1+\tfrac{\sqrt{D_0(i)}}{\sqrt{D_1(i)}}\right)^2 \tfrac{D_1(i)}{\|D_1\|_1}\\
	= \mathop{\mathbb{E}}_{i \sim D_1/\norm{D_1}_1} \left[ \frac{1}{2} \left( 1+\tfrac{\sqrt{D_0(i)}}{\sqrt{D_1(i)}} \right)^2 \right]. 
	\end{equation}
	Since the inner product is symmetric, it also implies 
	$\frac{\norm{\ket{D_0}+\ket{D_1}}_2^2}{2\norm{D_0}_1} = \mathop{\mathbb{E}}_{i \sim \tfrac{D_0}{\norm{D_0}_1}} \left[ \frac{1}{2} \left( 1+\frac{\sqrt{D_1(i)}}{\sqrt{D_0(i)}} \right) \right]$. 
	
	Notice $\calT$ only require to achieve an empirical estimate of this expected value, which suffices to utilize $m=O\left(1/(a-b)^2\right)$ samples $s_i$ from $D_1$, querying $\tfrac{D_0(s_i)}{D_1(s_i)}$, and computing $X_i = \tfrac{1}{2}\left( 1+\tfrac{\sqrt{D_0(s_i)}}{\sqrt{D_1(s_i)}} \right)^2 \|D_1\|_1$. 
	We here provide the explicit construction of $\calT$, as \Cref{algo:PaccX-tester}. 
	
	\begin{algorithm}[t!]
		\SetKwInOut{Input}{Require} 
		\SetKwFor{For}{For}{Do}{End}
		\SetKwIF{If}{ElseIf}{Else}{If}{Then}{Else If}{Else}{endif}
		\SetAlgoLined
		\Input{$\Sample{D}$ and $\Query{D}$ oracle accesses; parameters $\frac{1}{2} \leq b < a \leq 1$. }
		Set $m,m' := \Theta(1/\epsilon^2)$, where $\epsilon := (a-b)/8$\;
		Draw samples $o_1,\cdots,o_{m'}$ from $D_{\rm out}:= \text{marginal distribution of the designated output qubit}$\;
		Compute $\hat{Z} := \frac{1}{m'} \sum_{i=1}^{m'} Z_i$, where $Z_i := o_{i}$\;
		Draw samples $s_1,\cdots,s_{m}$ from $D$\;
		\For{$i = 1,\cdots,m$}{
			\lIf{$\hat{Z} \geq \frac{1}{2}$}{with $\Query{D}$, get $X_i := \frac{1}{2} \left(1+\tfrac{\sqrt{D_0(s_i)}}{\sqrt{D_1(s_i)}}\right)^2$}
			\lElse*{with $\Query{D}$, get $X_i := \frac{1}{2} \left(1+\tfrac{\sqrt{D_1(s_i)}}{\sqrt{D_0(s_i)}}\right)^2$}\;
		}
		Compute $\hat{X} := \frac{1}{m} \sum_{i=1}^m X_i$\;
		\lIf{$\hat{Z} \geq \frac{1}{2}$ \text{ and}  $\hat{X}\hat{Z} \geq \frac{1}{2}(a+b)$}{output ACCEPT}
		\lElseIf{$\hat{Z} < \frac{1}{2}$ \text{ and } $\hat{X}(1-\hat{Z}) \geq \frac{1}{2}(a+b)$}{output ACCEPT}
		\lElse{output REJECT}
		\caption{$O(1/(a-b)^2)$-additive approximation tester $\calT$ of $\frac{1}{2}\norm{\ket{D_0}+\ket{D_1}}_2^2$}
		\label[algorithm]{algo:PaccX-tester}
	\end{algorithm}
	
	\subparagraph{Analysis. } Define random variables $Z_i$ as in \Cref{algo:PaccX-tester}. 
	We obviously have $\E[Z_i] = \|D_1\|_1 \in [0,1]$. Since all $Z_i$s' are independent, a Chernoff bound ensures 
	\begin{equation}
		\label{eq:eStoqMA-in-MA-RV1}
		\Pr{\left|\hat{Z} - \|D_1\|_1\right| \leq \epsilon} \geq 1-2e^{-2m'/\epsilon^2},
	\end{equation}
	which is at least $3/4$ by an appropriate choice of $m'$. 
	
	Note drawing samples from $p_0$ implicitly by post-selecting the output qubit to be $0$. 
	However, due to the inner product's symmetry and $\norm{D_0}_1+\norm{D_1}_1=1$, there must exist $i \in \binset$ such that $\norm{D_i}_1 \geq 1/2$. Hence, the required sample complexity will be enlarged merely by a factor of $2$. 
	
	Let us also define random variables $X_i$ as in \Cref{algo:PaccX-tester}. W.L.O.G. assume that $\norm{D_1}_1 \geq 1/2 \geq \norm{D_0}_1$. 
	By \Cref{eq:statistics-approx-PaccX}, we obtain
	$\E_{i \sim D_1/\norm{D_1}_1} [X_i] = \norm{\ket{D_0}+\ket{D_1}}_2^2/2\norm{D_1}_1$. 
	Because the $X_i$'s are independent and takes value in $[1/2,1]$, by Chernoff bound,
	\begin{equation}
		\label{eq:eStoqMA-in-MA-RV2}
		\Pr{\left|\hat{X}-\frac{\norm{\ket{D_0}+\ket{D_1}}_2^2}{2\|D_1\|_1}\right| \leq \epsilon} \geq 1- 2e^{-2m/\epsilon^2}.
	\end{equation}
	Therefore, by our choice of $m$, $\hat{X}$ is an $\epsilon$-additive approximation of $\norm{\ket{D_0}+\ket{D_1}}_2^2/2\norm{D_1}_1$ with probability at least $3/4$. 
	Note that $X_i, Z_i$ are independent, we obtain $\E\left[\hat{X} \hat{Z}\right] = \frac{1}{2} \norm{\ket{D_0}+\ket{D_1}}_2^2$. 
	Hence, notice $1/2 \leq \|D_1\|_1 \leq 1$ and $1/2 \leq \frac{1}{2}\norm{\ket{D_0}+\ket{D_1}}_2^2 \leq 1$, by combining Equations \eqref{eq:eStoqMA-in-MA-RV1} and \eqref{eq:eStoqMA-in-MA-RV2}, we obtain with probability $9/16$:
	\[
	\begin{aligned}
		\hat{X}\hat{Z} &\leq \left( \tfrac{\norm{\ket{D_0}+\ket{D_1}}_2^2}{2\|D_1\|_1} + \epsilon \right)\left( \|D_1\|_1+\epsilon \right)
		\leq \tfrac{1}{2}\norm{\ket{D_0}+\ket{D_1}}_2^2 +\epsilon^2+\epsilon+2\epsilon \leq \tfrac{1}{2}\norm{\ket{D_0}+\ket{D_1}}_2^2+4\epsilon; \\
		\hat{X}\hat{Z} &\geq \left( \tfrac{\norm{\ket{D_0}+\ket{D_1}}_2^2}{2\|D_1\|_1} - \epsilon \right)\left( \|D_1\|_1-\epsilon \right)
		\geq \tfrac{1}{2}\norm{\ket{D_0}+\ket{D_1}}_2^2+\epsilon^2-\epsilon-2\epsilon \geq \tfrac{1}{2}\norm{\ket{D_0}+\ket{D_1}}_2^2-4\epsilon. 
	\end{aligned}
	\]
	It implies that
	$\Pr{\left|\hat{X}\hat{Z} - \tfrac{1}{2}\norm{\ket{D_0}+\ket{D_1}}_2^2\right| \leq 4\epsilon} \geq 9/16$. 
	We thereby conclude that 
	\begin{itemize}
		\item If $\tfrac{1}{2}\norm{\ket{D_0}+\ket{D_1}}_2^2 \geq a$, then $\hat{X}\hat{Z} \geq a-4\epsilon$ and $\calT$ outputs ACCEPT w.p. at least $9/16$. 
		\item If $\tfrac{1}{2}\norm{\ket{D_0}+\ket{D_1}}_2^2 \leq b$, then $\hat{X}\hat{Z} \leq b+4\epsilon$ and $\calT$ outputs ACCEPT w.p. at most $7/16$. 
	\end{itemize}
	
	Furthermore, the algorithm $\calT$ makes $m'+2m$ calls for $\Sample{D}$ and $m$ calls for $\Query{D}$ . 
\end{proof} 

It is worthwhile to mention that this construction in the proof of \Cref{thm:eStoqMA-in-MA} is optimal regarding the sample complexity, as Theorem 7 stated in~\cite{CR14}. 

\vspace{1em}
Finally, we complete the proof of \Cref{thm:eStoqMA-in-MA} by \Cref{lemma:simulating-X-basis-measurement}. 

\begin{proof}[Proof of \Cref{thm:eStoqMA-in-MA}. ]
	Given an $\eStoqMA(a,b)$ verifier $V_x$, we here construct a $\MA$ verifier $V'_x$ that follows from \Cref{algo:PaccX-tester} in the proof of \Cref{lemma:simulating-X-basis-measurement}: 
	
\begin{enumerate}[(1)]
	\item For each call to the sample oracle $\Sample{D_w}$, we run the $\eStoqMA$ verifier $V_x$ (without measuring the output qubit) with the witness $w$, and draw samples by performing measurements: 
	\begin{itemize}
		\item For samples $s_i~(1 \leq i \leq m)$ from distribution $D$, measure all qubits utilized by the verification circuit in the computational basis; 
		\item For samples $o_j~(1 \leq j \leq m')$ from distribution $D_{\rm out}$, measure the designated output qubit in the computational basis.  
	\end{itemize}
	\item For each call to the query oracle $\Query{D_w}$ with index $i$, find the corresponding index $i'$ at the beginning by performing the permutation associated with $V_x^{\dagger}$ on $i$, and then evaluate the value $D_w(i'')/D_w(i')$ by utilizing the given algorithm associated with this easy witness, where $i''$ is given by flipping the first bit of $i'$. 
	\item Compute an empirical estimation of $\frac{1}{2} \norm{\ket{D_0}+\ket{D_1}}_2^2$ as \Cref{algo:PaccX-tester}, and then decide whether $V_x$ accepts $w$.   
\end{enumerate}

The circuit size of $V'_x$ is a polynomial of $|x|$ since both sample and query complexity are efficient. 
We thus conclude that the new $\MA$ verifier $V'_x$ is efficient, and only requires $O\left(1/(a-b)^2\right)$ copies of the witness $w$, which finishes the completeness case. 

For the soundness case, the acceptance probability $\Pacc$ of the $\eStoqMA$ verifier $V_x$ for all witnesses is obviously upper-bounded by $b$, regardless of whether such a witness is easy or not. Furthermore, entangled witnesses are useless since we draw samples by performing measurements separately. Hence, the maximum acceptance probability of the new $\MA$ verifier $V'_x$ is also at most $b$. 
\end{proof}

\subsection{$\StoqMA$ with perfect completeness is in $\eStoqMA$}
\label{subsec:StoqMA1-in-eStoqMA}

We here complete proof of \Cref{prop:StoqMA1-in-eStoqMA1}. By \Cref{thm:eStoqMA-in-MA}, it infers $\StoqMA_1 \subseteq \MA$. 

\begin{proof}[Proof of \Cref{prop:StoqMA1-in-eStoqMA1}. ]
	By \Cref{SetCSP-wo-frustration-is-StoqMA1-complete}, we know that $\SetCSP_{0,1/\poly}$ is $\StoqMA_1$-complete, so it suffices to show that $\SetCSP_{0,1/\poly}$ is contained in $\eStoqMA_1$.

	By \Cref{lemma:SetCSP-in-StoqMA}, given a $\SetCSP_{0,b}$ instance $C$, we can construct a $\StoqMA\left(1,1-b/2\right)$ verifier. 
	The corresponding subset $S \subseteq \binset^n$, where $S$ satisfies all set-constraints of $C$, is an optimal witness. 
	It is left to show that this subset states is an easy witness. 
	
	We achieve the proof by inspection. 
	Let $S$ be the set of all good strings of $C$, then $\setunsat(C,S)=0$. Note $x\in S$ is a good string of $C$ iff $x$ is a good string of all set-constraints $C_i (1\leq i \leq m)$, the membership of $S$ thus can be decided efficiently, which infers the subset state $\ket{S}$ is easy witness by \Cref{remark:subset-state-only-membership}. 
\end{proof}

\subsection{Limitations of classical-witness $\StoqMA$}
\label{subsec:limitations-cStoqMA}

As we have shown $\StoqMA$ with easy witness is contained in $\MA$. What about classical witness, namely $\cStoqMA$? In fact, we could show such a containment that preserves both completeness and soundness parameters. 
\begin{proposition}[\cite{Grilo20}]
	\label{prop:cStoqMA-in-MA}
	For any $1/2 \leq b < a \leq 1$ and $a-b\geq 1/\poly(n)$, $\cStoqMA(a,b) \subseteq \MA(2a-1,2b-1).$
\end{proposition}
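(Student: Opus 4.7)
The plan is to convert the $\cStoqMA$ verifier $V_x$ into an $\MA$ verifier $V'_x$ by a purely reversible gadget that mimics the Hadamard-basis measurement with a computational-basis one, at the cost of the affine rescaling $x \mapsto 2x-1$. The starting point is the identity $\ketbra{+}{+}_1 = \tfrac{1}{2}(I + X_1)$, which for input $\ket{\psi_{\rm in}} := \ket{s, 0^{n_0}, +^{n_+}}$ with a classical witness $s$ gives
\[ \Pacc = \bra{\psi_{\rm in}} V_x^{\dagger} \ketbra{+}{+}_1 V_x \ket{\psi_{\rm in}} = \tfrac{1}{2} + \tfrac{1}{2}\bra{\psi_{\rm in}} V_x^{\dagger} X_1 V_x \ket{\psi_{\rm in}}, \]
so that $2\Pacc - 1 = \bra{\psi_{\rm in}} \pi \ket{\psi_{\rm in}}$, where $\pi := V_x^{\dagger} X_1 V_x$ is itself a permutation on $\binset^n$ built from the same gate set. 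Since $\ket{\psi_{\rm in}}$ is the uniform superposition over the basis states $\{(s, 0^{n_0}, r) : r \in \binset^{n_+}\}$, a direct expansion yields
\[ \bra{\psi_{\rm in}} \pi \ket{\psi_{\rm in}} = \Pr_{r \sim \binset^{n_+}}\bigl[\pi(s, 0^{n_0}, r)\text{ starts with }(s, 0^{n_0})\bigr], \]
which is exactly the shape of an $\MA$ acceptance probability.

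Given this, $V'_x$ is defined as: apply $V_x$, then $X_1$, then $V_x^{\dagger}$, and finally use a standard reversible Toffoli/CNOT gadget to compute the AND of the bit-wise equality checks between the first $n_w + n_0$ qubits and the target string $(s, 0^{n_0})$, writing this Boolean onto the designated output qubit. All gates lie in $\{\text{Toffoli}, \text{CNOT}, X\}$, so $V'_x$ satisfies \Cref{def:MA}, and its computational-basis measurement accepts with probability exactly $2\Pacc - 1$. For completeness, the honest prover's classical witness $s$ with $\Pacc(s) \geq a$ makes $V'_x$ accept with probability $\geq 2a-1$. For soundness, $V'_x$ is itself a reversible circuit with a single-qubit computational-basis measurement, so \Cref{remark:MA-equivalence} forces its optimal witness to be a classical string; the $\cStoqMA$ soundness bound $\Pacc(s) \leq b$ for every classical $s$ then yields $\leq 2b-1$ for $V'_x$ on every witness.

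The most delicate point is the equality $\bra{\psi_{\rm in}} \pi \ket{\psi_{\rm in}} = \Pr_r[\pi(s, 0^{n_0}, r)\text{ starts with }(s, 0^{n_0})]$, because this is where the restriction to classical $\cStoqMA$ witnesses is used essentially. The product structure $\ket{\psi_{\rm in}} = \ket{s}\ket{0^{n_0}}\ket{+^{n_+}}$ concentrates the input on a single coordinate-fixed subset of $\binset^n$, and the permutation $\pi$ either returns a given basis vector to that subset or not; this 0/1 event lifts cleanly to a classical probability, and the reversibility of $\pi$ makes the image amplitudes disjoint so the check on the target string can be implemented coherently and measured. For a general non-negative witness, $\ket{\psi_{\rm in}}$ would spread over many coordinate-fixed subsets with distinct amplitudes and $\bra{\psi_{\rm in}} \pi \ket{\psi_{\rm in}}$ would not reduce to a single collision probability of this form, so the same gadget would not preserve the $(2a-1, 2b-1)$ parameters; this is precisely the limitation discussed in \Cref{subsec:limitations-cStoqMA}.
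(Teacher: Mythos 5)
Your proposal is correct and follows essentially the same route as the paper's proof: the same gadget $V_x$, then $X_1$, then $V_x^{\dagger}$ followed by a check that the first $n_w+n_0$ qubits returned to $(s,0^{n_0})$, with the identity $\ketbra{+}{+}=\tfrac{1}{2}(I+X)$ giving the affine rescaling. Your direct expansion of $\bra{\psi_{\rm in}}\pi\ket{\psi_{\rm in}}$ as a collision probability over $r\in\binset^{n_+}$ is the same computation the paper packages as $\bra{+}^{\otimes n_+}\ket{R}=\innerprod{R}{R}$ for the subset state $\ket{R}$, and your explicit appeal to \Cref{remark:MA-equivalence} for soundness only spells out what the paper leaves implicit.
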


\begin{proof}[Proof Sketch. ]
We only illustrate the intuition: for any $s\in\binset^n$ and any reversible circuit $U$, we have $\bra{s} U^{\dagger} \ketbra{+}{+}_1 U \ket{s} = \tfrac{1}{2} + \tfrac{1}{2} \bra{s} U^{\dagger} X_1 U \ket{s}$ since $\ketbra{+}{+}=\frac{1}{2} (X+I)$. 
The detailed proof is left in \Cref{subsec:cStoqMA-proofs}. 
\end{proof}

The proof of \Cref{prop:cStoqMA-in-MA} immediately infers the \textit{precise variant} of $\StoqMA$ with classical witnesses, where the completeness-soundness gap is exponentially small, is equal to $\mathsf{PreciseMA}$. 
However, the proof of \Cref{thm:eStoqMA-in-MA} no longer works for precise scenarios, indicating that $\StoqMA$ with classical witness seems not interesting. 

Furthermore, it is not hard to see that classical witness is optimal for $\StoqMA_1$ verifier\footnote{By combining $\StoqMA_1\subseteq \MA_1$ and the gadget in the proof of \Cref{proposition:simulating-Z-meas-by-X}, we could construct a $\StoqMA_1$ verifier such that a classical witness is optimal. }.
However, it does not mean that a classical witness is optimal for \textit{any} $\StoqMA_1$ verifier. 
In fact \Cref{subsec:StoqMA-classical-witness-not-optimal} provides a simple counterexample by considering an identity as a verifier. 
However, this impossibility result is unknown for easy witness yet.

\section{Complexity of reversible circuit distinguishability}
\label{sec:reversible-circuit}

This section will concentrate on the complexity classification of distinguishing reversible circuits, namely given two efficient reversible circuits, and decide whether there is a non-negative state that \textit{cannot} tell one from the other.
With ancillary random bits, this problem is $\StoqMA$-complete, as \Cref{thm:RCD-is-StoqMA-complete}.
However, this problem's exact variant, namely assuming two reversible circuits are indistinguishable with respect to any non-negative witness for \textit{no} instances (viz., $\StoqMA$ with perfect soundness), is $\NP$-complete (see \Cref{prop:exact-rev-circuit-NP}).
Moreover, \Cref{thm:RCD-is-StoqMA-complete} also implies that distinguishing reversible circuits without any ancillary random bit is $\NP$-complete, which signifies a simplified proof of~\cite{Jor14}. 

\subsection{Reversible circuit distinguishability is $\StoqMA$-complete}
\label{subsec:reversible-circuit-StoqMA}

We begin with the formal definition of the \textit{Reversible Circuit Distinguishability} problem. 

\begin{definition}[Reversible Circuit Distinguishability]
\label{def:non-closeness-check}
Given a classical description of two reversible circuits $C_0,C_1$ (using Toffoli, CNOT, X gates) on $n:=n_w+n_0+n_+$ qubits, where $n_w$ is the number of qubits of a non-negative state witness $\ket{w}$, $n_0$ is the number of $\ket{0}$ ancillary qubits, and $n_+$ is the number of $\ket{+}$ ancillary qubits. Let the resulting state before measuring the  output qubit be $\ket{R_i}:=C_i \ket{w}\ket{\bar{0}}\ket{\bar{+}}$, $i\in\binset$. 
Promise that $C_0$ and $C_1$ with respect to witness state(s) are either $\alpha$-indistinguishable or $\beta$-distinguishable, decide whether 
\begin{itemize}
	\item {\bf Yes} ($\alpha$-indistinguishable): there exists a non-negative witness $\ket{w}$ such that $\innerprod{R_0}{R_1} \geq \alpha$;
	\item {\bf No} ($\beta$-distinguishable): for any non-negative witness $\ket{w}$,  then $\innerprod{R_0}{R_1} \leq \beta$;
\end{itemize}
where $\alpha-\beta \geq 1/\poly(n)$\footnote{Note $\innerprod{R_0}{R_0}=\innerprod{R_1}{R_1}=1$ which differs from $\innerprod{D_0}{D_0}+\innerprod{D_1}{D_1}=1$ previously used in \Cref{sec:eStoqMA}, we obtain that the acceptance probability $p_{\rm acc}=\frac{1}{2}+\frac{1}{2}\innerprod{R_0}{R_1}=1-\frac{1}{2}\cdot \frac{1}{2}\|\ket{R_0}-\ket{R_1}\|_2^2$. }.  
\end{definition}

Since \Cref{def:non-closeness-check} seems slightly inconsistent with known results regarding distinguishing circuits \cite{JWB05,Jor14,Tan10}, it is worthwhile to mention a slightly different version (see \Cref{remark:RCD-coStoqMA}) of \Cref{def:non-closeness-check}, which is $\coStoqMA$-complete. 
\begin{remark}[Equivalence Check of Reversible Circuits is $\coStoqMA$-complete]
	\label{remark:RCD-coStoqMA}
	Consider the same scenario in \Cref{def:non-closeness-check}, and the task is checking whether $C_0$ and $C_1$ are approximately equivalent (with respect to witness states). More concretely, decide whether $\innerprod{R_0}{R_1} \geq \alpha$ for any $\ket{w}$; or there exists $\ket{w}$ such that $\innerprod{R_0}{R_1} \leq \beta$. The $\coStoqMA$-completeness straightforwardly follows from the constructions in the proof of \Cref{thm:RCD-is-StoqMA-complete}.  
\end{remark}

Now we state the main theorem in \Cref{sec:reversible-circuit}. 
\begin{theorem}[Reversible Circuit Distinguishability is $\StoqMA$-complete]
\label{thm:RCD-is-StoqMA-complete}
	For any $\alpha - \beta \geq 1/\poly(n)$, $(\alpha,\beta)$-Reversible Circuit Distinguishability is $\StoqMA\left(1/2+\alpha/2,1/2+\beta/2 \right)$-complete. 
\end{theorem}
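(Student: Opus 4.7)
The plan is to establish both directions of the $\StoqMA$-completeness by realizing the squared Hellinger distance $\tfrac{1}{2}\|\ket{R_0}-\ket{R_1}\|_2^2$ as (the complement of) a single-qubit Hadamard-basis acceptance probability, implementing the SWAP-test-style gadget previewed in the proof overview.

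For the $\StoqMA$ containment, I would construct, from an RCD instance $(C_0,C_1)$ on $n=n_w+n_0+n_+$ qubits, a $\StoqMA$ verifier that reuses the same non-negative witness and the same $\ket{0}$ and $\ket{+}$ ancillas, plus one additional $\ket{+}$ ancilla that serves simultaneously as the control and the designated output qubit. The verifier applies controlled-$C_0$, then an $X$ on the control, then controlled-$C_1$, and finally measures the control qubit in the Hadamard basis. Each controlled reversible gate stays inside the Toffoli/CNOT/$X$ gate set: controlled-$X$ is CNOT, controlled-CNOT is Toffoli, and controlled-Toffoli decomposes into Toffolis and CNOTs with one clean $\ket{0}$ ancilla (e.g., via the Barenco et al.\ construction). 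A direct calculation shows the pre-measurement state is $\tfrac{1}{\sqrt{2}}(\ket{0}\ket{R_0}+\ket{1}\ket{R_1})$, so the Hadamard-basis acceptance probability equals $\tfrac{1}{4}\|\ket{R_0}+\ket{R_1}\|_2^2 = 1 - \tfrac{1}{4}\|\ket{R_0}-\ket{R_1}\|_2^2$, matching the RCD thresholds $(\alpha,\beta)$ to the $\StoqMA$ thresholds $(1-\alpha/2,\,1-\beta/2)$.

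For $\StoqMA$-hardness of RCD, I would reduce an arbitrary $\StoqMA(1-\alpha/2,\,1-\beta/2)$ verifier $V_x$ to RCD by setting $C_0 := I$ and $C_1 := V_x^{\dagger} X_1 V_x$; both are polynomial-size reversible circuits using only the $\StoqMA$ gate set. On a non-negative witness $\ket{w}$ with $\ket{\psi_{\rm in}}:=\ket{w}\ket{0}^{\otimes n_0}\ket{+}^{\otimes n_+}$, we have $\ket{R_0}=\ket{\psi_{\rm in}}$ and $\ket{R_1}=V_x^{\dagger} X_1 V_x \ket{\psi_{\rm in}}$. Since $\ketbra{+}{+} = \tfrac{1}{2}(I+X)$,
\[
\Pr{V_x \text{ accepts }\ket{w}} \;=\; \tfrac{1}{2} + \tfrac{1}{2}\,\innerprod{R_0}{R_1},
\]
which yields $\tfrac{1}{2}\|\ket{R_0}-\ket{R_1}\|_2^2 = 2\bigl(1-\Pr{V_x \text{ accepts }\ket{w}}\bigr)$, so the $\StoqMA$ thresholds translate back into the RCD thresholds under the same correspondence.

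The main subtlety I expect is the clean implementation of controlled-$C_0$ and controlled-$C_1$ strictly within the reversible gate set without leaving uncomputed garbage on auxiliary qubits, so that the resulting circuit still fits \Cref{def:StoqMA} with the witness unchanged. A secondary point is verifying that the non-negativity of the optimal witness is preserved in both directions: in the hardness direction this is automatic because $V_x^{\dagger} X_1 V_x$ is a computational-basis permutation, and in the containment direction because the SWAP-test gadget introduces only non-negative real amplitudes, so a Perron--Frobenius argument analogous to \Cref{remark:StoqMA-def} continues to guarantee that a non-negative witness is optimal for the newly built $\StoqMA$ verifier.
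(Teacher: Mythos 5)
Your proposal is correct and follows essentially the same route as the paper: the containment uses the identical controlled-$C_0$, $X$, controlled-$C_1$ gadget on a $\ket{+}$ control measured in the Hadamard basis, and the hardness reduction sets one circuit to the identity and the other to $V_x^{\dagger}X_1 V_x$ (the paper merely swaps which of $C_0,C_1$ plays which role). Your added remarks on compiling the controlled gates into the Toffoli/CNOT/$X$ set and on preserving non-negativity of the optimal witness are sensible details the paper leaves implicit, but they do not constitute a different approach.
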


We will then proceed with an intuitive explanation regarding proof of \Cref{thm:RCD-is-StoqMA-complete}. 
\subparagraph{Proof Intuition.} The $\StoqMA$-containment proof is inspired by the SWAP test for distinguishing two quantum states~\cite{BCWdW01}, since it could be thought of as a $\StoqMA$ verification circuit with the maximum acceptance probability $1$. 
We below provide a procedure (see \Cref{fig:RCD-in-StoqMA}) to distinguish two reversible circuits $C_0,C_1$ using a non-negative witness, and such a procedure is apparently a $\StoqMA$ verifier. 
The $\StoqMA$-hardness proof is straightforward: replacing $C_0$ and $C_1$ by identity and $V_x^{\dagger} X_1 V_x$ (see \Cref{fig:modified-StoqMA-verifier}), respectively, where $V_x$ is the given $\StoqMA$ verification circuit. 

\begin{figure}[ht!]
\centering
\begin{minipage}[t]{0.48\textwidth}
\centering
\begin{quantikz}
	\lstick{$\ket{+}$} & \ctrl{1} & \gate{X} & \ctrl{1}\slice{} & \meterD{\ket{+}} \\
	\lstick{$\ket{w}$} & \gate[3,bundle={1,2,3}]{C_0} & \qwbundle[alternate]{} & \gate[3,bundle={1,2,3}]{C_1} & \qwbundle[alternate]{}\\
	\lstick{$\ket{\bar{0}}$} & & \qwbundle[alternate]{} & & \qwbundle[alternate]{} \\
	\lstick{$\ket{\bar{+}}$} & & \qwbundle[alternate]{} & & \qwbundle[alternate]{}
\end{quantikz}
\caption{RCD is in $\StoqMA$}
\label{fig:RCD-in-StoqMA}
\end{minipage}
\begin{minipage}[t]{0.48\textwidth}
\centering
\begin{quantikz}
	\lstick{$\ket{+}$} & \ctrl{1} & \qw\slice{} & \meterD{\ket{+}}  \\
	\lstick{$\ket{w}$} & \gate[3,bundle={1,2,3}]{V_x^{\dagger} X_1 V_x} & \qwbundle[alternate]{} & \qwbundle[alternate]{}  \\
	\lstick{$\ket{\bar{0}}$} & & \qwbundle[alternate]{} & \qwbundle[alternate]{} \\
	\lstick{$\ket{\bar{+}}$} & & \qwbundle[alternate]{} & \qwbundle[alternate]{} 
\end{quantikz}
\caption{RCD is $\StoqMA$-hard}
\label{fig:modified-StoqMA-verifier}
\end{minipage}
\end{figure}

Now we proceed with the technical details. 

\begin{proof}[Proof of \Cref{thm:RCD-is-StoqMA-complete}. ]
	We first show $(\alpha,\beta)$-RCD is $\StoqMA\left(1/2+\alpha/2,1/2+\beta/2 \right)$-hard. 
	Consider a $\StoqMA$ verifier $V_x$ as \Cref{fig:modified-StoqMA-verifier}, let $C_0:=V_x^{\dagger} X_1 V_x$ where the $X$ gate in the middle acts on the output qubit, and let $C_1$ be identity. Then for any witness $\ket{w}$, we obtain: 
	\begin{equation}
	\label{eq:RCD-is-StoqMA-hard}
	\begin{aligned}
		\Pr{V_x \text{ accepts } \ket{w}} &= \bra{w}\bra{\bar{0}}\bra{\bar{+}} \left(V_x^{\dagger} \ketbra{+}{+}_1 V_x\right) \ket{w} \ket{\bar{0}} \ket{\bar{+}};\\
		\innerprod{R_0}{R_1} &= \bra{w}\bra{\bar{0}}\bra{\bar{+}} \left(V_x^{\dagger} X_1 V_x\right) \ket{w} \ket{\bar{0}} \ket{\bar{+}}.
	\end{aligned}
	\end{equation}
	Note that $\ketbra{+}{+} = (X+I)/2$, we thereby complete the $\StoqMA$-hardness proof by \Cref{eq:RCD-is-StoqMA-hard}: 
	$\Pr{V_x \text{ accepts } \ket{w}} = 1/2 + \innerprod{R_0}{R_1}/2$.
	
	Now it is left to show the $\StoqMA\left(1/2+\alpha/2,1/2+\beta/2 \right)$ containment of $(\alpha,\beta)$-RCD. 
	Given reversible circuits $C_0,C_1$, we construct a $\StoqMA$ verifier as \Cref{fig:RCD-in-StoqMA}. 
	Hence, we obtain the state before measuring the output qubit (viz. the red dash line): 
\[{\rm Ctrl-}C_1 \cdot X_1 \cdot {\rm Ctrl-}C_0 \left( \frac{\ket{0}+\ket{1}}{\sqrt{2}} \otimes \ket{w} \ket{\bar{0}} \ket{\bar{+}} \right)
= \frac{1}{\sqrt{2}} \ket{0} \ket{R_0} + \frac{1}{\sqrt{2}} \ket{1} \ket{R_1}
:= \ket{\rm RHS}. \]

We thus complete the $\StoqMA$-containment proof: 
$\Pr{V_x \text{ accepts } \ket{w}} 
= \left\| \ketbra{+}{+}_1 \ket{\rm RHS} \right\|_2^2 
=1/2 + \innerprod{R_0}{R_1}/2. $
\end{proof}

\subsection{Exact Reversible Circuit Distinguishability is $\NP$-complete}
We will prove that the exact variant of the Reversible Circuit Distinguishability is $\NP$-complete. 
Moreover, it will signify that $\StoqMA$ with perfect soundness (even the gap between thresholds $\alpha,1/2$ is arbitrarily small) is in $\NP$. 

\begin{proposition}[Exact RCD is $\NP$-complete]
    \label{prop:exact-rev-circuit-NP}
    Exact Reversible Circuit Distinguishability (RCD), namely $(\alpha,0)$-Reversible Circuit Distinguishability for any $0 \leq \alpha < 1$, is $\NP$-complete. 
\end{proposition}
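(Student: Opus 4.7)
The plan is to derive a combinatorial description of $\innerprod{R_0}{R_1}$ for non-negative witnesses, then use it to give both the $\NP$-verifier and a reduction from SAT. Throughout, I adopt the interpretation (matching the intuition in the overview section) that a Yes instance corresponds to the existence of a non-negative witness with $\innerprod{R_0}{R_1}>0$, while a No instance has $\innerprod{R_0}{R_1}=0$ for every non-negative witness, i.e., perfect soundness of the associated $\StoqMA$ verifier via \Cref{thm:RCD-is-StoqMA-complete}.

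Write $\ket{w}=\sum_{s\in\binset^{n_w}}\alpha_s\ket{s}$ with $\alpha_s\geq 0$, and note that each $C_i$ acts on basis states as a permutation $\pi_i$ on $n$-bit strings since it is built from Toffoli, CNOT, and $X$. Expanding $\ket{+}^{\otimes n_+}=2^{-n_+/2}\sum_{r}\ket{r}$, I would obtain
\[
\innerprod{R_0}{R_1}\;=\;\frac{1}{2^{n_+}}\sum_{s,s',r,r'}\alpha_s\alpha_{s'}\,\mathbf{1}\!\left[\pi_0(s,0^{n_0},r)=\pi_1(s',0^{n_0},r')\right].
\]
Every summand is non-negative, so $\innerprod{R_0}{R_1}>0$ for some non-negative witness if and only if there exists a quadruple $(s,r,s',r')$ with $\pi_0(s,0^{n_0},r)=\pi_1(s',0^{n_0},r')$; call such a quadruple a \emph{matched pair}. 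A single matched pair already forces $\innerprod{R_0}{R_1}\geq 2^{-(n_++1)}$ via the explicit rank-two witness $\tfrac{1}{\sqrt{2}}(\ket{s}+\ket{s'})$ (or $\ket{s}$ when $s=s'$).

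For the $\NP$ containment, the verifier takes a classical certificate $(s,r,s',r')\in\binset^{n_w}\times\binset^{n_+}\times\binset^{n_w}\times\binset^{n_+}$, simulates $\pi_0$ and $\pi_1$, and accepts iff the outputs agree bit-by-bit; this is polynomial time. In a Yes instance some non-negative witness has $\innerprod{R_0}{R_1}>0$, so by the combinatorial formula a matched pair exists and is a valid certificate. In a No instance $\innerprod{R_0}{R_1}=0$ for every non-negative witness; applying this hypothesis to each rank-two witness $\tfrac{1}{\sqrt{2}}(\ket{s}+\ket{s'})$ in turn forces the indicator to vanish at every $(s,s',r,r')$, so no certificate is accepted. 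For $\NP$-hardness I would reduce from SAT: given a formula $\phi$, standard Toffoli compilation yields a reversible $U_\phi$ with $U_\phi\ket{x,0^m}=\ket{x,g(x),\phi(x)}$; setting $n_w=n$, $n_0=m$, $n_+=0$, taking $C_0=U_\phi$, and letting $C_1$ be the analogous reversible circuit that writes the constant $1$ into the output qubit, one has $C_0(s,0^m)=C_1(s',0^m)$ iff $s=s'$ and $\phi(s)=1$, so matched pairs correspond exactly to satisfying assignments.

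The main obstacle I anticipate is the soundness direction of the $\NP$ verifier: the perfect-soundness hypothesis controls the inner product only globally over non-negative witnesses, whereas I must preclude a matched pair at every specific $(s,s',r,r')$. The key observation is that each rank-two witness $\tfrac{1}{\sqrt{2}}(\ket{s}+\ket{s'})$ isolates exactly the $(s,s')$ block of the combinatorial expansion, so applying the hypothesis to all such witnesses collapses every indicator to zero and rules out every candidate certificate simultaneously.
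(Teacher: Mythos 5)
Your proof is correct, and its core --- expanding $\innerprod{R_0}{R_1}$ as a non-negative sum of indicators over quadruples $(s,r,s',r')$, certifying \textit{yes} instances by a single matched pair, and having the $\NP$ verifier check $C_0(s,0^{n_0},r)=C_1(s',0^{n_0},r')$ --- is exactly the paper's argument for \Cref{prop:exact-rev-circuit-NP}. Two minor differences are worth noting: your soundness step, which instantiates the perfect-soundness hypothesis at every rank-two witness $(\ket{s}+\ket{s'})/\sqrt{2}$ to kill each indicator individually, is actually more explicit than the paper's, which quantifies only over $s_i,s_j\in\supp{w}$ for a fixed witness and leaves that instantiation implicit; and for hardness you reduce directly from SAT via a constant-output comparison circuit, whereas the paper invokes $\NP\subseteq\StoqMA(1,1/2)$ through the measurement-conversion gadget of \Cref{proposition:simulating-Z-meas-by-X} combined with \Cref{thm:RCD-is-StoqMA-complete}. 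Both hardness routes produce \textit{yes} instances with zero Hellinger distance, so each covers all $0\leq\alpha<1$; just make sure your $C_1$ reproduces the same garbage register as $C_0$ so that matched pairs correspond exactly to satisfying assignments.
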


\begin{proof}[Proof Sketch]
It suffices to show an $\NP$ containment. 
By an analogous idea in~\cite{FGMSZ89}, we could find two matched pairs $(s,r)$ and $(s',r')$ as classical witness, where $s,s'$ are indices of non-zero coordinates in the given witness, and $r,r'$ are random bit strings. 
Specifically, for \textit{yes} instances, there exist two such pairs such that the resulting strings $C_0(s,r)$ \footnote{A reversible circuit takes $(s,r)$ as an input, and permutes it to the other binary string as the output. } and $C_1(s',r')$ are identical; whereas it is evident that no matched pairs exist for \textit{no} instances. 
The details are left in \Cref{subsec:exact-RCD-NP}. 
\end{proof}

As a corollary, \Cref{prop:exact-rev-circuit-NP} will imply $\StoqMA$ with perfect soundness is in $\NP$: 

\begin{corollary}[$\StoqMA$ with perfect soundness is in $\NP$]
	\label{corr:StoqMA-perfect-soundness-is-NP}
	$\mathop{\bigcup_{a>1/2}} \StoqMA\left(a,\tfrac{1}{2}\right) = \NP. $
\end{corollary}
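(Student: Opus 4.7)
The plan is to derive the corollary directly from the two results immediately preceding it. Theorem~\ref{thm:RCD-is-StoqMA-complete} identifies $(\alpha,\beta)$-Reversible Circuit Distinguishability with $\StoqMA(1-\alpha/2,\,1-\beta/2)$ in a parameter-preserving way. Setting the soundness parameter to $1/2$ corresponds precisely to $\beta = 1$, so $\StoqMA(a, 1/2)$ is the same problem as exact $(2-2a, 1)$-RCD, and as $a$ ranges over $(1/2, 1]$ the gap parameter $\alpha = 2-2a$ sweeps out $[0, 1)$. Hence the union on the left of the claimed equality is exactly the family of problems treated by Proposition~\ref{prop:exact-rev-circuit-NP}.

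For the containment $\bigcup_{a > 1/2} \StoqMA(a, 1/2) \subseteq \NP$, I would take a language $\calL \in \StoqMA(a, 1/2)$ with verifier $V_x$, build $C_0 := V_x X_1 V_x^{\dagger}$ and $C_1 := I$ as in the $\StoqMA$-hardness reduction of Theorem~\ref{thm:RCD-is-StoqMA-complete}, and then invoke the $\NP$ verifier from Proposition~\ref{prop:exact-rev-circuit-NP}: its witness is simply two pairs $(s_0, r_0), (s_1, r_1)$ for which the permutations $C_0$ and $C_1$ produce the same output string, which is decidable in deterministic polynomial time.

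For the reverse inclusion $\NP \subseteq \bigcup_{a > 1/2} \StoqMA(a, 1/2)$, I would invoke the $\NP$-hardness half of Proposition~\ref{prop:exact-rev-circuit-NP} (itself inherited from the proof of \Cref{proposition:simulating-Z-meas-by-X}) to reduce any $\NP$ language to exact RCD with some fixed gap $\alpha \in [0, 1)$, and then apply the $\StoqMA$-containment direction of Theorem~\ref{thm:RCD-is-StoqMA-complete} to obtain a $\StoqMA(1-\alpha/2, 1/2)$ verifier, which lies in the desired union. The argument has no serious obstacle: it is essentially bookkeeping on the parameter translation $\beta = 1 \Leftrightarrow$ soundness $= 1/2$, together with the observation that the union over $a$ is genuinely required because Proposition~\ref{prop:exact-rev-circuit-NP} asserts $\NP$-completeness uniformly over the family $\alpha \in [0, 1)$ rather than at a single fixed gap value.
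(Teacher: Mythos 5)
Your proposal is correct and follows exactly the route the paper intends: the corollary is stated as an immediate consequence of \Cref{thm:RCD-is-StoqMA-complete} (parameter-preserving, with soundness $1/2$ corresponding to $\beta=1$) combined with \Cref{prop:exact-rev-circuit-NP}, with the $\NP$-hardness direction ultimately resting on $\NP \subseteq \StoqMA(1,1/2)$ via \Cref{proposition:simulating-Z-meas-by-X}. Your parameter bookkeeping ($\alpha = 2-2a$ sweeping $[0,1)$ as $a$ ranges over $(1/2,1]$) matches the paper's implicit argument.
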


\subparagraph{$\StoqMA$ without any ancillary random bit is in $\NP$. }
In fact, distinguishing reversible circuits without any ancillary random bit is $\NP$-complete. 
By analogous reasoning, we also provide an alternating proof of \textit{Strong Equivalence of Reversible Circuits} is $\coNP$-complete~\cite{Jor14}. 
We leave the detailed proof in \Cref{subsec:StoqMA-wo-randomized-ancilla}. 

\section{Soundness error reduction of $\StoqMA$}
\label{sec:AND-repetition-of-StoqMA}

In this section, we will partially solve \Cref{conj:StoqMA-error-reduction} by providing a procedure that reduces the soundness error of any $\StoqMA$ verifier. 

\begin{theorem}[restated of \Cref{thm-inf:soundness-error-reduction-StoqMA}]
	\label{thm:StoqMA-AND-repetition} For any $r =\poly(n)$, 
	\[\StoqMA\left(\frac{1}{2}+\frac{a}{2},\frac{1}{2}+\frac{b}{2}\right) \subseteq \StoqMA\left( \frac{1}{2}+\frac{a^r}{2}, \frac{1}{2}+\frac{b^r}{2} \right). \]
\end{theorem}

Consequently, \Cref{thm:StoqMA-AND-repetition} infers a direct error reduction for $\StoqMA_1$ by choosing appropriate parameters $a,b,r$.

\begin{corollary}[Error reduction of $\StoqMA_1$]
	\label{corr:StoqMA1-error-reduction}
	For any $s$ such that $1/2 \leq s \leq 1$ and $1-s \geq 1/\poly(n)$, 
	$ \StoqMA(1,s) \subseteq \StoqMA\left( 1,1/2 + 2^{-n} \right).  $
\end{corollary}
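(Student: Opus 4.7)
The plan is to instantiate \Cref{thm:StoqMA-AND-repetition} with $a=1$ and an appropriate polynomial $r$. Given a $\StoqMA(1,s)$ verifier with $1-s\geq 1/\poly(n)$, we first rewrite the parameters in the form used by the theorem: set $b := 2s-1$, so that the given class is exactly $\StoqMA(1/2 + 1/2,\, 1/2 + b/2)$ with $a=1$. The hypothesis $1-s\geq 1/\poly(n)$ translates to $b \leq 1 - 2/\poly(n)$, which is the crucial bound giving us room to shrink $b^r$.

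Next I would apply \Cref{thm:StoqMA-AND-repetition} to this verifier with a polynomial $r = r(n)$ to be chosen. The completeness threshold becomes $1/2 + a^r/2 = 1/2 + 1^r/2 = 1$, so perfect completeness is preserved automatically. The soundness threshold becomes $1/2 + b^r/2$, and we need this to be at most $1/2 + 2^{-n}$, i.e., $b^r \leq 2^{-n+1}$.

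The key step is the choice of $r$. Using $1 + x \leq e^x$ with $x = -2/\poly(n)$, we have
\[
b^r \leq \bigl(1 - 2/\poly(n)\bigr)^r \leq \exp\!\bigl(-2r/\poly(n)\bigr).
\]
Choosing $r$ to be a sufficiently large polynomial in $n$ (any $r = \Omega(n \cdot \poly(n))$ with the implied polynomial matching the one from $1-s\geq 1/\poly(n)$) makes the right-hand side at most $2^{-n+1}$, as required. Since $\poly(n)$-many repetitions are allowed by \Cref{thm:StoqMA-AND-repetition}, this choice is legitimate and the resulting verifier lies in $\StoqMA(1, 1/2 + 2^{-n})$.

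There is no real obstacle here beyond bookkeeping: the content is entirely in \Cref{thm:StoqMA-AND-repetition}, and the corollary is the standard observation that AND-type soundness amplification, when applied to a class already enjoying perfect completeness, drives the soundness error down exponentially while costing only a polynomial blowup in circuit size and witness length.
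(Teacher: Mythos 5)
Your proposal is correct and follows essentially the same route as the paper: rewrite $\StoqMA(1,s)$ as $\StoqMA(1/2+1/2,\,1/2+(2s-1)/2)$, apply \Cref{thm:StoqMA-AND-repetition} with $a=1$ so completeness stays perfect, and choose a polynomial $r$ making $(2s-1)^r/2\leq 2^{-n}$. The only cosmetic difference is that you justify the polynomial bound on $r$ via $1+x\leq e^x$, whereas the paper states the threshold $r\geq (n+1)/\log_2\bigl(1/(2s-1)\bigr)$ directly; both rely on $1-s\geq 1/\poly(n)$ in the same way.
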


\begin{proof}
	Choosing $a,b$ such that $1=1/2+a/2$ and $s = 1/2+b/2$, we have $a=1$ and $b=2s-1$. By \Cref{thm:StoqMA-AND-repetition}, we obtain
	$ \StoqMA\left(\frac{1}{2}+\frac{1}{2}\cdot 1, \frac{1}{2} + \frac{1}{2} (2s-1) \right) \subseteq \StoqMA\left(1, \frac{1}{2} + \frac{1}{2} (2s-1)^r \right). $
	To finish the proof, it remains to choose a parameter $r$ such that $r \geq (n+1)/\log_2\left(1/(2s-1)\right)$, since $(2s-1)^r/2 \leq 2^{-n}$ implies that $2^{-r\log_2\left(1/(2s-1)\right)-1} \leq 2^{-n}$. 
\end{proof}

\subsection{AND-type repetition procedure of a $\StoqMA$ verifier}

\subparagraph{Proof Intuition.} The main idea is doing a parallel repetition of a $\StoqMA$ verifier $V_x$, and taking the conjunction (viz., AND) of the outcomes cleverly.
More concretely, given a $\StoqMA$ verification circuit $V_x$ where $x$ is in $\calL\in\StoqMA$, we result in a new $\StoqMA$ verifier by separately substituting an identity and $V_x^{\dagger} X_1 V_x$ for $C_0$, $C_1$ (as \Cref{fig:modified-StoqMA-verifier}). 
Notice the acceptance probability of a $\StoqMA$ verifier's non-negative witness $\ket{w}$, $\Pr{V_x \text{ accepts } \ket{w}}=\frac{1}{2}+\frac{1}{2}\innerprod{D_0}{D_1}$, is linearly dependent to an inner product between states associated with two distributions $D_0,D_1$ where $\ket{D_0}:=\ket{w}\ket{\bar{0}}\ket{\bar{+}}$ and $\ket{D_1}:=V_x \ket{w}\ket{\bar{0}}\ket{\bar{+}}$.  
We could then take advantage of this new $\StoqMA$ verifier by running $r=\poly(|x|)$ copies of these reversible circuits parallelly with the same target qubit, which is denoted as $V'_x$ (see \Cref{fig:AND-repetition-StoqMA}). 

For \textit{yes} instances, it follows that an inner product of two tensor products of distributions is equal to the product of inner products of states associated with these distributions, namely, $\Pr{V'_x \text{ accepts } \ket{w}}=\frac{1}{2}+\frac{1}{2}\innerprod{D_0}{D_1}^r$. 
However, it seems problematic for \textit{no} instances, since a dishonest prover probably wants to cheat with an entangled witness instead of a tensor product among repetitive verifiers. 
We resolve this issue by an observation used in the $\QMA$ error reduction~\cite{KSV02}: the maximum acceptance probability of a verifier $V_x$ is the same as the maximum eigenvalue of a projection $\Pi_{0} V_x^{\dagger} \Pi_1 V_x \Pi_{0}$ where $\Pi_1$ is the final measurement on the designated output qubit and $\Pi_{0}:=\ketbra{\bar{0}}{\bar{0}}\otimes\ketbra{\bar{+}}{\bar{+}}$. 
Eventually, an entangled witness will not help a dishonest prover. 
This is because the maximum eigenvalue of the tensor product of the projection $\Pi_{0} V_x^{\dagger} \Pi_1 V_x \Pi_0$ is also the product of the maximum eigenvalue of this projection. 

Finally, we proceed with the proof of \Cref{thm:StoqMA-AND-repetition}. 

\begin{figure}[t!]
	\centering
	\begin{quantikz}
		\lstick{$\ket{+}$}  & \ctrl{1} & \ \ldots\ \qw & \ctrl{5} & \meterD{\ket{+}} \\
		\lstick{$\ket{w^{(1)}}$} & \gate[3,bundle={1,2,3}]{V_x^{\dagger} X_1 V_x} &\ \ldots\ \qwbundle[alternate]{} & \qwbundle[alternate]{} & \qwbundle[alternate]{} \\
		\lstick{$\ket{\bar{0}}$} & &\ \ldots\ \qwbundle[alternate]{} & \qwbundle[alternate]{} & \qwbundle[alternate]{} \\
		\lstick{$\ket{\bar{+}}$} & &\ \ldots\ \qwbundle[alternate]{} & \qwbundle[alternate]{} & \qwbundle[alternate]{} \\
		\wave&&&&\\
		\lstick{$\ket{w^{(r)}}$} & \qwbundle[alternate]{} &\ \ldots\ \qwbundle[alternate]{} & \gate[3,bundle={1,2,3}]{V_x^{\dagger} X_1 V_x} & \qwbundle[alternate]{}\\
		\lstick{$\ket{\bar{0}}$} & \qwbundle[alternate]{} &\ \ldots\ \qwbundle[alternate]{} & \qwbundle[alternate]{} & \qwbundle[alternate]{} \\
		\lstick{$\ket{\bar{+}}$} & \qwbundle[alternate]{} &\ \ldots\ \qwbundle[alternate]{} & \qwbundle[alternate]{} & \qwbundle[alternate]{}
    \end{quantikz}
	\caption{AND-type repetition procedure of a $\StoqMA$ verifier}
	\label{fig:AND-repetition-StoqMA}
\end{figure}
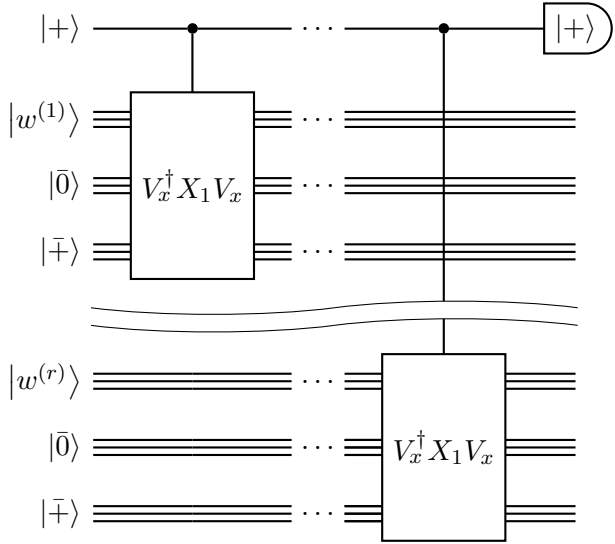

\begin{proof}[Proof of \Cref{thm:StoqMA-AND-repetition}. ]
Given a promise problem $\calL=(\calL_{\rm yes},\calL_{\rm no}) \in \StoqMA(1/2+a/2,1/2+b/2)$. 
For any input $x\in \calL$, we have a $\StoqMA$ verifier $V_x$ which is equivalent to a new $\StoqMA$ verifier $\tilde{V}_x$ as \Cref{fig:modified-StoqMA-verifier}, by the $\StoqMA$-hardness proof of reversible circuit distinguishability as \Cref{thm:RCD-is-StoqMA-complete}. Namely, $\tilde{V}_x$ is starting on a $\ket{+}$ ancillary qubit, applying a controlled-unitary $V_x^{\dagger} X_1 V_x$ on $n_w+n_0+n_+$ qubits, and measuring the designated output qubit. 

Let $\ket{R_w}:=\ket{w}\ket{\bar{0}}\ket{\bar{+}}$ where $\ket{w}$ is a witness, we obtain
\vspace{-0.3em}
\begin{equation}
	\label{eq:PaccX-modified-StoqMA-verifier}
	\left\| \ketbra{+}{+}_1 \left( \frac{1}{\sqrt{2}} \ket{0} \otimes \ket{R_w} + \frac{1}{\sqrt{2}} \ket{1} \otimes \left(V_x^{\dagger} X_1 V_x\right) \ket{R_w} \right) \right\|_2^2 
	= \| \ketbra{+}{+}_1 V_x \ket{R_w}\|_2^2. 
\end{equation}
\vspace{-0.5em}

By an observation used in the $\QMA$ error reduction, namely Lemma 14.1 in~\cite{KSV02}, we notice that the maximum acceptance probability of a $\StoqMA$ verifier $V_x$ is proportion to the maximum eigenvalue of a matrix $M_x:=\bra{\bar{0}} \bra{\bar{+}} V_x^{\dagger} X_1 V_x \ket{\bar{0}} \ket{\bar{+}}$ associated with $V_x$: 
\begin{equation}
\label{eq:StoqMA-verifier-maxeigenval}
\Pr{V_x \text{ accepts } \ket{w}} = \frac{1}{2} + \frac{1}{2} \max_{\ket{w}} \Tr(M_x \ketbra{w}{w}) =  \frac{1}{2} + \frac{1}{2} \lambda_{\max}(M_x). 
\end{equation}

\subparagraph{AND-type repetition procedure of a $\StoqMA$ verifier. }
We now construct a new $\StoqMA$ verifier $V'_x$ using $r$ copies of the witness $\ket{w}$ on $r(n_w+n_0+n_+)+1$ qubits. 
As \Cref{fig:AND-repetition-StoqMA}, $V'_x$ is starting from a $\ket{+}$ ancillary qubit as a control qubit, then applying controlled-unitary $V_x^{\dagger} X_1 V_x$ on qubits associated with different copies of the witness $\ket{w^{(i)}}$ for any $1 \leq i \leq r$.

By an analogous calculation of \Cref{eq:PaccX-modified-StoqMA-verifier}, we have derived the acceptance probability of a witness $w^{(1)} \otimes \cdots \otimes w^{(k)}$ of the new $\StoqMA$ verifier $V'_x$: 
\[\begin{aligned}
    \Pr{V'_x \text{ accepts } \left( w^{(1)}\otimes \cdots \otimes w^{(r)} \right) }
    = \frac{1}{2} + \frac{1}{2} \Tr\left(\ketbra{w^{(i)}}{w^{(i)}} M_x^{\otimes r} \right),
\end{aligned}\]
where $M_x$ is defined in \Cref{eq:StoqMA-verifier-maxeigenval}. 
Hence, the maximum acceptance probability of $V'_x$: 
\begin{equation}
    \label{eq:AND-error-reduction-tensor-product-pacc}
    \max_{\ket{w'}} \Pr{V'_x \text{ accepts } \ket{w'}}
	= \frac{1}{2} + \frac{1}{2} \lambda_{\max} \left( M_x^{\otimes r} \right)
    = \frac{1}{2} + \frac{1}{2} \left( \lambda_{\max}(M_x) \right)^r,
\end{equation}
where the second equality thanks to the property of the tensor product of matrices. 
\Cref{eq:AND-error-reduction-tensor-product-pacc} indicates that entangled-state witnesses are harmless since any entangled-state witness' acceptance probability is not larger than a tensor-product state witness'. 

\vspace{2em}
Finally, we complete the proof by analyzing the maximum acceptance probability of the new $\StoqMA$ verifier $V'_x$ regarding the promises: For \textit{yes} instances, we obtain $\lambda_{\max}(M_x) \geq a$ since there exists $\ket{w}$ such that $\Pr{V_x \text{ accepts } \ket{w}} \geq 1/2+a/2$. By \Cref{eq:AND-error-reduction-tensor-product-pacc}, we have derived 
		$ \Pr{V'_x \text{ accepts } \ket{w}^{\otimes r}} = \frac{1}{2}+\frac{1}{2} \left(\lambda_{\max}(M_x)\right)^r \geq \frac{1}{2}+\frac{a^r}{2}. $ 
For \textit{no} instances, we have $\lambda_{\max}(M_x) \leq b$ since $\Pr{V_x \text{ accepts } \ket{w}} \leq 1/2+b/2$ for all witness $\ket{w}$. By \Cref{eq:AND-error-reduction-tensor-product-pacc}, we further deduce  
		$ \forall w', \Pr{V'_x \text{ accepts } \ket{w'}} =\frac{1}{2}+\frac{1}{2} \left( \lambda_{\max}(M_x) \right)^r \leq \frac{1}{2}+\frac{b^r}{2}. $
\end{proof}

\section*{Acknowledgments}
\noindent
The author thanks Alex B. Grilo for his contribution during the early stage of \Cref{subsec:reversible-circuit-StoqMA}, and the proof of \Cref{prop:cStoqMA-in-MA}. The author also thanks anonymous reviewers for pointing out an error in the proof of \Cref{prop:StoqMA1-in-eStoqMA1} and valuable suggestions. 
Additionally, the author thanks Dorit Aharonov for helpful discussion. 
The author was supported by ISF Grant No. 1721/17 when he was affiliated with the Hebrew University of Jerusalem.
Circuit diagrams were drawn by the Quantikz package \cite{Kay18}.  

\bibliographystyle{alpha}
\bibliography{StoqMA-distribution-testing}

\newcommand{\etalchar}[1]{$^{#1}$}
\begin{thebibliography}{FKYYL{\etalchar{+}}16}

\bibitem[AG21]{AG20}
Dorit Aharonov and Alex~B Grilo.
\newblock Two combinatorial ma-complete problems.
\newblock In {\em 12th Innovations in Theoretical Computer Science Conference
  (ITCS 2021)}. Schloss Dagstuhl-Leibniz-Zentrum f{\"u}r Informatik, 2021.

\bibitem[AGL20]{AGL20}
Dorit Aharonov, Alex~B Grilo, and Yupan Liu.
\newblock $\mathsf{StoqMA}$ vs. $\mathsf{MA}$: the power of error reduction.
\newblock {\em arXiv preprint arXiv:2010.02835}, 2020.

\bibitem[AKKT20]{AKKT19}
Scott Aaronson, Robin Kothari, William Kretschmer, and Justin Thaler.
\newblock Quantum lower bounds for approximate counting via laurent
  polynomials.
\newblock In {\em Proceedings of the 35th Computational Complexity Conference},
  pages 1--47, 2020.

\bibitem[AR20]{AR20}
Scott Aaronson and Patrick Rall.
\newblock Quantum approximate counting, simplified.
\newblock In {\em Symposium on Simplicity in Algorithms}, pages 24--32. SIAM,
  2020.

\bibitem[Bab85]{Bab85}
L{\'a}szl{\'o} Babai.
\newblock Trading group theory for randomness.
\newblock In {\em Proceedings of the seventeenth annual ACM symposium on Theory
  of computing}, pages 421--429, 1985.

\bibitem[BBT06]{BBT06}
Sergey Bravyi, Arvid~J Bessen, and Barbara~M Terhal.
\newblock Merlin-arthur games and stoquastic complexity.
\newblock {\em arXiv preprint quant-ph/0611021}, 2006.

\bibitem[BCWdW01]{BCWdW01}
Harry Buhrman, Richard Cleve, John Watrous, and Ronald de~Wolf.
\newblock Quantum fingerprinting.
\newblock {\em Physical Review Letters}, 87(16):167902, 2001.

\bibitem[BDOT08]{BDOT06}
Sergey Bravyi, David~P Divincenzo, Roberto Oliveira, and Barbara~M Terhal.
\newblock The complexity of stoquastic local hamiltonian problems.
\newblock {\em Quantum Information \& Computation}, 8(5):361--385, 2008.

\bibitem[BDRV19]{BDRV19}
Itay Berman, Akshay Degwekar, Ron~D Rothblum, and Prashant~Nalini Vasudevan.
\newblock Statistical difference beyond the polarizing regime.
\newblock In {\em Theory of Cryptography Conference}, pages 311--332. Springer,
  2019.

\bibitem[BGM06]{BGM06}
Elmar B{\"o}hler, Christian Gla{\ss}er, and Daniel Meister.
\newblock Error-bounded probabilistic computations between $\mathsf{MA}$ and
  $\mathsf{AM}$.
\newblock {\em Journal of Computer and System Sciences}, 72(6):1043--1076,
  2006.

\bibitem[BH17]{BH17}
Sergey Bravyi and Matthew Hastings.
\newblock On complexity of the quantum ising model.
\newblock {\em Communications in Mathematical Physics}, 349(1):1--45, 2017.

\bibitem[BHMT02]{BHMT02}
Gilles Brassard, Peter Hoyer, Michele Mosca, and Alain Tapp.
\newblock Quantum amplitude amplification and estimation.
\newblock {\em Contemporary Mathematics}, 305:53--74, 2002.

\bibitem[Bra15]{Bravyi15}
Sergey Bravyi.
\newblock Monte carlo simulation of stoquastic hamiltonians.
\newblock {\em Quantum Information \& Computation}, 15(13-14):1122--1140, 2015.

\bibitem[BT10]{BT10}
Sergey Bravyi and Barbara Terhal.
\newblock Complexity of stoquastic frustration-free hamiltonians.
\newblock {\em SIAM Journal on Computing}, 39(4):1462--1485, 2010.

\bibitem[Can20]{Can20}
Cl{\'e}ment~L Canonne.
\newblock A survey on distribution testing: Your data is big. but is it blue?
\newblock {\em Theory of Computing}, pages 1--100, 2020.

\bibitem[CGL{\etalchar{+}}20]{CGLLTW19}
Nai-Hui Chia, Andr{\'a}s Gily{\'e}n, Tongyang Li, Han-Hsuan Lin, Ewin Tang, and
  Chunhao Wang.
\newblock Sampling-based sublinear low-rank matrix arithmetic framework for
  dequantizing quantum machine learning.
\newblock In {\em Proceedings of the 52nd Annual ACM SIGACT Symposium on Theory
  of Computing}, pages 387--400, 2020.

\bibitem[CM16]{CM16}
Toby Cubitt and Ashley Montanaro.
\newblock Complexity classification of local hamiltonian problems.
\newblock {\em SIAM Journal on Computing}, 45(2):268--316, 2016.

\bibitem[CR14]{CR14}
Cl{\'e}ment Canonne and Ronitt Rubinfeld.
\newblock Testing probability distributions underlying aggregated data.
\newblock In {\em International Colloquium on Automata, Languages, and
  Programming}, pages 283--295. Springer, 2014.

\bibitem[DKW18]{DKW18}
Constantinos Daskalakis, Gautam Kamath, and John Wright.
\newblock Which distribution distances are sublinearly testable?
\newblock In {\em Proceedings of the Twenty-Ninth Annual ACM-SIAM Symposium on
  Discrete Algorithms}, pages 2747--2764. SIAM, 2018.

\bibitem[FGM{\etalchar{+}}89]{FGMSZ89}
Martin Furer, Oded Goldreich, Yishay Mansour, Michael Sipser, and Stathis
  Zachos.
\newblock On completeness and soundness in interactive proof systems.
\newblock {\em Advainces in Computing Research: A Research Annual,},
  5:429--442, 1989.

\bibitem[FKYYL{\etalchar{+}}16]{FKLMN16}
Bill Fefferman, Hirotada Kobayashi, Cedric Yen-Yu~Lin, Tomoyuki Morimae, and
  Harumichi Nishimura.
\newblock Space-efficient error reduction for unitary quantum computations.
\newblock In {\em 43rd International Colloquium on Automata, Languages, and
  Programming (ICALP 2016)}. Schloss Dagstuhl-Leibniz-Zentrum fuer Informatik,
  2016.

\bibitem[FL18]{FL18}
Bill Fefferman and Cedric Yen-Yu Lin.
\newblock A complete characterization of unitary quantum space.
\newblock In {\em 9th Innovations in Theoretical Computer Science Conference
  (ITCS 2018)}. Schloss Dagstuhl-Leibniz-Zentrum fuer Informatik, 2018.

\bibitem[Gri20]{Grilo20}
Alex~B. Grilo.
\newblock Private communication, 2020.

\bibitem[GS86]{GS86}
Shafi Goldwasser and Michael Sipser.
\newblock Private coins versus public coins in interactive proof systems.
\newblock In {\em Proceedings of the eighteenth annual ACM symposium on Theory
  of computing}, pages 59--68, 1986.

\bibitem[HJ12]{HJ12}
Roger~A Horn and Charles~R Johnson.
\newblock {\em Matrix analysis}.
\newblock Cambridge university press, 2012.

\bibitem[IKW02]{IKW02}
Russell Impagliazzo, Valentine Kabanets, and Avi Wigderson.
\newblock In search of an easy witness: Exponential time vs. probabilistic
  polynomial time.
\newblock {\em Journal of Computer and System Sciences}, 65(4):672--694, 2002.

\bibitem[Jor14]{Jor14}
Stephen~P Jordan.
\newblock Strong equivalence of reversible circuits is
  $\mathsf{coNP}$-complete.
\newblock {\em Quantum Information \& Computation}, 14(15-16):1302--1307, 2014.

\bibitem[JWB05]{JWB05}
Dominik Janzing, Pawel Wocjan, and Thomas Beth.
\newblock "non-identity-check" is $\mathsf{QMA}$-complete.
\newblock {\em International Journal of Quantum Information}, 3(03):463--473,
  2005.

\bibitem[Kay18]{Kay18}
Alastair Kay.
\newblock Tutorial on the quantikz package.
\newblock {\em arXiv preprint arXiv:1809.03842}, 2018.

\bibitem[Kit99]{Kit99}
Alexei Kitaev.
\newblock Quantum $\mathsf{NP}$.
\newblock {\em Talk at AQIP}, 99, 1999.

\bibitem[KSV02]{KSV02}
Alexei~Yu Kitaev, Alexander Shen, and Mikhail~N Vyalyi.
\newblock {\em Classical and quantum computation}.
\newblock American Mathematical Soc., 2002.

\bibitem[KvM02]{KvM02}
Adam~R Klivans and Dieter van Melkebeek.
\newblock Graph nonisomorphism has subexponential size proofs unless the
  polynomial-time hierarchy collapses.
\newblock {\em SIAM Journal on Computing}, 31(5):1501--1526, 2002.

\bibitem[MN17]{MN17}
Tomoyuki Morimae and Harumichi Nishimura.
\newblock Merlinization of complexity classes above bqp.
\newblock {\em Quantum Information \& Computation}, 17(11-12):959--972, 2017.

\bibitem[MV05]{MV05}
Peter~Bro Miltersen and N~Variyam Vinodchandran.
\newblock Derandomizing arthur--merlin games using hitting sets.
\newblock {\em Computational Complexity}, 14(3):256--279, 2005.

\bibitem[NC02]{NC02}
Michael~A Nielsen and Isaac Chuang.
\newblock Quantum computation and quantum information, 2002.

\bibitem[RS09]{RS09}
Ronitt Rubinfeld and Rocco~A Servedio.
\newblock Testing monotone high-dimensional distributions.
\newblock {\em Random Structures \& Algorithms}, 34(1):24--44, 2009.

\bibitem[SV03]{SV03}
Amit Sahai and Salil Vadhan.
\newblock A complete problem for statistical zero knowledge.
\newblock {\em Journal of the ACM (JACM)}, 50(2):196--249, 2003.

\bibitem[Tan10]{Tan10}
Yu~Tanaka.
\newblock Exact non-identity check is $\mathsf{NQP}$-complete.
\newblock {\em International Journal of Quantum Information}, 8(05):807--819,
  2010.

\bibitem[Tan19]{Tang19}
Ewin Tang.
\newblock A quantum-inspired classical algorithm for recommendation systems.
\newblock In {\em Proceedings of the 51st Annual ACM SIGACT Symposium on Theory
  of Computing}, pages 217--228, 2019.

\bibitem[VO21]{VO20}
Ramgopal Venkateswaran and Ryan O'Donnell.
\newblock Quantum approximate counting with nonadaptive grover iterations.
\newblock In Markus Bl{\"{a}}ser and Benjamin Monmege, editors, {\em 38th
  International Symposium on Theoretical Aspects of Computer Science, {STACS}
  2021, March 16-19, 2021, Saarbr{\"{u}}cken, Germany (Virtual Conference)},
  volume 187 of {\em LIPIcs}, pages 59:1--59:12. Schloss Dagstuhl -
  Leibniz-Zentrum f{\"{u}}r Informatik, 2021.

\bibitem[Vol20]{Vol20}
Ilya Volkovich.
\newblock The untold story of $\mathsf{SBP}$.
\newblock In {\em International Computer Science Symposium in Russia}, pages
  393--405. Springer, 2020.

\bibitem[Wat00]{Wat00}
John Watrous.
\newblock Succinct quantum proofs for properties of finite groups.
\newblock In {\em Proceedings 41st Annual Symposium on Foundations of Computer
  Science}, pages 537--546. IEEE, 2000.

\bibitem[Wat16]{Wat16}
Thomas Watson.
\newblock The complexity of estimating min-entropy.
\newblock {\em Computational Complexity}, 25(1):153--175, 2016.

\end{thebibliography}

\appendix
\section{Missing proofs}
\subsection{Proof of \Cref{prop:cStoqMA-in-MA}: $\cStoqMA \subseteq \MA$}
\label{subsec:cStoqMA-proofs}
\begin{proof}[Proof of \Cref{prop:cStoqMA-in-MA}. ]
Given a $\cStoqMA$ verifier $V_x$ on $n=n'+n_0+n_w$ qubits where $n'$ is the number of qubits of a witness, we construct a new $\MA$ verifier $\tilde{V}_x$ on $n=n'+n_0+n_w$ qubits: first run the verification circuit $V_x$ (without measuring the output qubit), then apply an $X$ gate on the output qubit, after that run the verification circuit's inverse $V_x^{\dagger}$, finally measure the first $n'+n_0$ qubits in the computational basis; $\tilde{V}_x$ accepts iff the first $n'$ bits of the measurement outcome is exactly $s_1\cdots s_{n'}$ and the remained bits are all zero. 

We then calculate the acceptance probability of a classical witness $\ket{s}$ of a $\cStoqMA$ verifier $V_x$, where $w=w_1\cdots w_{n'} \in \binset^{n'}$. Notice $\ketbra{+}{+}=\frac{1}{2} \left( I+X \right)$, we obtain
\begin{equation}
\label[equation]{eq:cStoqMA-in-MA-LHS}
\begin{aligned}
	\Pr{V_x \text{ accepts } s} &= \| \ketbra{+}{+}_1 V_x \ket{s}\ket{\bar{0}}\ket{\bar{+}} \|_2^2\\
	&=\tfrac{1}{2} + \tfrac{1}{2} \bra{s}\bra{\bar{0}}\bra{\bar{+}} V_x^{\dagger} \left( X\otimes I_{n-1} \right) V_x \ket{s}\ket{\bar{0}}\ket{\bar{+}}. 
\end{aligned}
\end{equation}

By a direct calculation, the acceptance probability of a classical witness $\ket{s}$ of $\tilde{V}_x$: 
\begin{equation}
\label[equation]{eq:cStoqMA-in-MA-RHS}
\Pr{\tilde{V}_x \text{ accepts } s} = \innerprod{R}{R} \text{ where } \ket{R}:=\left( \bra{s}\bra{\bar{0}} \otimes I_{n_+} \right) V_x^{\dagger} \left( X\otimes I_{n-1} \right) V_x \ket{s}\ket{\bar{0}}\ket{\bar{+}}. 
\end{equation}

It is evident that $\ket{R}$ is a subset state and ${\rm supp}(\ket{R}) \subseteq \binset^{n_+}$. 
Together with Equations \eqref{eq:cStoqMA-in-MA-LHS} and \eqref{eq:cStoqMA-in-MA-RHS}, we have completed the proof by noticing  
$\Pr{V_x \text{ accepts } s} = \frac{1}{2}+\frac{1}{2} \innerprod{\bar{+}}{R} 
= \frac{1}{2}+\frac{1}{2} \innerprod{R}{R} 
= \frac{1}{2}+\frac{1}{2} \Pr{\tilde{V}_x \text{ accepts } s}. $
\end{proof}

Could we extend \Cref{prop:cStoqMA-in-MA} from a classical witness to a probabilistic witness $\sum_{s_i} \sqrt{D(i)} \ket{s_i}$ with a polynomial-size support\footnote{Such witnesses are clearly easy witnesses, but not all easy witnesses have polynomial-bounded size support. See the explicit construction in \Cref{subsec:StoqMA1-in-eStoqMA} as an example.}? 
Notice that the crucial equality $\bra{\bar{+}}\ket{R} = \innerprod{R}{R}$ utilized in \Cref{prop:cStoqMA-in-MA} does not hold anymore, we need \textit{an efficient evaluation algorithm} calculating $D(i)$ given an index $i$. 
Moreover, we have to calculate each coordinate's contribution on the acceptance probability \textit{separately}, so the accumulated additive error is still supposed to be inverse-polynomial, which indicates the support size of this probabilistic witness is \textit{negligible} for some polynomial. 

\subsection{Classical witness is not optimal for any $\StoqMA_1$ verifier}
\label{subsec:StoqMA-classical-witness-not-optimal}

\begin{proposition}
\label{proposition:StoqMA-classical-witness-not-optimal}
	Classical witness is not optimal for any $\StoqMA_1$ verifier. 
\end{proposition}

\begin{proof}
	Consider a $\StoqMA_1$ verifier $V_x$ that uses only identity gates, then
	\begin{enumerate}[(1)]
		\item For all classical witness $s_i\in\binset^{n_w}$, $\Pr{V_x \text{ accepts } s_i}=\frac{1}{2}$ since $\innerprod{R_0}{R_1}=0$ where the resulting state before the measurement is $\ket{0}\otimes \ket{R_0} + \ket{1}\otimes \ket{R_1}$.  
		\item For any classical witness $s_i,s_j\in\binset^{n_w}$ such that $s_i$ and $s_j$ are identical except for the first bit, one can construct a witness $\ket{s}=\frac{1}{\sqrt{2}}\ket{s_i}+\frac{1}{\sqrt{2}}\ket{s_j}$, $\Pr{V_x \text{ accepts } s}=1$ since $\innerprod{R_0}{R_1}=1$. 
	\end{enumerate}
	We thus conclude that classical witness is not optimal for this $\StoqMA_1$ verifier. 
\end{proof}

\subsection{Proof of \Cref{prop:exact-rev-circuit-NP}: Exact RCD is $\NP$-complete}
\label{subsec:exact-RCD-NP}
\begin{proof}[Proof of \Cref{prop:exact-rev-circuit-NP}]
	Exact RCD is $\NP$-hard, namely $\NP \subseteq \StoqMA\left(1,1/2\right)$, straightforwardly follows from the proof of \Cref{proposition:simulating-Z-meas-by-X}. It suffices to prove that the exact RCD is in $\NP$. 
	By \Cref{thm:RCD-is-StoqMA-complete}, $(2\alpha-1,0)$-RCD is $\StoqMA\left(\alpha,1/2\right)$-complete. Let $\ket{w}$ be an $n_w$-qubit non-negative witness such that $\ket{w} := \sum_{s_i \in \supp{w}} \sqrt{D_w(s_i)} \ket{s_i}$, then
	$\Pr{V_x \text{ accepts } \ket{w}} = \frac{1}{2}+\frac{1}{2}\innerprod{R_0}{R_1}=\frac{1}{2}+\frac{1}{2}\bra{w}\bra{\bar{0}} \bra{\bar{+}} C_0^{\dagger} C_1 \ket{w}\ket{\bar{0}} \ket{\bar{+}}.$
	
	For \textit{yes} instances, note that $\innerprod{R_0}{R_1}=2\alpha-1$ and $\alpha > 1/2$, we have derived
	\begin{equation}
		\label{eq:StoqMA-perfect-soundness-inner-prod}
		\innerprod{R_0}{R_1} = \sum_{s_i,s_j\in\supp{w}} \sum_{r,r'\in\binset^{n_+}} \frac{\sqrt{D_w(s_i)D_w(s_j)}}{2^{n_+}} \bra{s_i}\bra{\bar{0}}\bra{r} C_0^{\dagger} C_1 \ket{s_j}\ket{\bar{0}}\ket{r'} > 0.
	\end{equation}
	Since $\forall s_i,s_j$, $D_w(s_i) D_w(s_j) \geq 0$, there exists $s_i,s_j \in \supp{w}$ and $r,r'\in\binset^{n_+}$ such that
	\begin{equation}
		\label{eq:StoqMA-perfect-soundness-yes}
		\bra{s_i}\bra{\bar{0}}\bra{r} C_0^{\dagger} C_1 \ket{s_j}\ket{\bar{0}}\ket{r'} = 1.
	\end{equation} 
	
	For \textit{no} instances, combining $\innerprod{R_0}{R_1}=0$ and \Cref{eq:StoqMA-perfect-soundness-inner-prod}, it infers
	\begin{equation}
		\label{eq:StoqMA-perfect-soundness-no}
		\forall s_i,s_j\in \supp{w}, \forall r,r'\in\binset^{n_+}, \bra{s_i}\bra{\bar{0}}\bra{r} C_0^{\dagger} C_1 \ket{s_j}\ket{\bar{0}}\ket{r'} = 0.
	\end{equation}
		
	We eventually construct an $\NP$ verifier as follows. 
	The input is the classical description of two reversible circuits $C_0$ and $C_1$, and the witness is two pairs of binary strings $(s_0,r_0)$ and $(s_1,r_1)$. The verifier accepts iff $C_0(s_0,0^{n_0},r_0)$ and $C_1(s_1,0^{n_0},r_1)$ are identical where $C_i (i=0,1)$ takes $(s_i,0^{n_0},r_i)$ as an input and permutes it as the output. 
	Notice these strings $s_0,r_0,s_1,r_1$ exists for \textit{yes} instances owing to \Cref{eq:StoqMA-perfect-soundness-yes}, whereas they do not exist for \textit{no} instances due to \Cref{eq:StoqMA-perfect-soundness-no}, which achieves the proof. 
\end{proof}

\subsection{$\StoqMA$ without any ancillary random bit is in $\NP$}
\label{subsec:StoqMA-wo-randomized-ancilla}

\begin{proposition}
\label{prop:StoqMA-wo-plus-in-NP}
	$\StoqMA$ without any ancillary random bit is $\NP$-complete. 
\end{proposition}

\begin{proof}
	It suffices to show that $\StoqMA$ without any ancillary random bit (viz. ancillary qubits which is initially $\ket{+}$) is in $\NP$. 
	As a straightforward corollary of \Cref{thm:RCD-is-StoqMA-complete}, distinguishing reversible circuits without $\ket{+}$ ancillary qubit is complete for $\StoqMA$ without $\ket{+}$ ancillary qubit, which is essentially $\NP$ according to \Cref{sec:prelim-classes}. 
	
	Consider reversible circuits $C_0$ and $C_1$ act on $n_w+n_0$ qubits where $n_0$ is the number of $\ket{0}$ ancillary qubits, we observe that if $C_0$ and $C_1$ are not distinguishable with respect to any classical witness, then $\exists s\in\binset^{n_w}, \bra{s}\bra{\bar{0}}C_0^{\dagger}C_1 \ket{s}\ket{\bar{0}}=1$ since reversible circuits $C_0$ and $C_1$ are bijections. 	Otherwise, it is evident that $\forall w, \bra{w}\bra{\bar{0}}C_0^{\dagger}C_1 \ket{w}\ket{\bar{0}}=0$ provided $C_0$ and $C_1$ are distinguishable with respect to any witness. 
	It is thus sufficient to only consider classical witnesses for distinguishing $C_0$ and $C_1$, namely, classical witness is optimal. 
	
	Now we provide an $\NP$ verifier. The input is the classical description of two reversible circuits $C_0$ and $C_1$, and the witness is a $n_w$-bit string $s$. The verifier accepts iff $C_0(s,0^{n_0})$ is identical to $C_1(s,0^{n_0})$. Note by inspection, the analysis is completed by above showing classical witness is optimal, which finishes the proof. 
\end{proof}

By analogous reasoning, we provide an alternating proof of \cite{Jor14} with respect to the variant of RCD defined in \Cref{remark:RCD-coStoqMA}. 
\begin{proposition}
	Equivalence	check of reversible circuits without any ancillary random bit is $\coNP$-complete. 
\end{proposition}

\begin{proof}
	Consider reversible circuits $C_0,C_1$ act on $n_w+n_0$ qubits, we observe that if $C_0$ and $C_1$ are not exactly equivalent, then $\exists s\in\binset^{n_w}, \bra{s}\bra{\bar{0}}C_0^{\dagger}C_1 \ket{s}\ket{\bar{0}}=0$
	since reversible circuits $C_0$ and $C_1$ are essentially bijections. 
	Otherwise, it is evident that $\forall w, \bra{w}\bra{\bar{0}}C_0^{\dagger}C_1 \ket{w}\ket{\bar{0}}=1$ provided $C_0$ and $C_1$ are exactly equivalent. 
	Therefore, classical witness is optimal, and the remained proof follows from the proof of \Cref{prop:StoqMA-wo-plus-in-NP}. 
\end{proof}

\section{$\SetCSP_{0,1/\poly}$ is $\StoqMA_1$-complete}
\label{sec:SetCSP}

We start from the definition of $\SetCSP$ with frustration: 

\begin{definition}[$k\text{-}\SetCSP_{\epsilon_1,\epsilon_2}$, adapted from Section 4.1 in  \cite{AG20}]
\label{def:SetCSP}
	Given a sequence of $k$-local set-constraints $C=(C_1,\cdots,C_m)$ on $\binset^n$, where $k$ is a constant, $n$ is the number of variables, and $m$ is a polynomial of $n$. A set-constraint $C_i$ acts on $k$ distinct elements of $[n]$, and it consists of a collection $Y(C_i) = \{Y_1^{(i)},\cdots,Y_{l_i}^{(i)}\}$of disjoint subsets $Y_j^{(i)} \subseteq \binset^k$. Promise that one of the following holds, decide whether
	\begin{itemize}
		\item {\bf Yes}: There exists a subset $S \subseteq \binset^n$ s.t. $\setunsat(C,S) \leq \epsilon_1(n)$;
		\item {\bf No}: For any subset $S \subseteq \binset^n$, $\setunsat(C,S) \geq \epsilon_2(n)$, 
	\end{itemize}
	where $\epsilon_1$ and $\epsilon_2$ are efficiently computable function and $\epsilon_2-\epsilon_1 \geq 1/\poly(n)$. 
\end{definition}

Now we briefly define a $\SetCSP$ instance $C$'s frustration. We leave the formal definition in \Cref{proposition:SetCSP-frustration}. 
The frustration of a set-constraint $C$ regarding a subset $S$ is 
$\setunsat(C,S) = \frac{1}{m} \sum_{i=1}^m \setunsat(C_i,S) = \frac{1}{m} \sum_{i=1}^m \left(\frac{|B_i(S)|}{|S|} + \frac{|L_{i}(S)|}{|S|}\right), $
where $B_{i}(S)$ is the set of bad strings of $C_i$, namely $\forall s \in B_{i}(S)$, $s|_{\supp{C_i}} \notin \cup_{j=1}^{l_i} Y_j^{(i)}$; And $L_i(S)$ is the set of longing strings of the subset $S$ regarding $C_i$. 

We will prove \Cref{SetCSP-wo-frustration-is-StoqMA1-complete} in the remainder of this section. 

\begin{theorem}
	\label{SetCSP-wo-frustration-is-StoqMA1-complete}
	$\SetCSP_{\negl,1/\poly}$ is $\StoqMA_{1-\negl}$-complete. 
\end{theorem}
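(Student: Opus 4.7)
The plan is to establish both directions separately: containment $\SetCSP_{0,1/\poly} \in \StoqMA_1$ and hardness $\StoqMA_1 \subseteq \SetCSP_{0,1/\poly}$.

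For containment, I will invoke \Cref{lemma:SetCSP-in-StoqMA} directly. Given a $\SetCSP_{0,1/\poly}$ instance $C=(C_1,\dots,C_m)$, that lemma produces a $\StoqMA$ verifier whose acceptance probability on a subset-state witness $\ket{S}$ is determined linearly by $\setunsat(C,S)$. For the \textit{yes} case of $\SetCSP_{0,1/\poly}$ there exists $S$ with $\setunsat(C,S)=0$, which (by inspection of that lemma's construction) pushes the acceptance probability of $\ket{S}$ up to exactly $1$, so completeness is perfect. For the \textit{no} case, every $S$ has $\setunsat(C,S) \geq 1/\poly(n)$, and since the non-negative optimal witness of a $\StoqMA$ verifier can be taken to be a subset state (by Perron--Frobenius and the stoquastic structure), the soundness bound becomes $1 - \Omega(1/\poly(n))$. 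Together these place the problem in $\StoqMA_1$ with an inverse-polynomial gap.

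For hardness I will use a Kitaev-style circuit-to-Hamiltonian reduction specialized to the stoquastic setting. Given a $\StoqMA_1$ verifier $V_x = U_T \cdots U_1$ built from Toffoli, CNOT and $X$ gates plus a single final Hadamard-basis measurement, construct the standard history-state Hamiltonian $H = H_{\mathrm{in}} + H_{\mathrm{prop}} + H_{\mathrm{out}} + H_{\mathrm{clock}}$ on the witness register together with $\ket{0}$/$\ket{+}$ ancilla registers and a unary clock. Because every $U_t$ is a permutation matrix in the computational basis and the final measurement projector $\ket{+}\!\bra{+}$ has only non-positive off-diagonal entries after the usual sign flip, each local term of $H$ is stoquastic and, crucially, each term is a \emph{projector} whose kernel is a union of disjoint computational-basis subsets. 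I can then read off a $k$-local set-constraint $C_i$ for each local term: the allowed collections $Y_1^{(i)},\dots,Y_{l_i}^{(i)}$ correspond precisely to the orbits inside the kernel of that projector under the ``+''-type equivalence induced by the Hadamard ancillas, exactly matching \Cref{def:SetCSP}. Perfect completeness of $V_x$ yields a history-state subset $S$ with $\setunsat(C,S)=0$, while the inverse-polynomial soundness gap of $\StoqMA_1$ combined with the standard Kitaev spectral gap analysis translates to $\setunsat(C,S) \geq 1/\poly(n)$ for all $S$ in the \textit{no} case.

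The main obstacle will be the hardness direction, specifically verifying that the history-state construction restricted to stoquastic verifiers produces \emph{exact} set-constraints in the sense of \Cref{def:SetCSP} (rather than weighted stoquastic projectors) and that the frustration function $\setunsat$ tracks the Hamiltonian's ground energy tightly enough to preserve a $1/\poly$ gap. In particular, the $H_{\mathrm{out}}$ term involving the $\ket{+}\!\bra{+}$ projector and the ancilla-initialization terms involving $\ket{+}$ need to be rewritten as set-constraints whose ``bad'' and ``longing'' contributions to $\setunsat$ mirror the quadratic-form value of the corresponding projector on a non-negative witness; this is where the stoquastic structure is used essentially. Once this translation is established term-by-term, combining it with the quantum Cook--Levin analysis and the non-negativity of optimal witnesses completes the $\StoqMA_1$-hardness argument.
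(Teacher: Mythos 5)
Your proposal matches the paper's proof in both directions: containment is obtained by plugging $a=0$ into \Cref{lemma:SetCSP-in-StoqMA} (acceptance probability $1-\setunsat(C,S)/2$ on the subset-state witness, with soundness from the optimality of subset states for that verifier), and hardness is the circuit-to-$\SetCSP$ construction of~\cite{AG20} with the output constraint adapted to the Hadamard-basis measurement --- exactly the single modification ($Y(C^{\rm out})=\{\{00\},\{01\},\{10,11\}\}$) the paper makes in \Cref{proposition:SetCSP-is-StoqMA-negl-hard}. The only cosmetic difference is that you sketch re-deriving the history-state set-constraints from scratch where the paper delegates that verification to~\cite{AG20}, and your appeal to Perron--Frobenius for the soundness of the containment direction should instead lean on the subset-state optimality already asserted in \Cref{lemma:SetCSP-in-StoqMA}.
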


\subsection{$\SetCSP_{\negl,1/\poly}$ is $\StoqMA(1-\negl,1/\poly)$-hard}

To prove \Cref{SetCSP-wo-frustration-is-StoqMA1-complete}, we will first show that $\SetCSP_{0,1/\poly}$ is $\StoqMA_1$-hard. 
\begin{proposition}[$\SetCSP$ is hard for $\StoqMA(1-\negl,1/\poly)$]
\label{proposition:SetCSP-is-StoqMA-negl-hard}
	For any super-polynomial $q(n)$ and polynomial $q_1(n)$, there exists a polynomial $q_2(n)$ such that 
	$\SetCSP_{1/q(n),1/p_2(n)}$ is hard for $\StoqMA\left(1-1/q(n),1/p_1(n)\right)$.  
\end{proposition}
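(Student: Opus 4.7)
The plan is to adapt the standard circuit-to-Hamiltonian reduction (Kitaev's clock construction), as instantiated in the $\SetCSP$ language by \cite{AG20}, to the $\StoqMA$ setting. Given a $\StoqMA(1-1/q,1/p_1)$ verifier $V_x$ with gate sequence $U_T \cdots U_1$ where each $U_t$ is Toffoli/CNOT/X, I would build a $\SetCSP$ instance on the variable set consisting of the witness register, the $\ket{0}$-ancilla register, the $\ket{+}$-ancilla register, and a unary clock register indexed by $0,1,\dots,T$. The set-constraints fall into four families: (i) clock-well-formedness constraints forcing the clock register to be a valid unary timestamp; (ii) initialization constraints forcing the $\ket{0}$-ancilla bits to be $0$ at time $t=0$; (iii) propagation constraints, one per time step $t$, whose allowed collections $Y^{(i)}_j\subseteq\binset^k$ encode the permutation action of $U_t$ (these are natural as sets since $U_t$ is reversible classical); and (iv) an output constraint at time $t=T$ penalizing rejection. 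The crucial new feature compared to the $\MA$-completeness proof in \cite{AG20} is the $\ket{+}$ register: since $\ket{+}^{\otimes n_+}$ is a uniform superposition, I would pair it with the witness register into a single ``extended witness'' register so that the propagation set-constraints simply enforce the reversible permutation, and the output constraint reflects the Hadamard-basis measurement on qubit $1$.

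For the completeness direction, suppose $V_x$ accepts some non-negative witness $\ket{w}$ with probability $\geq 1-1/q(n)$. Define $S$ to be the union over $t$ of the supports of the history states $U_t \cdots U_1 (\ket{w}\ket{0^{n_0}}\ket{+^{n_+}})\otimes \ket{\widehat{t}}$. The propagation, initialization, and clock constraints are set-satisfied by construction, because applying a reversible gate maps a valid history slice to the next valid slice, so every string in $S$ is a ``good'' string for those constraints and no string is ``longing'' across consecutive time slices. Only the output constraint contributes, and its fractional frustration on $S$ is essentially $\Pr[V_x \text{ rejects }\ket{w}]/(T+1)\le 1/(q(n)(T+1))$, which can be absorbed into the $1/q(n)$ completeness budget by rescaling $q$.

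For the soundness direction, I would invoke the standard spectral-gap argument: let $H_{\rm prop}+H_{\rm init}+H_{\rm clock}+H_{\rm out}$ be the Hamiltonian obtained by turning each set-constraint into a projector weighted by the set-unsat contribution (the translation between $\SetCSP$ frustration and a stoquastic local Hamiltonian's minimum eigenvalue is exactly the content of the $\SetCSP$-to-$\StoqMA$ direction in \cite{AG20}). By a Perron--Frobenius argument (\Cref{remark:StoqMA-def}), the minimum-energy vector is non-negative, so its support defines a candidate subset $S$, and conversely any subset $S$ gives a non-negative vector whose energy equals $m\cdot\setunsat(C,S)$. A Kitaev-style projection lemma then lifts the $\StoqMA$ soundness bound $1/p_1(n)$ to an inverse-polynomial frustration lower bound $1/p_2(n)$ with $p_2 = \poly(T,p_1)$.

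The main obstacle will be getting the quantitative accounting right so that the completeness frustration is genuinely negligible (i.e.\ $\le 1/q(n)$) while the soundness frustration remains a fixed inverse polynomial $1/p_2(n)$. Specifically, the $1/(T+1)$ weight spread over clock steps shrinks completeness linearly but shrinks the soundness gap only polynomially under the projection lemma, so I must choose the construction parameters (number of clock steps, multiplicity of the output constraint) carefully. A secondary subtlety is verifying that the ``longing-string'' contribution $|L_i(S)|/|S|$ for the propagation constraints really does vanish on the honest history set, which relies on the fact that each $U_t$ is a permutation so that propagation constraints admit an exact set-theoretic (as opposed to superposition-theoretic) satisfying assignment -- a feature specific to the reversible-circuit model of $\StoqMA$ used throughout this paper.
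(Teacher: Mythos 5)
Your overall plan (clock construction in the $\SetCSP$ language, following \cite{AG20}) matches the paper, which indeed just cites Section 4.4 of \cite{AG20} and modifies one constraint. But you have misplaced where the modification goes, and the way you propose to handle the $\ket{+}$ register introduces a genuine soundness bug. You write that the ``crucial new feature compared to the $\MA$-completeness proof'' is the $\ket{+}$ register and that you would fold it into an ``extended witness'' register with no initialization constraint. First, the $\ket{+}$ register is not new: the $\MA$ verifiers of \Cref{def:MA} already carry $\ket{+}$ ancillas (that is where Arthur's randomness lives), and \cite{AG20} already constrains them at time $t=0$ by a set-constraint of the form $Y=\{\{0,1\}\}$ per random bit, whose longing-string penalty forces the time-$0$ slice of $S$ to be balanced on those bits. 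Second, dropping that constraint hands the randomness to the prover, which strictly enlarges the feasible witness set and can only increase the maximum acceptance probability; soundness is then no longer inherited from the $\StoqMA$ promise. Concretely, by the analysis underlying \Cref{prop:StoqMA-wo-plus-in-NP}, a verifier whose $\ket{+}$ ancillas are prover-controlled has maximum acceptance probability $1$ whenever some pair of basis strings with zeroed $\ket{0}$-ancillas is exchanged by $V_x^{\dagger}X_1V_x$, which can happen for \emph{no} instances; so your reduction would map some \emph{no} instances to frustration-free $\SetCSP$ instances. The actual new ingredient for $\StoqMA$ versus $\MA$ is the \emph{output} constraint: the paper replaces $Y(C^{\rm out})=\{\{00\},\{01\},\{11\}\}$ by $Y(C^{\rm out})=\{\{00\},\{01\},\{10,11\}\}$, the two-element collection $\{10,11\}$ being exactly how a Hadamard-basis (rather than computational-basis) measurement of the output qubit is expressed as a set-constraint via \Cref{proposition:SetCSP-frustration}. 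Your proposal gestures at ``the output constraint reflects the Hadamard-basis measurement'' but never specifies this encoding, which is the one step that actually has to change.

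A secondary gap is in your completeness accounting. You take $S$ to be the union of supports of the history slices and claim the output constraint contributes frustration $\approx \Pr{V_x\text{ rejects }\ket{w}}/(T+1)$. The $\SetCSP$ frustration of $S$ is computed with respect to the \emph{uniform} subset state $\ket{S}$ (\Cref{proposition:SetCSP-frustration}), not with respect to the amplitudes of the history state; for a non-uniform non-negative witness (which, by \Cref{proposition:StoqMA-classical-witness-not-optimal}, is the generic case for $\StoqMA$, unlike $\MA$ where the optimal witness is classical and every history slice is a uniform subset state) a string of tiny amplitude in the support contributes to $|B_i(S)|/|S|$ with full weight. So the identification of set-frustration with rejection probability needs justification, or the witness needs to be replaced by a near-optimal one with controlled support structure; this is precisely the part of Section 4.4 of \cite{AG20} that the paper defers to and that your sketch does not reconstruct.
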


\begin{proof}
	The $\StoqMA\left(1-1/q(n),1/p_1(n)\right)$-hardness proof is straightforwardly analogous to the circuit-to-Hamiltonian construction used in $\MA$-hardness proof of $\SetCSP$ in~\cite{AG20}. 
	The only difference is replacing $Y(C^{\rm out})=\{\{00\},\{01\},\{11\}\}$ by
	$Y(C^{\rm out})=\{\{00\},\{01\},\{10,11\}\}$ in Section 4.4.2, since the final measurement on the $(T+1)$-qubit is on the Hadamard basis instead of the computational basis. 
	The rest of the proof follows from an inspection of Section 4.4 in~\cite{AG20}. 
\end{proof}

Then \Cref{corr:SetCSP-wo-frustration-StoqMA1-hard} is an immediate corollary of \Cref{proposition:SetCSP-is-StoqMA-negl-hard} by substituting $0$ for $1/q(n)$: 
\begin{corollary}
	\label{corr:SetCSP-wo-frustration-StoqMA1-hard}
	$\SetCSP_{0,1/\poly}$ is $\StoqMA_1$-hard. 
\end{corollary}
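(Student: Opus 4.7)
The plan is to observe that \Cref{proposition:SetCSP-is-StoqMA-negl-hard} is really a family of reductions parametrized by $1/q(n)$, and that the underlying circuit-to-$\SetCSP$ construction of~\cite{AG20} behaves well when we send this parameter to $0$. Concretely, I would take the reduction in the proof of \Cref{proposition:SetCSP-is-StoqMA-negl-hard} verbatim, apply it to an arbitrary $\StoqMA_1$ verifier $V_x$ instead of a $\StoqMA(1-1/q(n),1/p_1(n))$ verifier, and argue that the resulting $\SetCSP$ instance automatically lies in $\SetCSP_{0,1/p_2(n)}$ rather than $\SetCSP_{1/q(n),1/p_2(n)}$.

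For the completeness direction, fix a non-negative witness $\ket{w}$ with $\Pr{V_x \text{ accepts } \ket{w}} = 1$ and form the standard history state $\ket{\eta}$ associated with $(V_x,\ket{w})$; the candidate subset is $S := \supp{\eta}$. The set-constraints encoding clock propagation, input initialization, and witness indexing all have zero frustration on $S$ by construction, exactly as in the $\MA$-hardness proof in~\cite{AG20}, since those constraints are untouched by the modification in the proof of \Cref{proposition:SetCSP-is-StoqMA-negl-hard}. The only constraint that needs rechecking is the output constraint with $Y(C^{\rm out})=\{\{00\},\{01\},\{10,11\}\}$, which encodes the Hadamard-basis final measurement; since $V_x$ accepts $\ket{w}$ with probability exactly $1$, the $\ket{-}$ branch of the output qubit carries zero amplitude, so no string of $S$ projects onto the forbidden block $\{10,11\}$ in the output slot. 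Hence $\setunsat(C,S)=0$, and the produced instance is a yes instance of $\SetCSP_{0,1/\poly}$.

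For the soundness direction the analysis is literally the one in \Cref{proposition:SetCSP-is-StoqMA-negl-hard}: the soundness of $V_x$ against arbitrary witnesses still forces $\setunsat(C,S) \geq 1/p_2(n)$ for every subset $S \subseteq \binset^n$, with no change needed. The one step requiring care, and which I view as the only real obstacle, is verifying that perfect completeness yields \emph{exactly} zero frustration on the output constraint rather than merely negligible slack; this ultimately reduces to the elementary fact that a probability-one Hadamard-basis acceptance means the post-circuit state has no amplitude on $\ket{-}_{\rm out}$, which forces the output slot of $S$ to be disjoint from $\{10,11\}$. Granting this inspection of the construction of~\cite{AG20}, the corollary follows immediately.
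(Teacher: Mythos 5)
Your approach is essentially the paper's: \Cref{corr:SetCSP-wo-frustration-StoqMA1-hard} is obtained there in one line by instantiating the parametrized reduction of \Cref{proposition:SetCSP-is-StoqMA-negl-hard} at $1/q(n)=0$, which is exactly what you do when you apply that reduction verbatim to a $\StoqMA_1$ verifier and check that perfect completeness forces zero frustration. One caveat on your key inspection step: $\{10,11\}$ is an \emph{allowed} block of $Y(C^{\rm out})$, not a forbidden one, so the correct reading is not that no string of $S$ projects onto it, but that perfect Hadamard-basis acceptance (zero amplitude on $\ket{-}_{\rm out}$) makes the final-time strings of $S$ populate that block in matched pairs, leaving no bad or longing strings and hence exactly zero frustration.
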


\subsection{$\SetCSP_{a,b}$ is in $\StoqMA(1-a/2,1-b/2)$}
It now remains to show a $\StoqMA_1$ containment of $\SetCSP_{0,1/\poly}$. 
We will complete the proof by mimicking the $\StoqMA$ containment of the stoquastic local Hamiltonian problem in Section 4 in~\cite{BBT06}. 
The starting point is an alternating characterization of the frustration of a set-constraint $C_i$ in a $\SetCSP$ instance $C$. 
The proof of \Cref{proposition:SetCSP-frustration} is deferred in the end of this section. 

\begin{proposition}[Local matrix associated with set-constraint]
\label{proposition:SetCSP-frustration}
	For any $k$-local set-constraint $C_i (1 \leq i \leq m)$, given a subset $S \subseteq \binset^n$, the frustration 
	\[\setunsat(C_i,S) = 1- \sum_{j=1}^{|Y(C_i)|} \sum_{x,y \in Y_j^{(i)}} \tfrac{1}{|Y_j^{(i)}|} \braket{S}{ \left(\ketbra{x}{y} \otimes I_{n-k}\right) }{S}. \]
\end{proposition}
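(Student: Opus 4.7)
The plan is to expand the matrix element $\braket{S}{(\ketbra{x}{y} \otimes I_{n-k})}{S}$ combinatorially and match the result to the definition of $\setunsat(C_i,S)$ via a partition of $S$ induced by the set-constraint $C_i$.

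First I would unpack $\ket{S} = \tfrac{1}{\sqrt{|S|}} \sum_{s\in S} \ket{s}$ and, for convenience, write every $s \in \binset^n$ as a pair $s = (s|_{\supp{C_i}},\, s|_{\overline{\supp{C_i}}})$ with the first $k$ bits lying on the support of $C_i$. A direct computation then gives
\[
\braket{S}{(\ketbra{x}{y}\otimes I_{n-k})}{S}
= \tfrac{1}{|S|}\,\#\bigl\{(s,s')\in S\times S : s|_{\supp{C_i}} = x,\ s'|_{\supp{C_i}} = y,\ s|_{\overline{\supp{C_i}}} = s'|_{\overline{\supp{C_i}}}\bigr\}.
\]
For each outside string $z \in \binset^{n-k}$ and each cluster index $j$, introduce the slice set $S_{z,j} := \{x \in Y_j^{(i)} : xz \in S\}$ (with $xz$ denoting the string obtained by placing $x$ on $\supp{C_i}$ and $z$ on the complement). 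Summing the above identity over $x,y \in Y_j^{(i)}$ and over $j$, with the weight $1/|Y_j^{(i)}|$, collapses to
\[
\sum_{j=1}^{|Y(C_i)|} \frac{1}{|Y_j^{(i)}|} \sum_{x,y\in Y_j^{(i)}} \braket{S}{(\ketbra{x}{y}\otimes I_{n-k})}{S}
= \frac{1}{|S|} \sum_{j,\,z} \frac{|S_{z,j}|^2}{|Y_j^{(i)}|}.
\]

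Next I would partition $S$ according to $C_i$ as $S = B_i(S) \sqcup G_i(S)$, where $G_i(S)$ consists of the good strings (those $s$ with $s|_{\supp{C_i}} \in \bigcup_j Y_j^{(i)}$); since the $Y_j^{(i)}$ are disjoint, each good $s$ belongs to exactly one slice $S_{z,j}$, so $|G_i(S)| = \sum_{j,z} |S_{z,j}|$. Using $|S| = |B_i(S)| + \sum_{j,z}|S_{z,j}|$ one obtains
\[
1 - \frac{1}{|S|}\sum_{j,z}\frac{|S_{z,j}|^2}{|Y_j^{(i)}|}
= \frac{|B_i(S)|}{|S|} + \frac{1}{|S|}\sum_{j,z} |S_{z,j}|\Bigl(1-\tfrac{|S_{z,j}|}{|Y_j^{(i)}|}\Bigr).
\]
The final step is to recognize the second term on the right-hand side as $|L_i(S)|/|S|$: a longing string is precisely a good string $s$ lying in a slice $S_{z,j}$ that fails to cover the whole cluster $Y_j^{(i)}$, and the natural fractional count (inherited from the $\SetCSP$ frustration definition in~\cite{AG20}) weighs each such $s$ by $1-|S_{z,j}|/|Y_j^{(i)}|$, which equals $0$ exactly when $|S_{z,j}|=|Y_j^{(i)}|$ (fully satisfied slice).

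The combinatorial computation of the matrix element and the slice decomposition are both routine; the main obstacle is the last bookkeeping step, namely verifying that the fractional-weight formula for $|L_i(S)|$ that comes out of the algebra agrees with the definition of longing strings used implicitly in the frustration $\setunsat(C_i,S)$. I would handle this by adopting the precise definition from~\cite{AG20} up front and presenting the identification slice-by-slice, so that the cases $|S_{z,j}| = 0$, $0<|S_{z,j}|<|Y_j^{(i)}|$, and $|S_{z,j}|=|Y_j^{(i)}|$ correspond respectively to bad, longing, and satisfied strings.
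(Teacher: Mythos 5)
Your proposal is correct and is essentially the paper's own argument: the paper's proof consists of the two identities in \Cref{eq:SetCSP-contribution} --- the bad-string projector identity and $\sum_j |L_j^{(i)}(S)|/|S| = |G_i(S)|/|S| - \tfrac{1}{|S|}\sum_{j,z}|S_{z,j}|^2/|Y_j^{(i)}|$ --- summed via the decomposition $\binset^{k} = B_i \sqcup G_i$, which is exactly your rearrangement $|S| = |B_i(S)| + \sum_{j,z}|S_{z,j}|$ following your (correct) slice computation of the quadratic form. The one step you rightly flag as the bookkeeping obstacle --- identifying the residual term $\tfrac{1}{|S|}\sum_{j,z}|S_{z,j}|\bigl(1-|S_{z,j}|/|Y_j^{(i)}|\bigr)$ with the longing contribution to the frustration --- is precisely what the paper's second identity asserts as a ``direct calculation'' without further justification, so spelling out that identification against the definition in~\cite{AG20} is the right (and only) remaining thing to do.
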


Now we state the $\StoqMA$ containment of $\SetCSP$, as \Cref{lemma:SetCSP-in-StoqMA}. 
\begin{lemma}
\label{lemma:SetCSP-in-StoqMA}
	For any $0 \leq a < b \leq 1$, $\SetCSP_{a,b} \in \StoqMA\left(1-a/2,1-b/2\right)$. 
	Moreover, for a subset $S \subseteq \binset^n$ such that $S = \argmin_{S'}\setunsat(C,S')$, the subset state $\ket{S}$ is an optimal witness of the resulting $\StoqMA$ verifier. 
\end{lemma}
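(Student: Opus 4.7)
The plan adapts the $\StoqMA$ containment of stoquastic local Hamiltonian problems from~\cite{BBT06} to the $\SetCSP$ setting, with \Cref{proposition:SetCSP-frustration} as the dictionary between combinatorial frustration and Hamiltonian energy. Define
\[ H := \tfrac{1}{m}\sum_{i=1}^m (I-P_i), \qquad P_i := \sum_{j=1}^{|Y(C_i)|} \tfrac{1}{|Y_j^{(i)}|}\sum_{x,y\in Y_j^{(i)}} \ketbra{x}{y}\otimes I_{n-k}, \]
where $P_i$ is the local operator supplied by \Cref{proposition:SetCSP-frustration} and acts on the $k$ qubits in $\supp{C_i}$. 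Every entry of $P_i$ in the computational basis is non-negative, so $H$ is stoquastic, and \Cref{proposition:SetCSP-frustration} yields $\braket{S}{H}{S}=\setunsat(C,S)$ for every subset $S\subseteq\binset^n$.

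Construct the $\StoqMA$ verifier $V_x$ as follows: first prepare the uniform superposition $\tfrac{1}{\sqrt{m}}\sum_i\ket{i}$ over constraint indices using $O(\log m)$ fresh $\ket{+}$ ancillas, then run a controlled-$P_i$ measurement gadget on the $k$-qubit window $\supp{C_i}$ of the witness. Since $P_i$ is the projector onto the span of the mutually orthogonal subset states $\ket{Y_j^{(i)}}:=|Y_j^{(i)}|^{-1/2}\sum_{x\in Y_j^{(i)}}\ket{x}$, and each $Y_j^{(i)}\subseteq\binset^k$ has constant size hard-coded into the circuit, this gadget is realizable with Toffoli/CNOT/X gates, a few $\ket{0}$ ancillas, and one extra $\ket{+}$ qubit that becomes the Hadamard-basis-measured output; it is exactly the projector-measurement primitive from Section~4 of~\cite{BBT06}. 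A direct calculation, parallel to the $\StoqMA$-hardness half of \Cref{thm:RCD-is-StoqMA-complete}, yields $\Pacc(\ket{\psi}) = 1-\tfrac{1}{2}\bra{\psi}H\ket{\psi}$; completeness is then immediate, since any $S$ with $\setunsat(C,S)\le a$ provides $\Pacc(\ket{S}) \ge 1-a/2$.

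The main obstacle is soundness together with the ``moreover'' claim that $\ket{S_{\text{opt}}}$ is a global optimizer: one has to upgrade the subset-only promise $\min_S\setunsat(C,S)\ge b$ to the spectral bound $\lambda_{\min}(H)\ge b$. The plan is to first invoke Perron--Frobenius via \Cref{remark:StoqMA-def} to restrict to non-negative $\ket{\psi}=\sum_s a_s\ket{s}$, then to expand $\bra{\psi}(I-P_i)\ket{\psi}$ using the completion-of-squares identity
\[\sum_{x\in Y_j^{(i)}}a_{xs'}^2 - \tfrac{1}{|Y_j^{(i)}|}\Bigl(\sum_{x\in Y_j^{(i)}}a_{xs'}\Bigr)^2 = \tfrac{1}{2|Y_j^{(i)}|}\sum_{x,y\in Y_j^{(i)}}(a_{xs'}-a_{ys'})^2,\]
so that $\bra{\psi}H\ket{\psi}$ decomposes into a non-negative diagonal ``bad-string'' piece plus clique-Laplacian contributions over each fiber $Y_j^{(i)}\times\{s'\}$. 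Finally, one argues that some subset state over a well-chosen super-level set of $\{a_s^2\}$ achieves energy at most $\bra{\psi}H\ket{\psi}$. Executing this rounding simultaneously across all $m$ constraints and both the bad-string and longing components of $\setunsat$ is the delicate step: it is precisely here that the disjointness of $\{Y_j^{(i)}\}_j$ and the passive $I_{n-k}$ factors baked into the $\SetCSP$ definition must be used essentially, and it is the step most likely to require the full formal definition of $L_i(S)$ deferred in \Cref{proposition:SetCSP-frustration}.
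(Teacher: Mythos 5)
Your construction and completeness analysis coincide with the paper's. The paper realizes each $P_i$ by splitting $\sum_{x,y\in Y_j^{(i)}}\ketbra{x}{y}$ into diagonal pieces $\ketbra{x}{x}$ and symmetrized off-diagonal pieces $\tfrac12\left(\ketbra{x}{y}+\ketbra{y}{x}\right)$, conjugating each by a short reversible circuit to $\ketbra{0}{0}^{\otimes k}$ or $X\otimes\ketbra{0}{0}^{\otimes k-1}$, converting the former into a Hadamard-basis measurement via \Cref{proposition:simulating-Z-meas-by-X}, and then taking a convex combination of the resulting verifiers with controlled gates on fresh $\ket{+}$ ancillas --- which is exactly your ``uniform superposition over constraint indices plus controlled-$P_i$ gadget'', made explicit. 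Both routes yield $\Pr{V_x \text{ accepts } \ket{\psi}}=1-\tfrac12\bra{\psi}H\ket{\psi}$ and hence the completeness bound.

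The genuine gap is the one you flagged yourself: soundness and the ``moreover'' clause require deducing $\lambda_{\min}(H)\geq b$ from the promise that every \emph{subset} has frustration at least $b$, and your rounding plan is not executed. Be aware that a lossless rounding of the kind you hope for does not exist: already for $n=2$ with $Y(C_1)=\{\{00,11\}\}$ and $Y(C_2)=\{\{00\},\{01\}\}$ one computes $\lambda_{\min}(H)=(2-\sqrt2)/4\approx 0.146$ while every subset state has energy at least $1/4$, so the optimal witness is a non-uniform non-negative state and the energy of the best subset state strictly overshoots the ground energy. Comparing against subset states alone therefore cannot certify the soundness parameter $1-b/2$ for frustrated instances. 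The paper's own proof does not close this step either: it evaluates the acceptance probability only on subset states (\Cref{eq:SetCSP-PaccX}) and defers the general-witness bound to~\cite{BBT06}. In the one place the lemma is invoked, \Cref{prop:StoqMA1-in-eStoqMA1}, the yes case is frustration-free and the no case can absorb a polynomial loss, so your clique-Laplacian decomposition plus level-set rounding (as in~\cite{BT10,AG20}) does suffice there; but for general $(a,b)$ your plan --- and the lemma's exact parameters --- stall precisely at the step you predicted.
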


The proof of \Cref{lemma:SetCSP-in-StoqMA} tightly follows from Section 4 in~\cite{BBT06}.
We here provide a somewhat simplified proof using the $\SetCSP$ language by avoiding unnecessary normalization. 

\begin{proof}[Proof of \Cref{lemma:SetCSP-in-StoqMA}. ]
	Given a $\SetCSP_{a,b}$ instance $C=(C_1,\cdots,C_m)$. For each set-constraint $C_i (1 \leq i \leq m)$, we first construct a local Hermitian matrix $M_i$ preserves the frustration, then construct a family of $\StoqMA$ verifiers for such a $M_i$. 
	For any set-constraint $C_i$, we obtain a $k$-local matrix $M_i$ by \Cref{proposition:SetCSP-frustration} such that for any subset $S\subseteq \binset^n$: 
	\begin{equation}
		\label{eq:SetCSP-in-eStoqMA-frustration}
		\setunsat(C_i,S)=1-\braket{S}{M_i \otimes I_{n-k}}{S} \text{ where } M_i=\sum_{j=1}^{|Y(C_i)|} \sum_{x,y \in Y_j^{(i)}} \frac{1}{|Y_j^{(i)}|} \ketbra{x}{y}. 
	\end{equation}
Moreover, for a set $Y_j^{(i)}$ of strings associated with the set-constraint $C_i$, we further have
	\begin{equation}
	\label{eq:SetCSP-in-eStoqMA-decomposition}
	\begin{aligned}
		\sum_{x,y \in Y_j^{(i)}} \ketbra{x}{y} &= \sum_{x\in Y_j^{(i)}} \ketbra{x}{x} + \frac{1}{2} \sum_{x\neq y\in Y_j^{(i)}} \left( \ketbra{x}{y} + \ketbra{y}{x} \right)\\
		&= \sum_{x\in Y_j^{(i)}} V_x \ketbra{0}{0}^{\otimes k} V_x^{\dagger} + \frac{1}{2} \sum_{x\neq y \in Y_j^{(i)}} V_{x,y} \left( X\otimes \ketbra{0}{0}^{\otimes k-1} \right) V_{x,y}^{\dagger},
	\end{aligned}
	\end{equation}
where $V_x$ is a depth-$1$ reversible circuit with $X$ such that $\forall x, \ket{x}=V_x\ket{0^k}$, and $V_{x,y}$ is a $O(k)$-depth reversible circuit with CNOT and X such that $\forall x,y, V_{x,y} \ketbra{0^k}{10^{k-1}} V_{x,y}^{\dagger} = \ketbra{x}{y}$. 

Notice that the resulting local observables in \Cref{eq:SetCSP-in-eStoqMA-decomposition} are either $\ketbra{0}{0}^{\otimes k}$ (i.e. a single-qubit computational-basis measurement) or $X\otimes \ketbra{0}{0}^{\otimes k-1}$ (i.e. a single-qubit Hadamard-basis measurement). 
To construct a $\StoqMA$ verifier, we only allow local observables in form $X\otimes I^{\otimes O(k)}$. 
Namely, we are supposed to simulate a computational-basis measurement by an ancillary qubit and a Hadamard-basis measurement, which is achieved by \Cref{proposition:simulating-Z-meas-by-X}. 
\begin{proposition}[Adapted from Lemma 3 in~\cite{BBT06}]
	\label{proposition:simulating-Z-meas-by-X}
	\begin{enumerate}[(1)]
		\item For any integer $k$, there exists an $O(k)$-depth reversible circuit $W$ using $k$ $\ket{0}$ ancillary qubits and a $\ket{+}$ ancillary qubits s.t.  
		\[\forall \ket{\psi}, \braket{\psi}{\ketbra{0}{0}^{\otimes k}}{\psi} = \bra{\psi}\bra{0}^{\otimes k}\bra{+} W^{\dagger} \left(X \otimes I^{\otimes 2k}\right) W \ket{\psi} \ket{0}^{\otimes k} \ket{+}.\]
		\item For any integer $k$, there exists an $O(k)$-depth circuit $V$ using $k-1$ $\ket{0}$ ancillary qubits s.t. 
		\[\forall \ket{\psi}, \braket{\psi}{X\otimes \ketbra{0}{0}^{\otimes k-1}}{\psi} = \bra{\psi} \bra{0}^{\otimes k-1} W^{\dagger} \left( X\otimes I^{\otimes 2k-2} \right) W \ket{\psi} \ket{0}^{\otimes k-1}.\]
	\end{enumerate}
\end{proposition}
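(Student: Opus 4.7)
The plan is to construct $W$ and $V$ by the same design principle: entangle the qubit carrying the $X$ observable with the relevant ``condition'' registers via Toffoli gates, and exploit the identity
\[
\bra{\Phi} X_1 \ket{\Phi} \;=\; 2\,\mathrm{Re}\langle A | B\rangle \qquad \text{for } \ket{\Phi} = \tfrac{1}{\sqrt{2}}\bigl(\ket{0}_1\ket{A} + \ket{1}_1\ket{B}\bigr),
\]
choosing $\ket{A}$ and $\ket{B}$ so that their overlap isolates exactly the coefficients of $\ket{\psi}$ picked out by the target projector.

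For part (1), I would group the $2k+1$ qubits as $(C,A,B)$, where $C$ is the $\ket{+}$ ancilla, $A$ holds $\ket{\psi}=\sum_z \alpha_z \ket{z}$, and $B$ holds the $k$ ancillary $\ket{0}$s; a permutation placing $C$ in the designated first slot can be absorbed into $W$. Then define $W$ to be the cascade of $k$ Toffoli gates $\Toffoli{C}{A_i}{B_i}$ for $i=1,\dots,k$, giving depth $O(k)$. A one-line expansion yields
\[
W \ket{\psi}_A \ket{0^k}_B \ket{+}_C \;=\; \tfrac{1}{\sqrt{2}}\sum_z \alpha_z \bigl( \ket{z}_A \ket{0^k}_B \ket{0}_C + \ket{z}_A \ket{z}_B \ket{1}_C \bigr),
\]
and the inner product between the two $C$-branches collapses to $|\alpha_{0^k}|^2/2$, because $\innerprod{z,0^k}{z',z'} = \delta_{z,z'}\delta_{z',0^k}$. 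Hence $\bra{\Phi} X_C \ket{\Phi} = |\alpha_{0^k}|^2 = \bra{\psi}\ketbra{0}{0}^{\otimes k}\ket{\psi}$.

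For part (2), the first qubit of $\ket{\psi}$ already carries the $X$ observable, so no extra $\ket{+}$ ancilla is needed; the $k-1$ ancillary $\ket{0}$s form the copy register $B$. Writing $\ket{\psi}=\sum_{a,z}\alpha_{a,z}\ket{a}_1\ket{z}_{2\dots k}$ and applying the Toffoli gates $\Toffoli{1}{i}{B_{i-1}}$ for $i=2,\dots,k$ produces a controlled copy of $\ket{z}$ into $B$ conditioned on qubit $1$ being $\ket{1}$. The same overlap computation then gives $\bra{\Phi} X_1 \ket{\Phi} = 2\,\mathrm{Re}\bigl(\overline{\alpha_{0,0^{k-1}}}\alpha_{1,0^{k-1}}\bigr) = \bra{\psi}(X\otimes \ketbra{0}{0}^{\otimes k-1})\ket{\psi}$, and the circuit again has depth $O(k)$.

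The main conceptual obstacle is that the classical reversible gate set $\{$Toffoli, CNOT, X$\}$ cannot send $\ket{+}$ to $\ket{-}$, so the naive deferred-measurement gadget based on a phase-kickback controlled-$Z$ is unavailable. The identity $\bra{\Phi} X_1 \ket{\Phi} = 2\,\mathrm{Re}\langle A|B\rangle$ is the crucial workaround: it recasts the would-be phase flip as an amplitude-level inner product that a Toffoli-based controlled copy can realize. Once this principle is in place, both constructions and their $O(k)$-depth bounds follow by direct expansion of the Toffoli action on basis states.
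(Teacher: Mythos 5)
Your construction is correct and is essentially the standard controlled-copy gadget from Lemma 3 / Appendix A.4 of \cite{BBT06}, which is precisely what the paper points to for this proposition (the paper gives no self-contained proof of its own). One notational slip: with the $1/\sqrt{2}$ prefactor and normalized branches $\ket{A},\ket{B}$, the identity reads $\bra{\Phi}X_1\ket{\Phi}=\mathrm{Re}\innerprod{A}{B}$, not $2\,\mathrm{Re}\innerprod{A}{B}$ --- but your subsequent computations track the normalizations consistently, so the final answers $|\alpha_{0^k}|^2$ and $2\,\mathrm{Re}\bigl(\overline{\alpha_{0,0^{k-1}}}\,\alpha_{1,0^{k-1}}\bigr)$ are right.
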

It is worthwhile to mention that the gadgets used in the proof (see Section A.4 in \cite{BBT06}) further provide proof of $\MA \subseteq \StoqMA$ that preserves both completeness and soundness parameters. 

\vspace{1.5em}
	Let $\Idx{C_i}$ be the set of indices, and let $\alpha_{(j,x,y)}$ be the weight of an index $(j,x,y)$, 
	{\small
	\[\begin{aligned}
	\Idx{C_i} &:=\left\{(j,x,y): 1 \leq j \leq |Y_i(C)|, (x,y)\in \binom{Y_j^{(i)}}{2} \sqcup \left\{(x,x): x\in Y_j^{(i)}\right\}\right\};\\
	\alpha_{(j,x,y)} &:= \frac{1}{(1+\mathbb{I}(x\neq y))m|Y_j^{(i)}|}, \text{ where the indicator } \mathbb{I}(x\neq y))=1 \Leftrightarrow x\neq y.
	\end{aligned}\]
	}
	Plugging \Cref{proposition:simulating-Z-meas-by-X} and \Cref{eq:SetCSP-in-eStoqMA-decomposition} into \Cref{eq:SetCSP-in-eStoqMA-frustration}, we have derived 
	\begin{equation}
	\label{eq:set-unset-Ci}
	1-\setunsat(C_i,S) = \sum_{l \in \Idx{C_i}} \alpha_l \braket{S}{ \left(\bra{0}^{\otimes k}\bra{+} U_l^{\dagger} \left(X\otimes I^{\otimes 2k}\right) U_l \ket{0}^{\otimes k} \ket{+}\right) \otimes I_{n-k}}{S}.
	\end{equation}
	
    For a $\SetCSP$ instance $C=(C_1,\cdots,C_m)$, by \Cref{eq:set-unset-Ci}, by substituting  $\ketbra{+}{+}=\frac{1}{2}(X+I)$ into \Cref{eq:set-unset-Ci}, we thus arrive at a conclusion that 
	\begin{equation}
	\label{eq:SetCSP-PaccX}
	\begin{aligned}
		\Pr{V_x \text{ accepts } \ket{S}} 
		&= \frac{1}{m} \sum_{i=1}^m \left( 1-\frac{1}{2} \cdot \setunsat(C_i,S) \right) 
		&= 1 - \frac{1}{2} \cdot \setunsat(C,S).
	\end{aligned}
	\end{equation}
	
	Note that the set of $\StoqMA$ verifiers $V_x$ with the same number of input qubits and witness qubits is linear, namely a convex combination of $l$ $\StoqMA$ verifiers $(V_1,p_1),\cdots,(V_l,p_l)$ can be implemented by additional $\ket{+}$ ancillary qubits and controlled $V_i (1 \leq i \leq l)$. 
	Therefore, by \Cref{eq:SetCSP-PaccX}, we conclude that $\forall a,b$, $\SetCSP_{a,b}$ is in $\StoqMA\left(1-a/2,1-b/2\right)$. 
\end{proof}

Finally, we achieve proof of \Cref{proposition:SetCSP-frustration}: 
\begin{proof}[Proof of \Cref{proposition:SetCSP-frustration}. ]
	Given a $k$-local set-constraint $C_i$, 
	the set of good strings $G_i = \sqcup_{1 \leq j \leq |Y(C_i)|} Y_j^{(i)}$, 
	and the set of bad strings $B_i = \binset^{|J(C_i)|} \setminus G_i$. 
	Also, for any subset $S \binset^n$, the set of bad strings in $S$ is $B_i(S)$. 
	By direction calculation, notice that
	{\small
	\begin{equation}
	\label{eq:SetCSP-contribution}
	\begin{aligned}
	     \frac{|B_i(S)|}{|S|} &= \braket{S}{ \left( \sum_{x\in B_i} \ketbra{x}{x}\otimes I_{n-k} \right) }{S}\\
	     \sum_{j=1}^{|Y(C_i)|} \frac{|L_j^{(i)}(S)|}{|S|} &= \braket{S}{ \left( \sum_{x\in G_i} \ketbra{x}{x}\otimes I_{n-k} \right) }{S} - \sum_{j=1}^{|Y(C_i)|} \sum_{x,y\in Y_j^{(i)}} \frac{1}{|Y_j^{(i)}|} \braket{S}{ \left(\ketbra{x}{y} \otimes I_{n-k}\right) }{S}. 
	\end{aligned}
	\end{equation}
	}
	Plugging \Cref{eq:SetCSP-contribution} and $\binset^{|J(C_i)|} = B_i \sqcup G_i$ into $\setunsat(C_i,S) 
		= \frac{|B_i(S)|}{|S|} + \sum_{j=1}^{|Y(C_i)|} \frac{|L_j^{(i)}(S)|}{|S|}$, we then finish the proof. 

\end{proof}

\end{document}